%% file: main.tex
\documentclass[sigconf]{acmart}

\usepackage[color,secparam=\kappa]{crypto}

\usepackage{tikz}
\usetikzlibrary{arrows.meta, positioning}
\usetikzlibrary{decorations.pathreplacing}
\usetikzlibrary{patterns}
\usepackage{framed}
\usepackage{tabularx}
\usepackage{multirow}
\usepackage{amsthm}
\usepackage{amsmath}
\usepackage{enumitem}
\usepackage{amsfonts}
\usepackage{graphicx}
\usepackage{stmaryrd}
\usepackage{hyperref}
\usepackage{algorithm2e}
\usepackage{subcaption}
\usepackage{diagbox}
\usepackage[table]{xcolor}
\usepackage{xcolor}
\usepackage{tcolorbox}
\usepackage{marvosym}

\setlist[enumerate]{leftmargin=2.5em}

\usetikzlibrary{
    positioning,
    arrows.meta,
    calc,
    fit
}

\newtheorem{definition}{Definition}[section]
\newtheorem{lemma}{Lemma}[section]

\newtheorem{theorem}{Theorem}[section]
\AtBeginDocument{%
  }

\copyrightyear{2026}
\acmYear{2026}
\setcopyright{cc}
\setcctype{by}
\acmConference[CCS '26]{Proceedings of the 2026 ACM SIGSAC Conference on Computer and Communications Security}{November 15--19, 2026}{The Hague, Netherlands}
\acmBooktitle{Proceedings of the 2026 ACM SIGSAC Conference on Computer and Communications Security (CCS '26), November 15--19, 2026, The Hague, Netherlands}
\acmDOI{10.1145/3830454.3846569}
\acmISBN{979-8-4007-2871-6/2026/11}

\begin{document}

\title{Efficient Fuzzy PSI under One-Sided Assumptions}

\author{Xinpeng Yang}
\affiliation{%
  \institution{Nanyang Technological University}
  \country{Singapore}
}
\email{XINPENG004@e.ntu.edu.sg}

\author{Meng Hao}
\authornote{Meng Hao is the corresponding author}
\affiliation{%
  \institution{Singapore Management University}
  \country{Singapore}
}
\email{menghao303@gmail.com}

\author{Yanxue Jia}
\affiliation{%
  \institution{Illinois Institute of Technology}
  \country{Chicago, Illinois, USA}
}
\email{yjia12@illinoistech.edu}

\author{Chenkai Weng}
\affiliation{%
  \institution{Arizona State University}
  \country{Tempe, Arizona, USA}
}
\email{chenkai.weng@asu.edu}

\author{Yonggang Wen}
\affiliation{%
  \institution{Nanyang Technological University}
  \country{Singapore}
}
\email{ygwen@ntu.edu.sg}

\author{Tianwei Zhang}
\affiliation{%
  \institution{Nanyang Technological University}
  \country{Singapore}
}
\email{tianwei.zhang@ntu.edu.sg}

\renewcommand{\shortauthors}{Xinpeng Yang et al.}

\begin{abstract}
Fuzzy private set intersection (PSI) enables two parties to identify approximately matching elements between their input sets, where two elements are considered a match if their distance is at most a threshold $\delta$ under a given metric. Although substantial progress has been made, existing constructions for \textit{general Minkowski distances} either rely on strong two-sided geometric separation assumptions or incur substantial overhead under one-sided assumptions.

In this work, we present the first concretely efficient fuzzy PSI protocols for general $L_{p\in[1,\infty]}$ distances under \textit{one-sided assumptions}, relying solely on lightweight symmetric-key primitives. Our constructions support both sender-sided and receiver-sided settings.  We further study sparser input distributions and present more efficient protocols tailored to this case. To reduce the overhead scaling with $\delta$, we non-trivially incorporate prefix trie techniques into our protocols, achieving $O(\log\delta)$ complexity for general $L_{p\in[1,\infty]}$ distances for the first time, improving upon $O((\log\delta)^d)$ or $O(\delta)$ complexities of prior works.

Extensive experiments, across a wide range of parameter settings, show that our protocols significantly outperform prior works under the same assumptions. Specifically, against van Baarsen and Pu {(EUROCRYPT'24)}, our protocols achieve up to $248\times$ faster computation and up to $20\times$ lower communication. Against Dang et al. {(CCS'25)}, we achieve up to $568\times$ speedup and up to $63\times$ communication reduction. Against Bui et al. {(ASIACRYPT'25)}, we achieve up to $4978\times$ faster computation and up to $282\times$ lower communication.

\end{abstract}

\begin{CCSXML}
<ccs2012>
   <concept>
       <concept_id>10002978.10002979</concept_id>
       <concept_desc>Security and privacy~Cryptography</concept_desc>
       <concept_significance>500</concept_significance>
       </concept>
 </ccs2012>
\end{CCSXML}

\ccsdesc[500]{Security and privacy~Cryptography}

\keywords{Fuzzy Private Set Intersection; MPC}

\maketitle

\input{intro}

\input{prelimilary}

\input{overview}

\input{method}

\input{experiment}

\input{conclusion}

\section*{Acknowledgments}
This work was supported by the Nanyang Technological University Centre in Computational Technologies for Finance (NTU-CCTF). It was also supported by the National Research Foundation, Singapore, and Cyber Security Agency of Singapore under its National Cybersecurity R\&D Programme and CyberSG R\&D Cyber Research Programme Office. Any opinions, findings, and conclusions or recommendations expressed in this material are those of the author(s) and do not necessarily reflect the views of NTU-CCTF, National Research Foundation, Singapore, Cyber Security Agency of Singapore as well as CyberSG R\&D Programme Office, Singapore.

\bibliographystyle{ACM-Reference-Format}
\bibliography{sample-base}

\appendix

\input{appendix}

\end{document}

%% file: intro.tex
\section{Introduction}

Standard private set intersection (PSI)~\cite{meadows1986more,freedman2004efficient,dong2013private,pinkas2015phasing,pinkas2020psi,raghuraman2022blazing,chen2017fast,wu2023efficient,kales2019mobile,ion2017private,hao2024unbalanced} has been studied for decades, with significant progress made in both efficiency and scalability. It enables two parties to compute their common elements without revealing any information about non-matching ones. PSI has found applications in contact tracing~\cite{wu2023efficient,kales2019mobile}, ad conversion measurement~\cite{ion2017private}, and others~\cite{kiss2017private,alamati2021laconic}. However, its exact-matching requirement limits applicability in settings where data are inherently noisy or imprecise, such as biometric identification or location services.

Fuzzy PSI~\cite{garimella2022structure,garimella2024computation,van2024fuzzy,gao2025efficient,zhang2025fast,dang2025ccs,piske2025distance,van2025,richardson2024fuzzy,fss25,yang2026} generalizes this functionality to support \textit{approximate matching}. Specifically, in a $d$-dimensional space, let the sender hold a set $Q = \{\vecq_1, \ldots, \vecq_m\}$ and the receiver hold a set $W = \{\vecw_1, \ldots, \vecw_n\}$. Fuzzy PSI outputs to the receiver all elements $\vecq \in Q$ for which there exists some $\vecw \in W$ satisfying $\mathsf{dist}(\vecq, \vecw) \le \delta$, where $\delta$ is the matching threshold and  $\mathsf{dist}(\cdot,\cdot)$ denotes a predefined distance metric, e.g., general $L_{p\in[1, \infty]}$ distances.

To ensure practical efficiency, most existing fuzzy PSI protocols rely on \textit{geometric separation assumptions} over the input sets. These works can be broadly divided into two categories based on the underlying assumptions: \textit{one-sided} and \textit{two-sided}. Protocols based on two-sided assumptions~\cite{van2024fuzzy,gao2025efficient,dang2025ccs,van2025,yang2026,zhang2025fast} generally achieve better efficiency. For example, under the assumption that points held by both the sender and receiver are pairwise separated by at least $2\delta$ or $4\delta$, the state-of-the-art construction~\cite{van2025} realizes concretely efficient fuzzy PSI using lightweight OPRF. However, such two-sided assumptions may be unrealistic in practice and can limit its applicability to real-world scenarios.

To relax these constraints, recent works~\cite{dang2025ccs,van2024fuzzy,fss25,garimella2024computation} have proposed fuzzy PSI protocols under \textit{one-sided assumptions}, where only one party's inputs are required to satisfy certain assumptions while the other party may hold arbitrary inputs. However, existing constructions either rely on heavyweight cryptographic primitives (e.g., AHE~\cite{paillier1999public,NR97}) or do not support general $L_{p\in[1, \infty]}$ distances.
Specifically, van Baarsen and Pu~\cite{van2024fuzzy}, later improved by Dang et al.~\cite{dang2025ccs}, proposed the first fuzzy PSI protocols for general $L_{p\in[1, \infty]}$ distances using AHE. Concurrently, another line of work~\cite{garimella2024computation,fss25} achieves fuzzy PSI using symmetric-key techniques (e.g., function secret sharing~\cite{boyle2015function,boyle2016function}), but supports only the $L_\infty$ distance.

Moreover, these works incur undesirable asymptotic communication and computation overhead in the distance threshold $\delta$. The protocol of~\cite{van2024fuzzy} scales linearly as $O(\delta)$, while the constructions in~\cite{dang2025ccs,fss25,garimella2024computation} incur computation or communication complexity of $O((\log \delta)^d)$. Consequently, the efficiency of these protocols deteriorates with larger $\delta$, particularly in real-world applications where $\delta$ represents a normalized integer of some high-precision floating-point value and may thus be large.

In summary, existing works for {general $L_{p\in[1, \infty]}$ distances} either (1) rely on strong geometric separation assumptions on both parties' inputs, or (2) incur substantial computation and communication overhead under one-sided assumptions. 
Motivated by these limitations, we raise the following question:

\begin{center}
        \begin{tcolorbox}[
            colback=lightgray!15, 
            size=title, 
            colframe=black, 
            boxrule=0.5pt,
            width=\linewidth,
            valign=center,
            ]
        
            Can we construct efficient fuzzy PSI protocols for general $L_{p\in[1, \infty]}$ distances under one-sided assumptions?
          
        \end{tcolorbox}
\end{center}

\subsection{Our Contributions}

To address these limitations, we propose a fuzzy PSI for general $L_{p\in[1,\infty]}$ distances with two key features: (1) it operates under weaker one-sided assumptions (for either sender or receiver), and (2) it relies solely on lightweight symmetric-key primitives, thus achieving significant improvements in both communication and computation efficiency. We summarize our contributions as follows.

\begin{enumerate}[leftmargin=*]
    \item \textbf{General fuzzy PSI under one-sided assumptions.} We propose concretely efficient fuzzy PSI protocols for general $L_{p\in[1,\infty]}$ distances under one-sided assumptions, relying solely on lightweight symmetric-key primitives. 
    {The core intermediate techniques are efficient protocols for multi-point fuzzy matching, which support one party holding multiple inputs without imposing any assumption. 
    Building on this, we present new constructions of fuzzy PSI supporting both the sender-sided and receiver-sided assumptions.}

    \item \textbf{Logarithmic complexity in threshold $\delta$.} We non-trivially incorporate prefix-trie techniques into our protocols. Unlike prior works that incur superlinear complexity $O((\log\delta)^d)$, we design an efficient dimension-by-dimension filtering mechanism to prune the exponential search space. As a result, our optimized protocol achieves logarithmic complexity $O(\log \delta)$ in the distance threshold $\delta$.

    \item \textbf{Extension for sparser distributions.} We further consider sparser input distributions under one-sided assumptions and design more efficient fuzzy PSI protocols tailored to this setting. The key idea is to reverse the roles of the sender and receiver in the spatial hashing procedure, which balances their respective workloads and yields improved concrete performance.
    
    \item \textbf{Implementation and evaluation.} We implement all proposed protocols and conduct extensive experiments across a wide range of parameter settings. Our protocols significantly outperform all prior works under assumptions of comparable strength, achieving one to three orders of magnitude improvement in efficiency.
\end{enumerate}

\section{Related Work}

{We introduce related works of fuzzy PSI under one-sided assumptions and defer other works to Appendix~\ref{appendix: other work}. Table~\ref{Tab: complexities} compares our protocols with prior works in terms of techniques, assumptions, and asymptotic complexity.}

Specifically, Garimella et al.~\cite{garimella2022structure} proposed the first fuzzy PSI protocol for $L_\infty$ distance, using a customized scheme of Function Secret Sharing (FSS)~\cite{boyle2015function,boyle2016function} to test whether the sender's points fall within the receiver's $\delta$-radius $L_\infty$ balls. Their construction requires the receiver's balls to be disjoint, while the sender's points could be arbitrary.
Depending on the distances between the ball centers, they consider different settings, but the computational complexity remains $O(\delta^d)$ in all cases.
Following works~\cite{garimella2024computation,fss25} improved upon this via prefix trie optimizations, reducing the complexity to $O((\log\delta)^d)$ for both communication and computation. These two constructions rely on the \textit{mini-universe} assumption, under which each ball maps uniquely to a distinct vertex of the space and the sender's points remain unconstrained. A common limitation of this line of work~\cite{garimella2022structure,garimella2024computation,fss25} is that it supports only $L_1$ and $L_\infty$ distances, with no known efficient instantiation for general $L_p$ distances.

Subsequently, van Baarsen and Pu~\cite{van2024fuzzy} studied fuzzy PSI under the one-sided $2\delta$-\textit{apart} assumption, where only the receiver's points are required to be at least $2\delta$ apart from each other. Their construction is built on a DDH-based set membership test, achieving $O(\delta)$ complexity in threshold $\delta$. To support arbitrary $L_p$ distances, they require a stronger $4\delta$-\textit{apart} assumption, and the overhead scales as $O(\delta^p)$ since the sender must enumerate all possible $p$-th power distances. Similarly, Dang et al.~\cite{dang2025ccs} further optimized this construction via prefix trie techniques under the same $4\delta$-\textit{apart} assumption, reducing the complexity to $O((\log\delta)^d)$.

{
More recently, Richardson et al.~\cite{richardson2024fuzzy} and Piske et al.~\cite{piske2025distance} proposed fuzzy PSI protocols based on symmetric-key techniques under the \textit{disjoint hash} assumption, which requires each point to map to a unique representation. In particular, Richardson et al.~\cite{richardson2024fuzzy} map elements into bins via spatial hashing and, assuming that each bin of both parties contains at most one element, compare items within corresponding bins. However, as illustrated by Piske et al.~\cite{piske2025distance}, their constructions based on Garbled Circuits incur substantial communication and computational overhead.
Subsequently, Piske et al.~\cite{piske2025distance} introduced more efficient fuzzy PSI protocols based on a new primitive called distance-aware OT, which transfers values conditioned on the distance between inputs. However, their theoretical construction under one-sided assumptions additionally leaks the density of the arbitrarily distributed receiver set and incurs complexity with an additional multiplicative dependence on this density. Moreover, their concrete instantiation ultimately relies on a two-sided assumption, requiring both the sender's and receiver's points to map to unique blocks or cells.
}

\begin{table*}[!t]
\centering
\caption{Comparisons of existing fuzzy PSI protocols under one-sided assumptions. The sender holds $m$ points and the receiver holds $n$ points in a $d$-dimensional space, and $\delta$ is the distance threshold. Note that our assumptions are either equivalent to or strictly weaker than existing ones; a detailed analysis is provided in Appendix~\ref{appendix: analysis of assumptions}.}

\vspace{-3pt}
\label{Tab: complexities}
\resizebox{\textwidth}{!}{%
\begin{tabular}{|c|c|c|c|c|cc|}
\hline
\multirow{2}{*}{Metric}     & \multirow{2}{*}{Protocol}            & \multirow{2}{*}{Technique} & \multirow{2}{*}{Assumption}                                             & \multirow{2}{*}{Communication}                                       & \multicolumn{2}{c|}{Computation}                                                                                         \\ \cline{6-7} 
                            &                                      &                                                                                     &                  &                                                    & \multicolumn{1}{c|}{Sender}                                              & Receiver                                      \\ \hline
\multirow{8}{*}{$L_\infty$} & \cite{fss25}                    & FSS & $\mathcal{R},\text{mini-universe}$                                                & $O(dn \log \delta + m2^d \textcolor{red}{(\log \delta)^d})$                            & \multicolumn{1}{c|}{$O(m2^d\textcolor{red}{(\log \delta)^d})$}                              & $O(\textcolor{red}{(\log \delta)^d} n + dm\log \delta)$        \\ \cline{2-7} 
                            & \multirow{2}{*}{\cite{van2024fuzzy}} & AHE &$\mathcal{R}, 2\delta\text{-apart}$                                                  & $O(\textcolor{red}{\delta} d n + 2^d m)$                                              & \multicolumn{1}{c|}{$O(2^dd m)$}                                         & $O(\textcolor{red}{\delta} d n + 2^d m)$                       \\
                            \cline{3-7}
                            &                                      & AHE &$\mathcal{R}, 4\delta\text{-apart}$                                                  & $O(\textcolor{red}{\delta}2^ dd n + m)$                                               & \multicolumn{1}{c|}{$O(dm)$}                                             & $O(\textcolor{red}{\delta}2^dd n + m)$                         \\ \cline{2-7} 
                            & \cite{dang2025ccs}                   & AHE & $\mathcal{R},4\delta\text{-apart}$                                                     & $O(d \log \delta (n 2^d + m))$                                       & \multicolumn{1}{c|}{$O(2^d d n \log \delta + m \textcolor{red}{(\log \delta)^d})$}        & $O(d m \log\delta)$                           \\ \cline{2-7} 
                            & \cite{richardson2024fuzzy}           & GC &$\mathcal{S}^*, \text{disj. hash}$                                                & $O(d \log \delta (n 2^s + m 2^{d-s}))$                               & \multicolumn{1}{c|}{$O(2^{d-s} dm \log \delta)$}                         & $O(2^{s} dn \log \delta)$                     \\ \cline{2-7} 
                            & \cite{piske2025distance}             & OPRF & $ \mathcal{S}^*$, disj. hash                                   & $O(d (\textcolor{red}{\delta} m+2^dn))$                                               & \multicolumn{1}{c|}{$O(\textcolor{red}{\delta} d m)$}                                     & $O(d 2^dn + m)$                               \\ \cline{2-7}   
                            & Ours                               & so-OPPRF & $\mathcal{R}/\mathcal{S}$, unique cell                & $O(d(\delta n+2^d  m))$ & \multicolumn{1}{c|}{$O(2^d d m)$}                              &    $O(d(\delta n+2^d  m))$                \\ \cline{2-7}   
                            & Ours-Px                               & so-OPPRF & $\mathcal{R}/\mathcal{S}$, unique cell                & $O(\textcolor{blue}{\log\delta}d(n+2^dm))$ & \multicolumn{1}{c|}{$O(\textcolor{blue}{\log\delta}2^ddm)$}                              &    $O(\textcolor{blue}{\log\delta}d(n+2^dm))$                \\
                            \cline{2-7}   
                            & Ours                               & so-OPPRF & $\mathcal{R}/\mathcal{S}$, unique block                & $O(d(m + \delta 2^dn))$  & \multicolumn{1}{c|}{$O(dm)$}                              &   $O(d(m + \delta 2^dn))$                 \\
                            \cline{2-7}   
                            & Ours-Px                                & so-OPPRF & $\mathcal{R}/\mathcal{S}$, unique block                  & $O(\textcolor{blue}{\log\delta}d(2^dn+m))$  & \multicolumn{1}{c|}{$O(\textcolor{blue}{\log\delta}dm)$}                              &   $O(\textcolor{blue}{\log\delta}d(2^dn+m))$                 \\
                            \cline{1-7} 
\multirow{6}{*}{$L_{p}$}    & \cite{van2024fuzzy}                  & AHE & $\mathcal{R}, 2\delta(d^{1/p}+1)\text{-apart}$                                       & $O(\textcolor{red}{\delta}{2^d} d n + \textcolor{red}{\delta^p} m)$                                    & \multicolumn{1}{c|}{$O((d + \textcolor{red}{\delta^p}) m)$}                               & $O(\textcolor{red}{\delta}{2^d} d n + m)$                      \\ \cline{2-7}  
                            & \cite{dang2025ccs}                   & AHE & $\mathcal{R},2\delta(d^{1/p}+1)\text{-apart}$                                          & $O(2^{d} p (d n \log \delta + m\textcolor{red}{(\log \delta)^{d}}) + d m\log \delta)$ & \multicolumn{1}{c|}{$O(2^{d} (d p n \log \delta + m\textcolor{red}{(\log \delta)^{d}}))$} & $O(m (d p \log \delta + 2^d\textcolor{red}{(\log \delta)^{d}}))$ \\ \cline{2-7} 
                            & \cite{richardson2024fuzzy}           & GC & $\mathcal{S}^*, \text{disj. hash}$                                                & $O(d \log(d\delta)(n 2^{d} + m 2^{d-s}))$                            & \multicolumn{1}{c|}{$O(d m 2^{d-s} \log(d\delta) )$}                     & $O(d n 2^{s} \log(d\delta))$                  \\ \cline{2-7} 
                            & \cite{piske2025distance}             & OPRF & $ \mathcal{S}^*$, disj. hash                                    & $O(d (\textcolor{red}{\delta} m+2^dn))$                                               & \multicolumn{1}{c|}{$O(\textcolor{red}{\delta} d m)$}                                     & $O(d 2^dn + m)$                               \\ \cline{2-7}   
                            & Ours                              & so-OPPRF & $\mathcal{R}/\mathcal{S}$, unique cell                & $O(d(\delta n+2^d  m)+p\log\delta 2^dm)$ & \multicolumn{1}{c|}{$O((d+p\log\delta) 2^dm)$}                              &      $O(d(\delta n+2^d  m)+p\log\delta 2^dm)$              \\ \cline{2-7}   
                            & Ours-Px                             & so-OPPRF & $\mathcal{R}/\mathcal{S}$, unique cell                & $O(\textcolor{blue}{\log\delta}(dn+2^ddm+2^dpm))$ & \multicolumn{1}{c|}{$O(\textcolor{blue}{\log\delta}(2^ddm+2^dpm))$}                              &     $O(\textcolor{blue}{\log\delta}(dn+2^ddm+2^dpm))$               \\
                            \cline{2-7}   
                            & Ours                               & so-OPPRF & $\mathcal{R}/\mathcal{S}$, unique block                  & $O(d(m + \delta 2^dn)+p\log\delta m )$ & \multicolumn{1}{c|}{$O((d+p\log\delta)m)$}                              &     $O(d(m + \delta 2^dn)+p\log\delta m )$               \\
                            \cline{2-7}   
                            & Ours-Px                               & so-OPPRF & $\mathcal{R}/\mathcal{S}$, unique block                  & $O(\textcolor{blue}{\log\delta}(2^ddn+dm+pm))$ & \multicolumn{1}{c|}{$O(\textcolor{blue}{\log\delta}(dm+pm))$}                              &    $O(\textcolor{blue}{\log\delta}(2^ddn+dm+pm))$                \\
                            \hline

\end{tabular}%
}
\captionsetup{justification=raggedright, singlelinecheck=false}
\caption*{\footnotesize 
-- $\mathcal{R}/\mathcal{S}$ denotes that the set of receiver or sender satisfies the assumption. 

-- $\mathcal{S}^*$ denotes that the protocols of \cite{piske2025distance,richardson2024fuzzy} require an additional assumption on the receiver's inputs to instantiate the concrete protocol.

-- Our protocols for sender-sided and receiver-sided assumptions have the same asymptotic complexity when $m = n$, we report one complexity.

-- Mini-universe assumption requires that each $L_\infty$ ball maps uniquely to a distinct mini-universe.

-- $2\delta$/$4\delta$-apart means the distance between any two points of the set is at least $2\delta$ or $4\delta$ apart. $2\delta(d^{1/p}+1)$-apart is the generalization of the $4\delta$-apart for $L_p$ distances.

-- Disj. hash is short for disjoint hash assumption, which means the spatial hashing scheme maps at most one point to the same cell.

-- Unique block/unique cell means in spatial hashing, each cell intersects with at most one $L_\infty$ ball and each cell contains at most one point, respectively.
}
\end{table*}

%% file: prelimilary.tex
\section{Preliminary}
\subsection{Notation}

We use $\kappa$ and $\lambda$ to denote the computational and statistical security parameters, respectively. We use $[a, b]$ to denote the set $\{a, \ldots, b\}$ and $[a]$ to denote the set $\{1, \ldots, a\}$. For a set $S$, $|S|$ denotes its cardinality. We write $r \overset{\$}{\leftarrow} \FF$ to denote that $r$ is sampled uniformly at random from space $\FF$. We use $\mathbf{1}\{\mathsf{event}\}$ to denote the indicator function that equals $1$ if $\mathsf{event}$ occurs and $0$ otherwise. All protocols in this work are proven secure in the semi-honest model and the formal definition is deferred to Appendix~\ref{appendix: Threat Model}.

\subsection{Fuzzy PSI}

We formally define the ideal functionality of fuzzy PSI in Figure~\ref{Func:FPSI}, following prior works~\cite{dang2025ccs,van2025,van2024fuzzy}. Given a threshold $\delta$ and a distance metric $\mathsf{dist}(\cdot,\cdot)$, the functionality takes as input a set from each party and outputs to the receiver all sender elements within distance $\delta$ of some receiver element.

\begin{figure}[!h]
\begin{nffunc}{\Func[FPSI]}

\noindent \textbf{Parameters:} Sender \SSS and receiver \RRR with input sizes $m, n$, respectively. Distance metric $\mathsf{dist}(\cdot,\cdot)$ and threshold $\delta$.

\noindent \textbf{Functionality:}

\begin{enumerate}

    \item Wait for input $Q = \{\vecq_j\}_{j \in [m]} \in \mathbb{U}^{d\times m}$ from \SSS and $W = \{\vecw_i\}_{i \in [n]} \in \mathbb{U}^{d\times n}$ from \RRR.

    \item Output $\{\vecq_j \mid \exists i\in[n], \mathsf{dist}(\vecq_j, \vecw_i) \le \delta \}$ to \RRR.
    
\end{enumerate}

\end{nffunc}

\caption{Functionality of fuzzy PSI.}
\label{Func:FPSI}
\end{figure}

\subsection{Spatial Hashing and Assumptions}\label{sec:pre:assumptions}

\noindent\textbf{Spatial Hashing.} To avoid the quadratic complexity of naively comparing every pair of inputs, similar to prior fuzzy PSI under one-sided assumptions \cite{garimella2022structure,van2024fuzzy,van2025,dang2025ccs}, we adopt the spatial hashing technique, which partitions the space $\mathbb{U}^d$ into cells of side length $2\delta$. For a point $\vecx$, we define two algorithms:
\begin{enumerate}
    \item $\mathsf{cell}_{2\delta}(\vecx) \rightarrow \CCC$ maps $\vecx$ to the identifier of the cell containing it, where $\CCC = id_1 \Vert id_2 \Vert \cdots \Vert id_d$ and $id_k = \lfloor \frac{x_k}{2\delta} \rfloor$ for $k\in[d]$.
    \item $\mathsf{neigh}_{2\delta}(\vecx) \rightarrow \{\CCC_{z}\}_{z\in[2^d]}$ maps $\vecx$ to the set of identifiers of all cells intersecting the $L_{p\in[1,\infty]}$ ball of radius $\delta$ centered at $\vecx$. By default, the output is padded to $2^d$ cells using random values.
\end{enumerate}

\begin{figure}[!h]
    \centering
    \begin{tikzpicture}[
        every node/.style = {
            draw, rounded corners,
            minimum height = 5mm,
            inner sep = 2pt,
            align = center,
            font = \normalsize
        },
        arr/.style  = {Stealth-, semithick, shorten >=6pt, shorten <=6pt},
        cite/.style = {draw=none, minimum height=0pt,
                       inner sep=0pt, font=\small, text=black},
    ]
        \node (BC) at (0.0, 0)   {unique cell};
        \node (L)  at (2.85, 0) {unique block};
        \node (A) at (-3,  1.65) {mini-universe};
        \node (D) at (-3,  -0.55) {$2\delta$-apart};
        \node (E) at (-3, -1.65) {$2\delta$-disj.\ proj.};
        \node (F) at (2.85, -1.65) {$4\delta$-apart};
        \node (C) at (-3, 0.55)   {disj. hash}; 
        \draw[Stealth-Stealth, semithick, shorten >=6pt, shorten <=6pt] ([xshift=-3pt]A.south east) -- (BC.north west);
        \draw[arr] (D.east) -- ([xshift=-1pt, yshift=-3pt]BC.west);
        \draw[Stealth-Stealth, semithick, shorten >=6pt, shorten <=6pt] (C.east) -- ([xshift=-1pt, yshift=3pt]BC.west);
        \draw[arr] ([xshift=-2pt]E.north east) -- (BC.south west);
        \draw[arr] (F) -- (L);
        \draw[arr] (L.west) -- (BC.east);
        \node[cite, below=1pt of A]  {\cite{garimella2024computation,fss25}};
        \node[cite, below=1pt of C] {\cite{piske2025distance,richardson2024fuzzy}};
        \node[cite, above=2.8pt of BC] {{{Section~\ref{Sec: fuzzy psi 2delta}}}};
        \node[cite, below=1pt of D]  {\cite{van2024fuzzy,garimella2022structure}};
        \node[cite, below =1pt of E]  {\cite{van2024fuzzy,gao2025efficient,dang2025ccs,van2025,yang2026}};
        \node[cite, below=1pt of F]  {\cite{dang2025ccs,van2024fuzzy}};
        \node[cite, above=2.8pt of L] {{Section~\ref{Sec: 4delta}}};
    \end{tikzpicture}
    \vspace{-6pt}
    \caption{Relationships of existing assumptions. An arrow indicates the direction from the weaker to the stronger assumption, and a double-headed arrow indicates equivalence.}
    \label{Fig: assumptions}
\end{figure}

\noindent\textbf{Assumptions.} {
Our protocols operate under two different one-sided assumptions: \textit{unique cell} and \textit{unique block}. The \textit{unique cell} assumption requires that each cell contains at most one point, i.e., every point occupies a distinct cell. 

\begin{definition}[Unique Cell]\label{def: unique cell}
A set $W \in \mathbb{U}^{n \times d}$ satisfies the {unique cell} assumption if every cell in spatial hashing contains at most one point $\vecw \in W$. In other words, for any two distinct points $\vecw, \vecw' \in W$, $\mathsf{cell}_{2\delta}(\vecw) \neq \mathsf{cell}_{2\delta}(\vecw')$.
\end{definition}

The \textit{unique block} assumption is sparser, requiring that each cell intersects with at most one $L_{p\in[1,\infty]}$ ball, i.e., every point's $L_\infty$ ball occupies a distinct block of side length $4\delta$ comprising $2^d$ cells.

\begin{definition}[Unique Block]\label{def: unique block}
A set $W \in \mathbb{U}^{n \times d}$ satisfies the \textit{unique block} assumption if each cell intersects with at most
one $L_{p\in[1,\infty]}$ ball centered at $\vecw \in W$. In other words, for any two distinct points $\vecw, \vecw' \in W$, $\mathsf{neigh}_{2\delta}(\vecw) \cap \mathsf{neigh}_{2\delta}(\vecw') = \emptyset$.
\end{definition}

We provide a comparison of existing assumptions in Figure~\ref{Fig: assumptions}. For clarity, all assumptions are stated with respect to the $L_\infty$ distance; their relationships extend analogously to general $L_p$ distances. Notably, all existing assumptions in the literature are either equivalent to or strictly stronger than the \textit{unique cell} assumption. We defer the formal definitions and detailed analysis of these assumptions to Appendix~\ref{appendix: analysis of assumptions}.
}

\subsection{Prefix Trie}
\label{Sec: prefix-trie prelim}

Chakraborti et al.~\cite{chakraborti2023distance} first introduced the prefix trie technique, which was subsequently studied and formalized by~\cite{van2025,dang2025ccs,garimella2024computation,fss25}. Let $x = x_{\ell} x_{\ell-1} \cdots x_{1} \in \{0, 1\}^\ell$ be a binary string. We adopt the notation of~\cite{dang2025ccs}:
\begin{itemize}
    \item $\mathsf{UpBound}(x_{\ell} x_{\ell-1}\cdots x_{k})=x_{\ell} x_{\ell-1}\cdots x_{k} \| 11\cdots 1$
    \item $\mathsf{LowBound}(x_{\ell} x_{\ell-1}\cdots x_{k})=x_{\ell} x_{\ell-1}\cdots x_{k} \| 00\cdots 0$
    \item $\mathsf{Interval}(x_{\ell} x_{\ell-1}\cdots x_{k}) = \{x_{\ell}\cdots x_{k} \| x^*\}_{x^* \in \{0,1\}^{k-1}}$
\end{itemize}

van Baarsen and Pu~\cite{van2025} further summarized the technique and gave the following theorem.

\begin{theorem}[\cite{van2025}]
\label{theorem: prefix}
Given an integer interval $[q-\delta, q+\delta] \in \ZZ_{2^\ell}^{2\delta+1}$ and a point $w \in \ZZ_{2^\ell}$,
there are two algorithms with $\mu, \mu^\prime = O(\log \delta)$:

\begin{enumerate}
    \item $\mathsf{PxTrie}(q-\delta, q+\delta) \to \{\tilde{q}_1,\ldots,\tilde{q}_\mu\}$: The $\mathsf{PxTrie}$ algorithm succinctly encodes the interval $[q-\delta, q+\delta]$ into $\mu$ prefix nodes $\{\tilde{q}_1,\ldots,\tilde{q}_\mu\} \in \ZZ_{2^\ell}^\mu$ that cover the entire interval without overlap. Specifically, $\mu \le 2 + \log\delta$ when $\delta$ is a power of $2$. 

    \item $ \mathsf{PxPath}(w, \delta) \to \{\tilde{w}_1,\ldots,\tilde{w}_{\mu^\prime}\}$: The $\mathsf{PxPath}$ algorithm expands each query point $w \in \ZZ_{2^\ell}$ into a path of prefix nodes $\{\tilde{w}_1,\ldots,\tilde{w}_{\mu^\prime}\} \in \ZZ_{2^\ell}^{\mu^\prime}$, where $\mu^\prime := 2 + \log\delta$. 
\end{enumerate}
Then, it holds that $\tilde{w}_t \in \{\tilde{q}_1,\ldots,\tilde{q}_\mu\}$ for some unique $t \in [\mu^\prime]$ if and only if $w \in [q-\delta, q+\delta]$. 
\end{theorem}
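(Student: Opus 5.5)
We will prove the statement by making both algorithms explicit on the complete binary trie of depth $\ell$ over $\ZZ_{2^\ell}$. A node at height $k$ is a prefix $x_\ell\cdots x_{k+1}$, and its leaves form $\mathsf{Interval}$ of that prefix. Concretely, we encode a node as a pair (height, prefix) packed into one element, so that nodes at different heights never collide.

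\emph{$\mathsf{PxTrie}$.} Given $[a,b]=[q-\delta,q+\delta]$, output the canonical minimal decomposition: the maximal dyadic subintervals. Each is a set $[c2^k,(c+1)2^k-1]\subseteq[a,b]$ whose parent interval is not contained in $[a,b]$. These sets are disjoint and cover $[a,b]$, since every leaf of $[a,b]$ has a unique maximal dyadic ancestor inside $[a,b]$.

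For the size bound, use the standard segment-tree argument. At each height $k$ there are at most two chosen nodes, one at each end; otherwise two adjacent siblings would both be chosen, contradicting maximality. Moreover, no chosen node has size exceeding the interval length $2\delta+1$, so $k\le \lfloor\log(2\delta+1)\rfloor$. This gives $\mu=O(\log\delta)$.

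For the sharper constant $\mu\le 2+\log\delta$ when $\delta$ is a power of $2$, split $[a,b]$ at the largest multiple $M$ of $2^{h}$ inside it, where $h\approx\log\delta$. On the left piece $[a,M-1]$ the chosen nodes are at distinct heights following the binary digits of $M-a$; symmetrically on the right piece $[M,b]$ they follow the digits of $b-M+1$. Bound the total number of set bits of $M-a$ and $b+1-M$, whose sum is $2\delta+1$. This counting is the step I expect to be the main obstacle. It is fiddly, and the exact constant may depend on the paper's convention (for instance, a shifted interval or padding). If it resists, I would fall back to proving $\mu\le 2\log(2\delta+1)=O(\log\delta)$, which suffices for the asymptotic claim.

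\emph{$\mathsf{PxPath}$.} Output the ancestors of the leaf $w$ at heights $0,1,\ldots,\mu'-1$ with $\mu'=2+\log\delta$. Every chosen node in $\mathsf{PxTrie}$ has height at most $\log(2\delta+1)<\mu'$, so all relevant heights are covered.

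\emph{Correctness.} Suppose first that $w\in[a,b]$. By the covering property, $w$ lies in exactly one chosen dyadic node $\tilde q$, of some height $k<\mu'$. Then $\tilde q$ is precisely the height-$k$ ancestor of $w$, namely $\tilde w_{k+1}$, so a matching index exists.

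Uniqueness follows from disjointness. If two path nodes were both chosen, they are nested, since they are ancestors of the same leaf. Nested nodes overlap, contradicting the fact that the chosen nodes do not overlap.

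Conversely, suppose some $\tilde w_t$ equals a chosen node. Then $w\in\mathsf{Interval}(\tilde w_t)\subseteq[a,b]$.

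Wraparound modulo $2^\ell$ is handled by assuming $\delta\le q<2^\ell-\delta$, or else by splitting $[a,b]$ into two non-wrapping intervals. The splitting option adds at most a constant number of nodes, which keeps $\mu=O(\log\delta)$ but may break the exact constant $2+\log\delta$.
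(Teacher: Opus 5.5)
Your argument for the matching property is correct and follows the same route as the paper's appendix proof of Lemma~\ref{lemma: prefix}. That lemma is all the paper actually proves here; the theorem itself, including the constant, is imported from van Baarsen and Pu. The shared steps are:
\begin{itemize}
\item Covering yields a chosen node containing $w$, of height at most $\lfloor\log(2\delta+1)\rfloor=\mu'-1$, so it lies on $\mathsf{PxPath}(w,\delta)$.
\item Two matches would be nested ancestors of $w$, contradicting disjointness. The paper phrases this as prefix-freeness of $\mathsf{PxTrie}$.
\item Any match gives $w\in\mathsf{Interval}(\tilde w_t)\subseteq[q-\delta,q+\delta]$.
\end{itemize}
Your per-height count also correctly gives $\mu=O(\log\delta)$.

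\textbf{The gap.} The statement asserts $\mu\le 2+\log\delta$, and you leave this open; the fallback $2\log(2\delta+1)$ does not give it.

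The split you sketch is also unsound. If $M$ is the largest multiple of $2^h$ in $[a,b]$, a chosen node can straddle $M$, so the nodes left of $M$ need not follow the binary digits of $M-a$. For instance, $[a,b]=[2^{h+1}-1,\,2^{h+2}-1]$ gives $M=3\cdot 2^h$. Its decomposition is $\{2^{h+1}-1\}$ together with the single node $[2^{h+1},2^{h+2}-1]$, which contains both $M-1$ and $M$.

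\textbf{The fix.} Split instead at the unique $M\in(a,b]$ of maximal $2$-adic valuation $v$; this is the midpoint of the lowest common ancestor of the leaves $a$ and $b$.
\begin{itemize}
\item Then $x=M-a\le 2^v$ and $y=b+1-M\le 2^v$.
\item A node containing both $M-1$ and $M$ has size at least $2^{v+1}$, and being inside $[a,b]$ would force $x=y=2^v$. That is impossible, since $x+y=2\delta+1$ is odd.
\item So the two sides contribute exactly $s(x)$ and $s(y)$ nodes, where $s(\cdot)$ is the binary digit sum.
\end{itemize}

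The counting you expected to be the obstacle is then short. We have $s(x)+s(y)=s(x+y)+c$, where $c$ is the number of carries in computing $x+y$. With $\delta=2^h$:
\begin{itemize}
\item $x+y=2^{h+1}+1$, so $s(x+y)=2$.
\item Bit $0$ never carries, because exactly one of $x,y$ is odd.
\item If $x,y<2^{h+1}$, only bits $1,\dots,h$ can carry, so $c\le h$.
\item The remaining case $\{x,y\}=\{2^{h+1},1\}$ gives $s(x)+s(y)=2$.
\end{itemize}
Hence $\mu\le 2+\log\delta$. The same split with $M=2^\ell\equiv 0$ handles wraparound, so your worry that splitting there loses the constant is unfounded.
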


For clarity, we assume throughout this paper that $\delta$ is a power of 2. We set $\mu$ to its maximum value, giving $\mu = \mu' = 2 + \log\delta$. The output of $\mathsf{PxTrie}(q-\delta, q+\delta)$ is always padded to $\mu$ prefixes using the default random values. Additionally, we give the following property: the outputs of $\mathsf{PxTrie}(\cdot,\cdot)$ and $\mathsf{PxPath}(\cdot)$ share at most one common element. A formal proof is provided in Appendix~\ref{appendix: proof of prefix}.

\begin{lemma}\label{lemma: prefix}
For any point $w \in \{0, 1\}^\ell$ and interval $I = [q-\delta, q+\delta]$, the following holds:
\[
\left| \mathsf{PxPath}(w,\delta) \cap \mathsf{PxTrie}(q-\delta,q+\delta) \right| =
\begin{cases}
1, & \text{if } w \in [q-\delta,q+\delta],\\
0, & \text{otherwise.}
\end{cases}
\]
\end{lemma}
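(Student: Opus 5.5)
The plan is to derive the lemma directly from Theorem~\ref{theorem: prefix}, together with a structural observation about how prefix nodes are encoded. Two points need care. First, the ``if'' direction: when $w \in [q-\delta,q+\delta]$, Theorem~\ref{theorem: prefix} guarantees that exactly one index $t\in[\mu']$ has $\tilde w_t \in \mathsf{PxTrie}(q-\delta,q+\delta)$. Second, padding: we must make sure the random padding values, added to reach $\mu$ outputs, create no spurious intersections. So the proof splits into showing that the intersection has size at most one in all cases, and that it is nonempty exactly when $w$ lies in the interval.

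\textbf{Encoding step.} I will fix the encoding of a prefix node as the pair (prefix bits, depth). Equivalently, we can view it as the set $\mathsf{Interval}(\cdot)$ of leaves it covers. Under this encoding, two nodes coincide only if they are the same trie node.

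\begin{itemize}
\item The path $\mathsf{PxPath}(w,\delta)$ is the chain of ancestors of the leaf $w$ at depths $0,\ldots,\mu'-1$ above the leaves. Its elements are pairwise nested intervals that all contain $w$.
\item The nodes $\tilde q_1,\ldots,\tilde q_\mu$ produced by $\mathsf{PxTrie}$ are pairwise disjoint intervals covering $[q-\delta,q+\delta]$, as stated in Theorem~\ref{theorem: prefix}.
\end{itemize}

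Now suppose two distinct path nodes $\tilde w_s \neq \tilde w_t$ both belonged to the trie output. As ancestors of $w$, one contains the other, so their intervals intersect. This contradicts the disjointness of the trie nodes. Hence the intersection has size at most $1$.

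\textbf{Characterization step.} Membership of a node means the node's interval is one of the covering intervals.

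\begin{itemize}
\item If some $\tilde w_t$ equals some $\tilde q_i$, then $w \in \mathsf{Interval}(\tilde w_t) \subseteq [q-\delta,q+\delta]$. So a nonempty intersection implies $w$ is in the interval.
\item Conversely, if $w$ is in the interval, the ``unique $t$'' clause of Theorem~\ref{theorem: prefix} gives a common element.
\end{itemize}

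Combining the two directions with the bound from the encoding step yields the case split claimed in Lemma~\ref{lemma: prefix}.

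\textbf{Main obstacle: padding.} The padding entries are random values rather than genuine trie nodes, so the disjointness argument does not apply to them. I would handle this by drawing the padding from a domain disjoint from valid node encodings, for example by using a reserved depth tag or dummy flag. Then no path element can equal a padding entry. If instead the padding is drawn uniformly from a large space, the statement holds except with probability $\mu\mu'/2^{\kappa}$, which is negligible, and I would state it in that form. Everything else follows mechanically from Theorem~\ref{theorem: prefix} and the nesting/disjointness structure of the binary trie.
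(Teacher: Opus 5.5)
Your proof is correct, but it gets existence by a different route from the paper. The paper works directly from the defining properties of $\mathsf{PxTrie}$ in Definition 3.1 of \cite{dang2025ccs}. First, the trie nodes' intervals union to exactly $[q-\delta,q+\delta]$, which gives your ``nonempty $\Rightarrow w\in I$'' step. Second, no proper prefix of a trie node has its interval inside the range, which gives prefix-freeness and hence your ``at most one'' step; you obtain the latter from pairwise disjointness instead, which is equivalent. For existence, the paper does not invoke Theorem~\ref{theorem: prefix}. It takes the covering node $b_{j^*}$ containing $w$ and notes $2^{k_{j^*}}\le 2\delta+1$, so its wildcard count is at most $\lfloor\log_2(2\delta+1)\rfloor=\log\delta+1$. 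It concludes that $b_{j^*}$ lies among the $\mu'=\log\delta+2$ levels enumerated by $\mathsf{PxPath}(w,\delta)$. You instead import existence from the ``unique $t$'' clause of the theorem. That is legitimate, since the theorem is taken as given. It does make your argument a sharpening of the theorem: your bound rules out the ``two or more matches'' case hidden in the negation of its iff. The paper's depth bound, by contrast, explains \emph{why} the matching node sits on the path. One small point: your containment $\mathsf{Interval}(\tilde w_t)\subseteq[q-\delta,q+\delta]$ needs the trie nodes to partition exactly the range, which ``cover without overlap'' does not literally state. You could instead get that direction from the theorem's iff combined with your at-most-one bound. Your handling of node encoding (prefix bits plus depth) and of the random padding goes beyond the paper, whose proof silently ignores padding. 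With padding drawn from the key space, the stated equality holds only with overwhelming probability, as you note.
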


\subsection{Oblivious Programmable PRF (with Secret-
shared Outputs)}
\label{Sec: pre so-opprf}

An oblivious programmable pseudorandom function (OPPRF)~\cite{kolesnikov2017practical} allows the sender to program a pseudorandom function with a set of key-value pairs. When evaluated at a programmed point, the receiver obtains the corresponding programmed value directly; otherwise, it obtains a uniformly random value.

\begin{figure}[!h]
\begin{nffunc}{\Func[so\text{-}OPPRF]}

\noindent \textbf{Parameters:} Sender \SSS and receiver \RRR with input sizes $m, n$, respectively. 

\noindent \textbf{Functionality:}

\begin{enumerate}

    \item Wait for input $L = \{(q_{j}, z_j)\}_{j \in [m]} \subseteq \UU \times \FF$ from \SSS and $X = \{x_i\}_{i \in [n]} \subseteq \UU$ from \RRR.

    \item Sample a random function $F : \UU \to \FF$ such that $F(q) = z$ for $(q, z) \in L$. 
    
    \item For $i \in [n]$, compute $y_i := F(x_i)$ and sample $y_i^\SSS, y_i^\RRR \from \FF$ such that $y_i^\SSS + y_i^\RRR = y_i \in \FF$.

    \item Output $\{y_i^\SSS\}_{i \in [n]}$ and $\OOO^{F}$ to \SSS and $\{y_i^\RRR\}_{i \in [n]}$ to \RRR.

\end{enumerate}

\end{nffunc}
\vspace{0.5em}
\caption{Functionality of oblivious programmable PRF with secret-shared outputs.}
\label{Func: ss-opprf}
\end{figure}

Yang et al.~\cite{yang2026} extended the standard OPPRF to support secret-shared outputs, yielding the so-OPPRF. Rather than revealing the outputs directly to the receiver, the so-OPPRF splits it into additive secret shares: when evaluated at a programmed point, both parties obtain shares of the corresponding programmed value; otherwise, they obtain uniformly random shares. The so-OPPRF can be efficiently instantiated from a secret-shared oblivious pseudorandom function (ss-OPRF)~\cite{alamati2024improved} and an oblivious key-value store (OKVS)~\cite{raghuraman2022blazing}. We give the ideal functionality of so-OPPRF in Figure~\ref{Func: ss-opprf}.

\subsection{Other Building Blocks}

Our protocols additionally rely on several standard functionalities, including interval test $\Func[Interval]$, private equality test $\Func[Eq]$, secure comparison $\Func[ssCMP]$, multiplexer \Func[MUX], and oblivious transfer $\Func[OT]$. We defer the formal functionalities to Appendix~\ref{appendix: func other}.

%% file: overview.tex
\section{Technical Overview}

We first introduce the notation used throughout this overview. Let $\mathcal{S}$ (the sender) hold a set $Q = \{\vecq_j\}_{j \in [m]}$ of size $m$, and let $\mathcal{R}$ (the receiver) hold a set $W = \{\vecw_i\}_{i \in [n]}$ of size $n$. Each element is a vector in a $d$-dimensional space, i.e., $\vecq_j, \vecw_i \in \mathbb{U}^d$. We denote by $q_{j,k}$ the $k$-th coordinate of vector $\vecq_j$. For clarity, we focus on the $L_\infty$ distance in this technical overview and detail how to extend to general $L_p$ distances in Section~\ref{Sec: fuzzy psi 2delta}.

\begin{figure}[t]
  \centering
  \begin{tikzpicture}[
    every node/.style={
      draw, rounded corners,
      minimum height=5mm,
      inner sep=2pt,
      align=center,
      font=\small
    },
    arr/.style={-Stealth, semithick},
    lbl/.style={draw=none, inner sep=1pt, font=\small},
  ]
    \node (oto1) at (0,    0) {1-to-1 \\ Fuzzy Matching};
    \node (oton) at (3.35,  0.9) {1-to-$n$ \\ Fuzzy Matching};
    \node (mto1) at (3.35, -0.9) {$m$-to-1 \\ Fuzzy Matching};
    \node (mton) at (6.7,  0)   {$m$-to-$n$ \\ Fuzzy PSI};

    \draw[arr] (oto1) to[overlay, bend left=10]  node[lbl, above, xshift = -15pt, yshift = 3pt] {so-OPPRF + EqRand} (oton);
    \draw[arr] (oto1) to[overlay, bend right=10] node[overlay, lbl, below, xshift = -9pt, yshift = -3pt] {so-OPPRF} (mto1);
    \draw[arr] (oton) to[bend left=10]  node[lbl, above, xshift = 5pt, yshift=3pt] {+ sender-sided \\ assumption} (mton);
    \draw[arr] (mto1) to[bend right=10] node[overlay, lbl, below, xshift = 5pt, yshift=-3pt] {+ receiver-sided \\ assumption} (mton);

  \end{tikzpicture}
  \vspace{-2pt}
  \caption{Two approaches to fuzzy PSI, where $m$ is the sender input size and $n$ is the receiver input size. Both approaches only impose an assumption on one of the parties.}
  \label{Fig: approaches to fuzzy PSI}
\end{figure}

{
As shown in Figure~\ref{Fig: approaches to fuzzy PSI}, our starting point of the following technical overview is fuzzy matching (Section~\ref{overview: Revisiting OPPRF-Based Fuzzy Matching}), which can be viewed as a special case of fuzzy PSI in which each party holds a single element. Then, we show how to extend fuzzy matching to multi-point fuzzy matching (Section \ref{Sec: overview mq fuzzy matching}), where one party holds multiple points without any assumptions. 
We identify two possible methods of multi-point fuzzy matching to achieve fuzzy PSI: $1$-to-$n$ fuzzy matching (the upper part of Figure~\ref{Fig: approaches to fuzzy PSI}), where the sender holds a single point and the receiver holds $n$ arbitrary points, and $m$-to-$1$ fuzzy matching (the lower part of Figure~\ref{Fig: approaches to fuzzy PSI}), where the receiver holds a single point and the sender holds $m$ arbitrary points.
Finally, building on multi-point fuzzy matching, we realize fully fledged fuzzy PSI protocols under sender-sided or receiver-sided assumptions (Section \ref{Sec: overview fuzzy psi}). In addition, we further optimize our fuzzy PSI protocols using prefix-trie techniques (Section {\ref{Sec: overview prefix trie}}).
}

\subsection{Revisiting OPPRF-Based Fuzzy Matching}

\label{overview: Revisiting OPPRF-Based Fuzzy Matching}

{

In fuzzy matching, each party holds a single element, denoted by $\vecq$ and $\vecw$, respectively. The receiver learns $\vecq$ if $\mathsf{dist}(\vecq, \vecw) \le \delta$, and learns nothing otherwise.

Recently, van Baarsen and Pu~\cite{van2025} presented the state-of-the-art fuzzy matching protocol based on OPPRF (see Section~\ref{Sec: pre so-opprf}). At a high level, their key idea is to perform an interval membership test via OPPRF for each dimension and then aggregate the results across all dimensions. Concretely, for each dimension $k$, the sender samples a uniformly random value $s_k$ and programs all points in the interval $[q_k - \delta, q_k + \delta]$ to the same output $s_k$. Accordingly, the receiver then evaluates the programmed PRF at input $w_k$ to obtain $r_k$ for each $k \in [d]$.
Finally, the sender sends $\vecq \oplus H(\sum_{k \in [d]} s_k)$ to the receiver for an appropriate hash function $H$, and the receiver recovers $\vecq$ if and only if $\sum_{k \in [d]} r_k = \sum_{k \in [d]} s_k$. This follows because if $\mathsf{dist}(\vecq, \vecw) \le \delta$, then for every $k \in [d]$, we have $w_k \in [q_k - \delta, q_k + \delta]$, implying $r_k = s_k$ by correctness of OPPRF and thus $\sum_{k \in [d]} r_k = \sum_{k \in [d]} s_k$. Otherwise, if there exists some $k^* \in [d]$ such that $w_{k^*} \notin [q_{k^*} - \delta, q_{k^*} + \delta]$, then $r_{k^*} \neq s_{k^*}$ except with negligible probability and thus $\sum_{k \in [d]} r_k \neq \sum_{k \in [d]} s_k$.
Their scheme relies exclusively on symmetric-key operations and achieves substantial efficiency improvements over prior AHE-based approaches~\cite{van2024fuzzy,gao2025efficient}.}

{
However, the above construction cannot be extended to the setting where the receiver holds multiple points. 
For example, suppose the receiver holds two points $\vecw, \vecw'$ such that in some dimension~$k$, both $w_k, w_k' $ fall within the same interval $[q_k - \delta, q_k + \delta]$ programmed by the sender. 
By the correctness of OPPRF, the receiver will obtain the same $r_k$ on these two evaluations.
This leads to leakage when there is no match with $\vecq$:
the receiver can infer that the sender has encoded an interval covering both $w_k$ and $w_k'$ in dimension $k$.
Such a collision leaks the distribution of the sender's input.
To extend this OPPRF-based fuzzy matching to fuzzy PSI, where both the sender and receiver hold multiple points, \cite{van2025} prevents this leakage by imposing two-sided assumptions on both parties.
}

{
Alternative constructions~\cite{van2024fuzzy,dang2025ccs} address the above issue with expensive AHE, thereby keeping the OPPRF outputs hidden from the receiver. 
In the following, we show how to achieve multi-point fuzzy matching without relying on heavy public-key primitives.
}

\subsection{Multi-point Fuzzy Matching without Assumptions}
\label{Sec: overview mq fuzzy matching}

{We propose two variants of multi-point fuzzy matching: $1$-to-$n$ fuzzy matching, where the receiver holds $n$ arbitrarily distributed points, and $m$-to-$1$ fuzzy matching, where the sender holds $m$ arbitrarily distributed points.}

\subsubsection{$1$-to-$n$ Fuzzy Matching}\label{Sec: overview sender assumption}

{We first introduce our design for the $1$-to-$n$ fuzzy matching protocol, in which the receiver holds multiple arbitrary points without any additional assumptions. To address the partial-matching leakage arising from individual dimensions in~\cite{van2025}, our key insight is to secret-share the OPPRF output for each dimension so that neither party learns the OPPRF outputs in the clear. We instantiate this approach using shared-output OPPRF (so-OPPRF)~\cite{yang2026} (see Section~\ref{Sec: pre so-opprf}). Below, we first present a strawman construction that exhibits subtle privacy leakage, and then show how to eliminate it using additional techniques.}

{\textbf{Strawman Construction.} 
Let the sender hold a single point~$\vecq$ and the receiver hold multiple points $W = \{\vecw_i\}_{i \in [n]}$.
Similar to~\cite{van2025}, for each dimension $k$, the sender samples a random $s_k$ and programs the so-OPPRF with key-value pairs $\{(q_k + t, s_k)\}_{t \in [-\delta,\delta]}$.
Then, for each $\vecw_i \in W$ and $k \in [d]$, the receiver evaluates the so-OPPRF on input $w_{i,k}$, which outputs shares $r_{i,k}^\SSS$ and $r_{i,k}^\RRR$ to the two parties, respectively.
The secret-sharing form avoids the receiver inferring the sender's input from OPPRF outputs of some dimensions. 
After that, both parties locally aggregate their shares across all dimensions by computing $r_i^\SSS = \sum_{k \in [d]} r_{i,k}^\SSS$ and $r_i^\RRR = \sum_{k \in [d]} r_{i,k}^\RRR$, respectively.
Finally, the sender sends all $r_i^\SSS$ and $\vecq \oplus H(\sum_{k \in [d]} s_k)$ to the receiver for an appropriate hash function $H$.
The receiver subsequently reconstructs $r_i = r_i^\SSS + r_i^\RRR$ and recovers $\vecq$ if and only if there exists some $r_i = \sum_{k \in [d]} s_k$.}

\begin{figure}[!t]
    \centering

    \begin{subfigure}[b]{0.235\textwidth}
        \centering
        \resizebox{0.95\textwidth}{!}{
        \begin{tikzpicture}[scale=0.98, font=\footnotesize]

            \draw[step=1cm, black!50, dashed] (0,0) grid (3.2,3.2);

            \draw[->, thick, black] (0,0) -- (3.2,0);
            \draw[->, thick, black] (0,0) -- (0,3.2);

            \draw[step=1cm, black!50, dashed] (0,3) -- (3,3);
            \draw[step=1cm, black!50, dashed] (3,0) -- (3,3);

            \node[overlay, below] at (3.0,-0.05) {$x$};
            \node[overlay, left]  at (-0.05,3.0) {$y$};

            \draw[decorate, decoration={brace, amplitude=5pt}, overlay]
                (1.0, 3.1) -- (2.0, 3.1)
                node[midway, above=5pt] {$2\delta$};

            \filldraw[fill={rgb,255:red,51;green,204;blue,0}, fill opacity=0.2,
                      draw={rgb,255:red,51;green,204;blue,0}, line width=0.9pt]
                (1.3, 1.2) rectangle (2.3, 2.2);

            \fill (1.8, 1.7) circle (1.5pt) node[left, opacity=1] {$\vecq$};
            \fill (2.7, 2.7) circle (1.5pt) node[left]            {$\vecw_{i_1}$};
            \fill (1.1, 0.7) circle (1.5pt) node[right]           {$\vecw_{i_2}$};

            \node[green!50!black] at (1.55,-0.5) {No Leakage};

        \end{tikzpicture}
        }
    \end{subfigure}
    \begin{subfigure}[b]{0.235\textwidth}
        \centering
        \resizebox{0.95\textwidth}{!}{
         \begin{tikzpicture}[scale=0.98, font=\footnotesize]

            \draw[step=1cm, black!50, dashed] (0,0) grid (3.2,3.2);

            \draw[->, thick, black] (0,0) -- (3.2,0);
            \draw[->, thick, black] (0,0) -- (0,3.2);

            \draw[step=1cm, black!50, dashed] (0,3) -- (3,3);
            \draw[step=1cm, black!50, dashed] (3,0) -- (3,3);

            \node[overlay, below] at (3.0,-0.05) {$x$};
            \node[overlay, left]  at (-0.05,3.0) {$y$};

            \draw[decorate, decoration={brace, amplitude=5pt}, overlay]
                (1.0, 3.1) -- (2.0, 3.1)
                node[midway, above=5pt] {$2\delta$};

            \filldraw[fill=red!30, fill opacity=0.3, draw=red, line width=0.9pt]
                (1.3, 1.2) rectangle (2.3, 2.2);

            \fill (1.8,   1.7 ) circle (1.5pt) node[left,  opacity=1] {$\vecq$};
            \fill (2.45, 1.4 ) circle (1.5pt) node[right]            {$\vecw_{i_2}$};
            \fill (2.45, 1.80) circle (1.5pt) node[right]            {$\vecw_{i_1}$};

            \draw[dashed, draw=blue] (2.45, 0.5) -- (2.45, 2.6);

            \node[red] at (1.55,-0.5) {Partial Leakage};

        \end{tikzpicture}
        }
    \end{subfigure}

    \vspace{-9pt}
    \caption{A counterexample for the strawman construction. In the right case, $\vecw_{i_1}$ and $\vecw_{i_2}$ share the same $x$-coordinate and both fall within the programmed interval of $\vecq$ along the $y$-dimension, and are therefore mapped to the same value.}
    \label{Fig: counter example}
\end{figure}

\textbf{Partial-matching Leakage.}\label{Sec: overview avoid leakage} {While the above construction appears correct, it introduces subtle privacy leakage. Specifically, although the so-OPPRF outputs remain secret-shared in each dimension, additional information is leaked once the aggregated values $r_i$ are revealed to the receiver. As illustrated in Figure~\ref{Fig: counter example} (right), consider a two-dimensional space in which two distinct receiver points $\vecw_{i_1}$ and $\vecw_{i_2}$ share the same $x$-coordinate and both lie within the programmed interval of $\vecq$ along the $y$-coordinate. During the so-OPPRF evaluation, the two points are mapped to the same random value for the $x$-coordinate. Since $\vecw_{i_1}$ and $\vecw_{i_2}$ fall within the same programmed interval along the $y$-dimension, they are mapped to the same programmed value as well. Consequently, after reconstructing from the shares $r^\SSS_{i_1}$ and $r^\SSS_{i_2}$, the receiver obtains two identical aggregate values, i.e., $r_{i_1} = r_{i_2}$. This collision reveals that there exists a sender point $\vecq$ that is close to both $\vecw_{i_1}$ and $\vecw_{i_2}$ in certain dimensions, even though neither receiver point actually matches $\vecq$.}

The root cause of this leakage is that the aggregated values $r_i$ for non-matching receiver points may coincide, revealing partial information about the sender's input. To eliminate this leakage, we observe that it suffices to re-randomize the secret shares of $r_i$ for non-matching inputs, while leaving those of matching inputs unchanged to preserve correctness.

\textbf{Enhance Security.} {To realize this, we introduce a new building block called equality-conditional randomization (\textsf{EqRand}), which randomizes values conditioned on an equality check. Concretely, it takes two pairs $(c_1, x_1)$ from the sender and $(c_2, x_2)$ from the receiver as input, and outputs random shares $(y_1, y_2)$ to the respective parties such that $y_1 + y_2 = x_1 + x_2$ if $c_1 = c_2$; otherwise, $y_1 + y_2$ is uniformly random.}

It remains to define an appropriate matching condition. To determine whether a receiver point $\vecw_i$ matches the sender point $\vecq$, we additionally program a zero indicator $\mathbf{0}$ together with the target value $s_{k}$, encoding each key-value pair as $\mathbf{0} \Vert s_{k}$. If $\vecw_i$ matches the sender point $\vecq$, it is mapped to $\mathbf{0}$ in every dimension, so the aggregated indicator remains $\mathbf{0}$. Otherwise, the aggregated indicator is uniformly random.

Specifically, the sender programs the so-OPPRF with key-value pairs\footnote{Here, we encode the dimension index $k$ into each key to allow a single so-OPPRF instance for all dimensions, though it can also be instantiated separately for each dimension as in~\cite{van2025}.} $\{(k \Vert q_{k}+t,\ \mathbf{0} \Vert s_{k})\}_{k\in[d],t\in[\text{-}\delta,\delta]}$, and the receiver evaluates it on each input $k \Vert w_{i,k}$ as before.
The two parties obtain packed shares $e_{i,k}^\RRR \Vert r_{i,k}^\RRR$ and $e_{i,k}^\SSS \Vert r_{i,k}^\SSS$, respectively, where $e_{i,k}^\RRR$ and $e_{i,k}^\SSS$ are shares of the indicator. They then aggregate these shares across all dimensions $k \in [d]$ to obtain $(e_{i}^\RRR, r_{i}^\RRR)$ and $(e_{i}^\SSS, r_{i}^\SSS)$. By the correctness of so-OPPRF, if $\vecw_i$ matches $\vecq$, then $e_{i}^\RRR$ and $e_{i}^\SSS$ are the shares of $\mathbf{0}$, i.e., $e_{i}^\RRR + e_{i}^\SSS = \mathbf{0}$. Otherwise, the shares $r_{i}^\RRR$ and $r_{i}^\SSS$ must be re-randomized to prevent collisions. The sender and receiver therefore invoke \textsf{EqRand} with inputs $(-e_{i}^\SSS, r_{i}^\SSS)$ and $(e_{i}^\RRR, r_{i}^\RRR)$, using the equality $e_{i}^\RRR = -e_{i}^\SSS$ as the matching condition, and obtain fresh shares $v_{i}^\RRR$ and $v_{i}^\SSS$, respectively.

Consequently, the sender can then send $v_{i}^\SSS$ without incurring leakage, and the receiver reconstructs $v_i = v_{i}^\RRR + v_{i}^\SSS$. Finally, as described above, the sender masks $\vecq$ under the key $s$ using a one-time pad, and the receiver attempts to unmask using each $v_i$, recovering $\vecq$ only when $\vecw_i$ matches $\vecq$.

\subsubsection{$m$-to-$1$ Fuzzy Matching}
\label{Sec: overview receiver assumption}

This mirrored path leads to a symmetric setting, where the sender holds multiple points $\vecq_j \in Q$ and the receiver holds a single point $\vecw$. Since our multi-point fuzzy matching primitive aims to be assumption-free, directly letting the sender program all its points into the so-OPPRF, as in the $1$-to-$n$ setting, may cause collisions among the key-value pairs. Our key insight is to switch the roles in the so-OPPRF routine: the sender evaluates the so-OPPRF instead of programming it.

This role switching yields a more natural construction. A key difference from the $1$-to-$n$ setting is that, for each sender point $\vecq_j$, the two parties obtain secret shares $e_{j}^\RRR$ and $e_{j}^\SSS$, which are sufficient to jointly determine whether $\vecq_j$ matches the receiver point $\vecw$. As a result, the random target value $s_k$ is no longer needed, and the receiver only programs the indicator $\mathbf{0}$.

Concretely, the receiver programs the so-OPPRF with key-value pairs $\{( k \Vert w_{k} + t,\ \mathbf{0})\}_{k \in [d],t \in [-\delta, \delta]}$, and the sender evaluates it on inputs $\{k \Vert q_{j,k}\}_{j \in [m], k \in [d]}$. The two parties obtain shares $e_{j,k}^\RRR$ and $e_{j,k}^\SSS$, and aggregate them across dimensions to obtain $e_{j}^\RRR = \sum_{k\in[d]} e_{j,k}^\RRR$ and $e_{j}^\SSS = \sum_{k\in[d]} e_{j,k}^\SSS$. By the correctness of so-OPPRF, the sender point $\vecq_j$ matches the receiver point if and only if $e_{j}^\RRR + e_{j}^\SSS = \mathbf{0}$, equivalently, $e_{j}^\RRR = -e_{j}^\SSS$. Then, transferring matching elements becomes considerably simpler. The two parties invoke the private equality test $\Func[Eq]$ to compare $e_{j}^\RRR$ and $-e_{j}^\SSS$, from which the receiver obtains a bit $b_{j}$. If $b_{j} = 0$, then $\vecq_j$ has no match; otherwise, $\vecq_j$ matches the receiver point $\vecw$. They then invoke $\Func[OT]$ with $b_j$ as the selection bit to transfer $\vecq_j$ to the receiver.

A notable advantage of this construction is that, since the target value $s_k$ is eliminated and the shares $e_{j}^\RRR$ and $e_{j}^\SSS$ remain private throughout, no secret value is ever revealed to either party. Consequently, invoking \textsf{EqRand} is unnecessary. This stands in contrast to the $1$-to-$n$ setting in Section~\ref{Sec: overview sender assumption}, where the aggregated shares are revealed to the receiver and \textsf{EqRand} is required to prevent the receiver from observing identical outputs across multiple inputs.

\subsection{Fuzzy PSI under One-sided Assumptions}
\label{Sec: overview fuzzy psi}

We now turn to the full $m$-to-$n$ fuzzy PSI construction. Having established two assumption-free multi-point fuzzy matching protocols in the preceding section, extending each to the $m$-to-$n$ setting follows the standard paradigm: imposing a separation assumption on the single-input party to allow it to program additional points without collision. To satisfy this requirement, we leverage spatial hashing techniques~\cite{dang2025ccs,van2024fuzzy,van2025}, partitioning the space into cells of side length $2\delta$ and assuming that each cell contains at most one point from the single-input party, while the other party's inputs remain unconstrained. We refer to this as the \textbf{unique cell assumption} (see Section~\ref{sec:pre:assumptions}). For each point, we define $\CCC^\SSS_j$ (resp.\ $\CCC^\RRR_i$) as the cell identifier of the cell containing $\vecq_j$ (resp.\ $\vecw_i$). 

Depending on which party satisfies the unique cell assumption, we distinguish two cases: the \textit{sender-sided assumption} and the \textit{receiver-sided assumption}, each giving rise to a corresponding $m$-to-$n$ fuzzy PSI construction.

\textbf{Sender-sided unique cell assumption.} In this case, each cell contains at most one sender point. Each sender point $\vecq_j$ is therefore associated with a unique cell identifier $\CCC^\SSS_j=\mathsf{cell}_{2\delta}(\vecq_j)$, which the sender concatenates with its programmed keys to build the key-value pairs as $\{ (\CCC^\SSS_j \Vert k \Vert q_{j,k}+t,\ \mathbf{0} \Vert s_{j,k}) \}_{j\in [m],k\in[d],t\in [\text{-}\delta,\delta]}$, ensuring that no two key-value pairs collide. Accordingly, as mentioned in~\cite{van2024fuzzy,van2025}, the receiver is supposed to evaluate the so-OPPRF at inputs $\{ \CCC^\RRR_{i,z} \Vert k \Vert w_{i,k} \}_{i\in [n],k\in[d],z\in [2^d]}$ to maintain correctness, where $\{\CCC^\RRR_{i,z}\}_{z\in [2^d]} = \mathsf{neigh}_{2\delta}(\vecw_i)$ (see Section~\ref{sec:pre:assumptions}) denotes the identifiers of all neighboring cells for $\vecw_i$. This is because two mutually matching points do not necessarily reside in the same cell; they may also fall into neighboring cells. Then, after evaluation, the two parties obtain packed shares $e_{i,k,z}^\RRR \Vert r_{i,k,z}^\RRR$ and $e_{i,k,z}^\SSS \Vert r_{i,k,z}^\SSS$, respectively. The remaining steps follow the same process in multi-point fuzzy matching.
For the detailed construction, we refer the reader to Section~\ref{Sec: sender assumption}.

\textbf{Receiver-sided unique cell assumption.} In this case, each cell contains at most one receiver point. Symmetrically, the receiver concatenates the cell identifier $\CCC^\RRR_i$ with its programmed keys to form the key-value pairs $\{ (\CCC^\RRR_i \Vert k \Vert w_{i,k}+t,\ \mathbf{0}) \}_{i\in [n],k\in[d],t\in [\text{-}\delta,\delta]}$, ensuring collision-freedom. The sender then evaluates the so-OPPRF at inputs $\{ \CCC^\SSS_{j,z} \Vert k \Vert q_{j,k} \}_{j\in [m],k\in[d],z\in [2^d]}$, where $\{\CCC^\SSS_{j,z}\}_{z\in [2^d]}$ denotes all candidate cell identifiers for $\vecq_j$. Then, after evaluation, the two parties obtain shares $e_{j,k,z}^\RRR$ and $e_{j,k,z}^\SSS$, respectively. The remaining steps also follow the same process in multi-point fuzzy matching.
For the detailed construction, we refer the reader to Section~\ref{Sec: receiver assumption}.

\subsection{Optimizations from Prefix Trie Techniques without Superlinear Overhead}
\label{Sec: overview prefix trie}

In this section, we describe how to non-trivially incorporate prefix-trie techniques~\cite{dang2025ccs,van2025} into our protocol to achieve $O(\log\delta)$ complexity in $\delta$, in contrast to the superlinear $O((\log\delta)^d)$ complexity incurred by prior works~\cite{dang2025ccs,van2025}. For clarity, we focus on the construction under the receiver-sided assumption; the sender-sided case follows analogously and is deferred to Appendix~\ref{appendix: other protocols for sender}.

Dang et al.~\cite{dang2025ccs} and van Baarsen and Pu~\cite{van2025} introduced an optimization that represents each interval $[w_{i,k}-\delta,\, w_{i,k}+\delta]$ via $\mu = O(\log \delta)$ prefixes, reducing the number of programmed points from $O(\delta)$ to $O(\log \delta)$. As a trade-off, however, the sender must evaluate the OPPRF along all $\mu$ prefix paths of $q_{j,k}$ to maintain correctness, obtaining outputs $\{e_{j,k,h}\}_{h\in[\mu]}$. For each dimension $k$, since the sender has no knowledge of which prefix is valid, it must exhaustively search all combinations $\sum_{k\in[d]} e_{j,k,h_k}$ for $h_k \in [\mu]$ and compare each against the target value, yielding a complexity\footnote{This can be reduced to $O((\log\delta)^{d/2})$ via the meet-in-the-middle technique, but the complexity still grows rapidly for large $\delta$.} of $O((\log \delta)^d)$ (or $O((2\log \delta)^d)$ for $L_p$ distance~\cite{dang2025ccs}).

\textbf{Dimension-by-dimension Filtering.}
Our key observation is that, for each dimension, at most one prefix can match the target interval, rendering all other prefixes unnecessary. This allows both parties to prune the search space on a per-dimension basis by discarding non-matching prefixes, rather than enumerating all prefix combinations across dimensions as in~\cite{van2025,dang2025ccs}. Since the exponential blowup in prior works arises precisely from taking the Cartesian product of per-dimension prefix sets, retaining only one candidate prefix per dimension collapses the search space from $O((\log\delta)^d)$ combinations to $O(d \cdot \log\delta)$.

Specifically, we denote the prefixes of the interval centered at $w_{i,k}$ as $\{w_{i,k,h}\}_{h\in[\mu]} = \mathsf{PxTrie}(w_{i,k}-\delta,\, w_{i,k}+\delta)$, and the prefix paths of $q_{j,k}$ as $\{q_{j,k,h}\}_{h\in[\mu]} = \mathsf{PxPath}(q_{j,k},\delta)$, where $\mu = O(\log \delta)$ denotes the number of prefixes. The sender evaluates the so-OPPRF at inputs $\CCC^\SSS_{j,z}\Vert k \Vert q_{j,k,h}$, and the parties obtain shares $e_{j,k,z,h}^\RRR$ and $e_{j,k,z,h}^\SSS$. According to Lemma~\ref{lemma: prefix}, if $q_{j,k}$ falls within $[w_{i,k}-\delta,\, w_{i,k}+\delta]$, there exists exactly one $h^*\in[\mu]$ such that $e_{j,k,z,h^*}^\RRR + e_{j,k,z,h^*}^\SSS = \mathbf{0}$.

To extract this unique index $h^*$ and drop out all other non-matching indices, we introduce a new primitive termed conditional selection $\FFF^{f,\mu}_{\mathsf{ConSel}}$ in {Figure~\ref{Func: eq sel}}. Concretely, it takes two tuples of $\mu$ secret-shared values $(e^\RRR_1,\ldots,e^\RRR_{\mu})$ and $(e^\SSS_1,\ldots,e^\SSS_{\mu})$ as input, and outputs two random shares $(y_1, y_2)$ such that $y_1 + y_2 = e^\RRR_{h^*} + e^\SSS_{h^*}$ if there exists a unique index $h^*\in[\mu]$ with $f(e^\RRR_{h^*} , e^\SSS_{h^*}) = 1$, and $y_1 + y_2$ is uniformly random otherwise. Here, we instantiate the condition function $f$ as an equality check $\mathsf{Eq}$ and use the equality $e_{j,k,z,h}^\RRR = -e_{j,k,z,h}^\SSS$ as the condition.

Consequently, for each dimension $k$, both parties input their per-prefix shares $\{e_{j,k,z,h}^\RRR\}_{h\in[\mu]}$ and $\{-e_{j,k,z,h}^\SSS\}_{h\in[\mu]}$ to $\FFF^{\mathsf{Eq},\mu}_{\mathsf{ConSel}}$ and receive a single pair of fresh shares $(v_{j,k,z}^\RRR, v_{j,k,z}^\SSS)$. Lemma~\ref{lemma: prefix} ensures that at most one prefix matches per dimension, so this functionality is well-defined. After this process, each dimension retains exactly one candidate prefix, and therefore yields just one combination across all dimensions, in contrast to the $O((\log \delta)^d)$ combinations in prior works.
The complexity of \Func[ConSel] is $O(\log\delta)$ for each dimension, and $O(d \cdot \log\delta)$ across all dimensions.

Given the shares $(v_{j,k,z}^\RRR, v_{j,k,z}^\SSS)$, with the prefix index eliminated, the remaining steps follow the same procedure as in the non-optimized construction of Section~\ref{Sec: overview fuzzy psi}.

%% file: method.tex
\section{Fuzzy PSI under Unique Cell Assumptions}
\label{Sec: fuzzy psi 2delta}

In this section, we present our protocols for general $L_{p\in[1,\infty]}$ distances under one-sided \textit{unique cell} assumptions.
That is, only one party (either the sender \textit{or} the receiver) is required to have at most one point in each cell, while the other party's set may follow an arbitrary distribution.

\begin{figure}[!h]
\begin{nffunc}{\ensuremath{\FFF^f_{\mathsf{ConRand}}}}

\noindent \textbf{Parameters:} Sender \SSS and Receiver \RRR. Condition predicate function $f(\cdot,\cdot)$ where $f$ can be an equality or a comparison function.

\noindent \textbf{Functionality:}

\begin{enumerate}

    \item Wait for input $(e^\SSS, v^\SSS)$ from \SSS and $(e^\RRR, v^\RRR)$ from \RRR.

    \item Compute $b:= f(e^\SSS,e^\RRR)$

    \item Generate random shares $z^\SSS, z^\RRR$ such that $z^\SSS + z^\RRR = v^\SSS + v^\RRR$ if $b=1$, and sample $z^\SSS, z^\RRR$ uniformly at random otherwise.

    \item Output $z^\SSS$ to \SSS and $z^\RRR$ to \RRR
    
\end{enumerate}

\end{nffunc}
\caption{Functionality of conditional randomization.}
\label{Func: cond rand}
\end{figure}

\subsection{Building Block}
\label{Sec: cond rand}

Before presenting our fuzzy PSI construction, we introduce a building block for preventing information leakage. Specifically, it takes two tuples $(e^\SSS, v^\SSS)$ and $(e^\RRR, v^\RRR)$ as input from the respective parties, and randomizes the values $v^\SSS$ and $v^\RRR$ depending on whether the conditions $e^\SSS$ and $e^\RRR$ satisfy a predefined criterion $f$. The ideal functionality of Conditional Randomization is given in Figure~\ref{Func: cond rand}.

\begin{figure}[!h]
    \begin{nfprot}{\ensuremath{\Pi^f_{\mathsf{ConRand}}}}
    \noindent \textbf{Parameters:} Functionality \Func[MUX], \Func[ssPEQT] and \Func[ssCMP]. Threshold $\delta^p$
    
    \noindent \textbf{Input:} \SSS inputs $(e^\SSS, v^\SSS)$ and \RRR inputs $(e^\RRR, v^\RRR)$.
    
    \noindent \textbf{Protocol:}
    \begin{enumerate}
        \item If $f=\mathsf{Eq}$, \SSS and \RRR invoke \Func[ssPEQT], where \SSS inputs $e^\SSS$ and \RRR inputs $e^\RRR$. \SSS receives $b^\SSS$, and \RRR receives $b^\RRR$ where $b^\RRR \xor b^\SSS  = \mathbf{1}\{e^\RRR =  e^\SSS\}$.

        \item If $f=\mathsf{Cmp}$, \SSS and \RRR invoke \Func[ssCMP], where \SSS inputs $e^\SSS$ and \RRR inputs $e^\RRR$. \SSS receives $b^\SSS$, and \RRR receives $b^\RRR$ where $b^\RRR \xor b^\SSS  = \mathbf{1}\{e^\RRR+ e^\SSS \le \delta^p\}$. 

        \item \SSS randomly samples $r^\SSS$ and \RRR randomly samples $r^\RRR$.
        
        \item \SSS and \RRR invoke \Func[MUX], where \SSS inputs $(1 \xor b^\SSS,  r^\SSS)$ and \RRR inputs $(b^\RRR, r^\RRR)$. \SSS receives $m^\SSS$ and \RRR receives $m^\RRR$.

        \item \SSS outputs $z^\SSS := v^\SSS + m^\SSS$ and \RRR outputs $z^\RRR := v^\RRR + m^\RRR$.

    \end{enumerate}
    
    \end{nfprot}
    \vspace{0.5em}
    \caption{Protocol of conditional randomization.}
    \label{Prot: con rand}
\end{figure}

We realize this functionality efficiently via a customized secret-shared multiplexer protocol~\cite{rathee2020cryptflow2}. The conditional function $f(\cdot,\cdot)$ returns a boolean result based on a specified condition, and can be instantiated with either an equality or a comparison function. 
Depending on the choice of $f$, we refer to this functionality as Equality-conditional Randomization ($\mathsf{EqRand}$) or Comparison-conditional Randomization ($\mathsf{CmpRand}$) in the remainder of this paper.
The protocol is given in Figure~\ref{Prot: con rand} and we defer the proof in Appendix~\ref{appendix: proof con rand}.

\subsection{Protocols for Sender-sided Setting}
\label{Sec: sender assumption}

Since we have already presented the main idea for the $L_\infty$ distance in Section~\ref{Sec: overview avoid leakage} and Section~\ref{Sec: overview fuzzy psi}, we directly provide the detailed protocol in Figure~\ref{Prot: sender 2delta Linf}. Its security proof is deferred to Appendix~\ref{appendix: proof sender 2delta Linf}.

\begin{theorem}
\label{thm: sender 2delta Linf}
    The protocol $\Pi_\mathsf{FPSI}^{L_\infty}$ in Figure~\ref{Prot: sender 2delta Linf} securely realizes the functionality \Func[FPSI] for $L_\infty$ distance against semi-honest adversaries in the (\Func[so\text{-}OPPRF],\ensuremath{\FFF^{\mathsf{Eq}}_{\mathsf{ConRand}}})-hybrid model, assuming $H$ is a random oracle.
\end{theorem}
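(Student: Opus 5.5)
We prove correctness first and then build a simulator for each corrupted party in the hybrid model. The protocol follows the overview of Section~\ref{Sec: overview fuzzy psi}. The sender programs the so-OPPRF with $\{(\CCC^\SSS_j\Vert k\Vert q_{j,k}+t,\ \mathbf{0}\Vert s_{j,k})\}$. The receiver evaluates it on $\{\CCC^\RRR_{i,z}\Vert k\Vert w_{i,k}\}_{z\in[2^d]}$. Both parties aggregate their shares over $k$ and invoke $\FFF^{\mathsf{Eq}}_{\mathsf{ConRand}}$ on $(-e^\SSS_{i,z},r^\SSS_{i,z})$ and $(e^\RRR_{i,z},r^\RRR_{i,z})$. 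The sender then sends the shares $v^\SSS_{i,z}$ together with the masks $\vecq_j\oplus H(s_j)$, where $s_j=\sum_k s_{j,k}$. The receiver outputs $\vecq_j$ whenever some $v_{i,z}$ unmasks a ciphertext correctly; we assume the plaintext carries a redundancy tag or hash check so that a successful unmask can be recognised.

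\textbf{Correctness.} The sender-sided unique cell assumption makes the keys collision-free, because distinct $j$ carry distinct $\CCC^\SSS_j$. Now suppose $\|\vecq_j-\vecw_i\|_\infty\le\delta$. Then $\CCC^\SSS_j\in\mathsf{neigh}_{2\delta}(\vecw_i)$, since every cell meeting the $\delta$-ball is listed. For that $z$, every coordinate key $\CCC^\SSS_j\Vert k\Vert w_{i,k}$ is programmed. So $e_{i,z}=\mathbf{0}$, $r_{i,z}=s_j$, ConRand preserves the sum, and the receiver recovers $\vecq_j$.

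Conversely, suppose some evaluation point $\CCC^\RRR_{i,z}\Vert k\Vert w_{i,k}$ is not programmed. Then $F$ outputs a uniformly random indicator share there, so $e_{i,z}\neq\mathbf{0}$ except with probability $2^{-\lambda}$. In that case ConRand outputs a fresh uniform $v_{i,z}$. It hits some $s_j$, or passes the check for a wrong ciphertext, only with negligible probability, bounded by a union bound over $2^d n m$ pairs given an indicator of length about $\lambda+\log(2^dnm)$.

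The remaining case is that all $d$ keys are programmed while $\vecq_j$ and $\vecw_i$ are far apart. Programmed keys for the same cell $\CCC$ all belong to the unique $j$ with $\CCC^\SSS_j=\CCC$. Hence all $d$ keys come from that single $j$, which forces $|w_{i,k}-q_{j,k}|\le\delta$ for every $k$, so this case cannot occur. The prefix $k\Vert$ in each key ensures different dimensions never cross-match.

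\textbf{Corrupt sender.} The sender's view is its so-OPPRF output shares and $\OOO^F$, and its ConRand output shares. The functionality samples all of these as fresh uniform shares, independent of $W$. The simulator therefore samples uniform values of the correct lengths $n2^dd$ and $n2^d$, and answers oracle queries honestly. The two views are identically distributed.

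\textbf{Corrupt receiver.} This is the main step. The simulator receives $W$ and the output $Z=Q\cap_\delta W$. It simulates the so-OPPRF output shares $e^\RRR,r^\RRR$ as uniform and the ConRand outputs $v^\RRR_{i,z}$ as uniform. It must then produce the sender's message $v^\SSS_{i,z}$ and the $m$ ciphertexts. For each matched pair, meaning each $(i,z)$ with $\CCC^\RRR_{i,z}=\mathsf{cell}(\vecq)$ for some $\vecq\in Z$ at distance at most $\delta$ from $\vecw_i$, it picks a random $s$ for that $\vecq$ and sets $v^\SSS_{i,z}=s-v^\RRR_{i,z}$. Different $(i,z)$ that match the same $\vecq$ reuse the same $s$, which mirrors the real protocol. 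All other $v^\SSS_{i,z}$ are uniform. Ciphertexts for $\vecq\in Z$ are $\vecq\oplus H(s)$; the remaining $m-|Z|$ ciphertexts are uniform strings, placed in a random order.

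Indistinguishability rests on three facts. First, ConRand outputs fresh uniform values for every non-matching $(i,z)$, so no collisions are visible; this is exactly the partial-matching leakage that $\FFF^{\mathsf{Eq}}_{\mathsf{ConRand}}$ removes. Second, the correctness argument shows that matching $(i,z)$ occur precisely where the simulator reproduces the sum $s_j$. Third, for each unmatched $\vecq_j$ the key $s_j$ is uniform and never appears in the receiver's view, so in the random oracle model $H(s_j)$ is uniform unless the receiver queries $s_j$, which happens with probability $\mathrm{poly}/|\FF|$. The delicate part is making the hybrid argument rigorous: we first replace unmatched ciphertexts by random strings, with the oracle-query bad event, and then condition on the negligible false-positive event from the correctness analysis.
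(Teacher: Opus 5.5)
Your proposal is correct and follows essentially the same route as the paper's proof. Correctness reduces every non-matching $(i,z)$ to an unprogrammed key via sender-side cell uniqueness, so the equality-conditional randomization outputs a fresh uniform value. The corrupted sender is handled by a uniform-share simulator, and the corrupted receiver by the same simulator as the paper's: it sets $v^{\mathcal{S}}_{i,z}=s_j-v^{\mathcal{R}}_{i,z}$, with one shared $s_j$ per output point, on exactly the matched pairs, and uses random strings for unmatched ciphertexts via the random-oracle argument. The only differences are cosmetic. For example, your union bound for the indicator length is looser than the paper's $\ell=\lambda+d+\log n$, since the event $e_{i,z}=0$ depends only on $(i,z)$ and not on $j$.
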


\textbf{$L_p$ distance.} We here focus on how to support arbitrary $L_p$ distances. Our key observation is the fact that $\mathsf{dist}_p(\vecw_i, \vecq_j) \le \delta$ implies $\mathsf{dist}_\infty(\vecw_i, \vecq_j) \le \delta$, and we therefore could reuse most parts of the $L_\infty$ construction with only minor modifications.
In the $L_\infty$ setting, the sender programs the value $0 \Vert s_{j,k}$, where $0$ serves as an indicator of whether $\vecw_i$ is close to some $\vecq_j$, as described in Section~\ref{Sec: overview avoid leakage}. For $L_p$ distance, the sender instead encodes $|t|^p \Vert s_{j,k}$, where $|t|^p$ captures the per-dimension distance contribution when $w_{i,k}- q_{j,k} = t$ for some $t \in [-\delta, \delta]$. The receiver proceeds as before, evaluating the so-OPPRF at $\CCC^{\RRR}_{i,z} \Vert k \Vert w_{i,k}$. After aggregating across all dimensions, $d^\RRR_{i,z}$ and $d^\SSS_{i,z}$ are secret shares of the $p$-th power of the distance between $\vecw_i$ and the candidate $\vecq_j$, provided they are potentially matched. 
To avoid the similar leakage mentioned in Section~\ref{Sec: overview avoid leakage}, we invoke the $\mathsf{CmpRand}$ to randomize shares $r^\RRR_{i,z}$ and $r^\SSS_{i,z}$ depending on the comparison condition $d^\RRR_{i,z} + d^\SSS_{i,z} \le \delta^p$. The remaining steps are identical to those in the $L_\infty$ setting, allowing the receiver to obtain all valid matches. 
We defer the detailed protocol in Figure~\ref{Prot: sender 2delta Lp} of Appendix~\ref{appendix: other sender 2delta Lp}.

\begin{figure}[t]
    \begin{nfprot}{\ensuremath{\Pi_\mathsf{FPSI}^{L_\infty}}\xspace}
        \noindent \textbf{Parameters:}  Sender \SSS and receiver \RRR with input sizes $m$ and $n$. Threshold \( \delta \). Bit length $\lambda^\prime = \lambda + d + \log mn$, $\ell = \lambda +d+ \log n$ and $u = \log\abs{\UU}$

        \noindent \textbf{Input:} \SSS inputs \( Q=\{\vecq_j\}_{j\in [m]} \in \mathbb{U}^{d\times m} \) and \RRR inputs \( W=\{\vecw_i\}_{i\in [n]} \in \mathbb{U}^{d\times n} \).

        \noindent \textbf{Protocol:}
        \begin{enumerate}
            \item For $j\in[m]$, \SSS computes $\CCC_j^\SSS = \mathsf{cell}_{2\delta}(\vecq_j)$. For $i\in [n]$, \RRR computes $\{ \CCC^{\RRR}_{i,z}\}_{z\in [2^d]} = \mathsf{neigh}_{2\delta}(\vecw_i)$
        
            \item \SSS sets $\mathsf{List} = \{ (\CCC^{\SSS}_{j}\Vert k \Vert q_{j,k} + t, {0^\ell} \Vert s_{j,k})\}_{j\in[m],k\in[d], t\in [-\delta,\delta]}$ and computes $s_{j} = {\sum}_{k\in[d]}s_{j,k}$ where $s_{j,k} \overset{\$}{\leftarrow} \FF_{2^\kappa}$.

            \item \RRR and \SSS invoke functionality \Func[so\text{-}OPPRF] where \SSS inputs $\mathsf{List}$ and \RRR inputs $\{\CCC^{\RRR}_{i,z} \Vert k \Vert w_{i,k}\}_{i\in[n],k\in[d],z\in[2^d]}$. \RRR receives $\{e^\RRR_{i,k,z}\Vert r^\RRR_{i,k,z}\}_{i\in[n],k\in[d],z\in[2^d]}$, and \SSS receives $\{e^\SSS_{i,k,z}\Vert r^\SSS_{i,k,z} \}_{i\in[n],k\in[d],z\in[2^d]}$ and $\OOO^F$.
            
            \item For $i\in[n],z\in[2^d]$, \RRR computes $e^\RRR_{i,z}={\sum}_{k\in[d]}e^\RRR_{i,k,z}$ and $r^\RRR_{i,z}={\sum}_{k\in[d]}r^\RRR_{i,k,z}$, and \SSS computes $e^\SSS_{i,z}={\sum}_{k\in[d]}e^\SSS_{i,k,z}$ and $r^\SSS_{i,z}={\sum}_{k\in[d]}r^\SSS_{i,k,z}$.

            \item For $i\in [n],z\in [2^d]$, \RRR and \SSS invoke \ensuremath{\FFF^{\mathsf{Eq}}_{\mathsf{ConRand}}} where \RRR inputs $(e^\RRR_{i,z}, r^\RRR_{i,z})$ and \SSS inputs $(-e^\SSS_{i,z}, r^\SSS_{i,z})$. \RRR receives $v^\RRR_{i,z}$ and \SSS receives $v^\SSS_{i,z}$.

            \item For $i\in[n],z\in[2^d]$, \SSS sends $v^\SSS_{i,z}$ to \RRR, who reconstructs $v_{i,z} = v^\SSS_{i,z} + v^\RRR_{i,z}$.

            \item \SSS computes $X = \{ H_{\lambda^\prime+du}(s_j)\oplus(0^{\lambda^\prime}\Vert \vecq_j)\}_{j\in[m]}$ and sends {shuffled} $X$ to \RRR.

            \item \RRR sets $I = \emptyset$ and parses $X$ as $\{ x_{j}\}_{j\in [m]}$.  For $j\in [m],i\in[n],z\in[2^d]$, if $H_{\lambda^\prime+du}(v_{i,z}) \oplus x_{j} = 0^{\lambda^\prime} \Vert \vecq$, \RRR updates $I = I \cup \{\vecq\}$.
            
        \end{enumerate}

    \end{nfprot}
    \vspace{0.5em}
    \caption{Protocol of fuzzy PSI for $L_\infty$ distance under sender-sided unique cell assumption.}
    \label{Prot: sender 2delta Linf}
\end{figure}

\subsection{Protocols for Receiver-sided Setting}
\label{Sec: receiver assumption}

Since we have presented our idea in Section~\ref{Sec: overview receiver assumption} and Section~\ref{Sec: overview fuzzy psi}, we directly give the protocol under the receiver-sided assumption for $L_\infty$ distance in Figure~\ref{Prot: receiver 2delta Linf}. The security proof is deferred to Appendix~\ref{appendix: proof receiver 2delta Linf}.

\begin{theorem}
\label{thm: receiver 2delta Linf}
    The protocol $\Pi_\mathsf{FPSI}^{L_\infty}$ in Figure~\ref{Prot: receiver 2delta Linf} securely realizes the functionality \Func[FPSI] for $L_\infty$ distance against semi-honest adversaries in the (\Func[so\text{-}OPPRF],\Func[Eq],\Func[OT])-hybrid model.
\end{theorem}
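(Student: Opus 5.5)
The proof is a standard simulation argument in the semi-honest hybrid model: first correctness, then one simulator for each corrupted party. We write out the protocol of Figure~\ref{Prot: receiver 2delta Linf} as the overview describes it. The receiver programs $\{(\CCC^\RRR_i\Vert k\Vert w_{i,k}+t,\ \mathbf{0})\}$. The sender evaluates at $\{\CCC^\SSS_{j,z}\Vert k\Vert q_{j,k}\}_{z\in[2^d]}$, where $\{\CCC^\SSS_{j,z}\}_z=\mathsf{neigh}_{2\delta}(\vecq_j)$. Both parties aggregate the shares $e_{j,z}$ across dimensions. They then call \Func[Eq] on $(e^\RRR_{j,z},-e^\SSS_{j,z})$, combine the resulting bits over $z$ into a selection bit $b_j$ held by the receiver, and use \Func[OT] to deliver $\vecq_j$ when $b_j=1$.

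\textbf{Correctness.} Suppose $\mathsf{dist}_\infty(\vecq_j,\vecw_i)\le\delta$. Then $\mathsf{cell}_{2\delta}(\vecw_i)\in\mathsf{neigh}_{2\delta}(\vecq_j)$, so for some $z$ we have $\CCC^\SSS_{j,z}=\CCC^\RRR_i$. Moreover $q_{j,k}\in[w_{i,k}-\delta,w_{i,k}+\delta]$ for every $k$. Each evaluation therefore hits a programmed key, and the unique cell assumption guarantees these keys are collision-free and well defined. The shares thus sum to $\mathbf{0}$ in every dimension, the aggregate is $\mathbf{0}$, and \Func[Eq] outputs $1$.

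Conversely, suppose no $\vecw_i$ is within distance $\delta$. Fix any $z$. Either $\CCC^\SSS_{j,z}$ is not programmed at all, or it equals $\CCC^\RRR_i$ for a unique $i$ (again by unique cell) and some dimension $k^*$ falls outside the interval. In both cases at least one evaluation of $F$ is a fresh uniform value in $\FF$, so the aggregate equals $\mathbf{0}$ with probability $1/|\FF|$. A union bound over the $2^dm$ pairs $(j,z)$ bounds the false-positive probability by $2^dm/|\FF|$. This is negligible given the bit length $\ell=\lambda+d+\log m$ or similar. Also, \Func[FPSI] outputs a set, so duplicate hits across different $z$ are harmless provided $b_j$ is an OR; if instead the protocol transfers per $(j,z)$, the receiver simply deduplicates.

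\textbf{Corrupt sender.} The sender's view consists of its \Func[so\text{-}OPPRF] output (shares $e^\SSS_{j,k,z}$ and oracle access $\OOO^F$), its \Func[Eq] outputs (if \Func[Eq] gives the output only to the receiver, the sender gets nothing), and nothing from \Func[OT]. In the ideal \Func[so\text{-}OPPRF], the sender's share is uniform and independent of the receiver's input. The oracle $\OOO^F$ is a random function consistent with the programmed list, and the sender can query it only at points of its own choosing. I would argue that the simulator can sample $F$ lazily as a random function, since the programmed points hide $\vecw$ in the sense that the functionality returns random values, not the program. This is the main subtlety: I need to check that the ideal functionality's $\OOO^F$ does not leak the programmed keys to the sender. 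If it does, the statement would be false, so I assume the \Func[so\text{-}OPPRF] definition intends $\OOO^F$ as the key material for the sender's own evaluation and that querying it at arbitrary points is pseudorandom. The simulator then outputs uniform shares and nothing else.

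\textbf{Corrupt receiver.} The receiver's view is its uniform \Func[so\text{-}OPPRF] shares, the \Func[Eq] bits $b_{j,z}$, and the OT outputs. The simulator is given $W$ and the ideal output $I$. It samples uniform shares and sets the bits to be consistent with which cells contain a match. I expect this step to be the main obstacle. The bits are indexed by $j$ and $z$, so their count and order would reveal sender structure unless they are randomly permuted or combined per $j$ only. The plan is to show that, with the sender shuffling its points, the view reduces to $m$ bits $b_j$ with exactly $|I|$ ones. The simulator then places the elements of $I$ in random positions among $m$ slots and gives the OT outputs accordingly. This requires $m$ to be public, which it is as a parameter. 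Indistinguishability follows by a hybrid over the correctness failure event, whose probability we bounded above by $2^dm/|\FF|$.
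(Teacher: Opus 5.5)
Your overall structure matches the paper's proof: correctness via a union bound over the $2^dm$ pairs $(j,z)$, a trivial corrupt-sender simulator, and a corrupt-receiver simulator that places $I$ into $m$ randomly permuted slots. Two steps are wrong as written, however.

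First, you put $\OOO^F$ on the wrong side. In \Func[so\text{-}OPPRF] the programmer receives $\OOO^F$, and in Figure~\ref{Prot: receiver 2delta Linf} the programmer is \RRR. The protocol's sender only evaluates, and it receives only its shares $e^\SSS_{j,k,z}$. Your fallback assumption, that querying $\OOO^F$ at arbitrary points is pseudorandom, is false. $F$ is pinned to the public value $0^\ell$ on every key of $\mathsf{List}$, so whoever holds the oracle can test any candidate $\CCC\Vert k\Vert x$ for membership. The corrupt-sender case is fine only because the sender never gets the oracle, so its view is just uniform shares. Conversely, your corrupt-receiver view omits $\OOO^F$, and the simulator must produce it. It can do so by sampling a random $F$ subject to $F(\CCC^\RRR_i\Vert k\Vert w_{i,k}+t)=0^\ell$ using $W$ alone, with $e^\RRR_{j,k,z}$ uniform and independent of $F$.

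Second, \RRR receives every bit $b_{j,z}$ from \Func[Eq] in step 5; the OR of step 6 is computed locally afterwards. So the view does not reduce to $m$ bits $b_j$, and a simulator that fixes only the $b_j$ would not reproduce it. A matched $\vecq_j$ can have $b_{j,z}=1$ for several $z$, whenever several receiver points in distinct cells lie within distance $\delta$ of it.

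The missing observation is that these bits leak nothing beyond the output. After placing each $\vecq\in I$ in a random slot $j$, the simulator computes $\mathsf{neigh}_{2\delta}(\vecq)$ itself. It sets $b_{j,z}=1$ iff the $z$-th cell equals $\mathsf{cell}_{2\delta}(\vecw_i)$ for some $\vecw_i\in W$ with $\mathsf{dist}_\infty(\vecq,\vecw_i)\le\delta$, and sets all bits of the $\bot$ slots to $0$. The $j$-order is randomized by the sender's shuffle, and the $z$-order is recomputable from the revealed point, so your concern about ordering disappears. This is exactly the paper's simulator.

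Your closing step, a statistical hybrid over the false-positive event of probability at most $2^dm/|\FF|$, is then correct. It is also more precise than the paper, which asserts perfect indistinguishability even though the real and simulated views differ on that event.
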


\textbf{$L_p$ distance.} We next show how to extend the protocol to $L_p$ distance.
Similarly, the receiver now programs $|t|^p$ in place of $0$ as in the $L_\infty$ setting. As noted in Section~\ref{Sec: overview receiver assumption}, invoking conditional randomization is unnecessary under the receiver-sided assumption, since the receiver does not reconstruct $r_{j,z}$. By the correctness of \Func[so\text{-}OPPRF], $r^\RRR_{j,z}$ and $r^\SSS_{j,z}$ are secret shares of the $p$-th power of the distance between $\vecw_i$ and the candidate $\vecq_j$. It therefore suffices to test whether $r^\RRR_{j,z} + r^\SSS_{j,z}$ falls within interval $[0, \delta^p]$. A natural approach would be to invoke the secure comparison \Func[ssCMP]; however, since the output need not be secret-shared, we instead realize this check via an interval test (see Appendix~\ref{appendix: func other}), following~\cite{chakraborti2023distance,van2025}.

Specifically, the two parties invoke functionality $\FFF^{\delta^p}_{\mathsf{Interval}}$, and the receiver obtains outputs $b_{j,z} = \mathbf{1}\{r^\RRR_{j,z} + r^\SSS_{j,z} \le \delta^p\}$. Since $q_j$ may match multiple receiver elements, the receiver aggregates the per-cell results as $b_j = \bigvee_{z\in[2^d]} b_{j,z}$. Finally, $b_j$ serves as the choice bit, and the sender transfers the matching elements to the receiver via an OT protocol. 
We defer the detailed protocol in Figure~\ref{Prot: receiver 2delta Lp} of Appendix~\ref{appendix: other receiver 2delta Lp}.

\begin{figure}[t]
    \begin{nfprot}{\ensuremath{\Pi_\mathsf{FPSI}^{L_\infty}}\xspace}
        \noindent \textbf{Parameters:}  Sender \SSS and receiver \RRR with input sizes $m$ and $n$. Threshold \( \delta \). Bit length $\ell = \lambda + d+\log m$.

        \noindent \textbf{Input:} \SSS inputs \( Q=\{\vecq_j\}_{j\in [m]} \in \mathbb{U}^{d\times m} \) and \RRR inputs \( W=\{\vecw_i\}_{i\in [n]} \in \mathbb{U}^{d\times n} \).

        \noindent \textbf{Protocol:}
        \begin{enumerate}
            
            \item For $i\in[n]$, \RRR computes $\CCC_i^\RRR = \mathsf{cell}_{2\delta}(\vecw_i)$. \SSS shuffles $Q$ and computes $\{ \CCC^{\SSS}_{j,z}\}_{z\in [2^d]} = \mathsf{neigh}_{2\delta}(\vecq_j)$ for $j\in [m]$.
            
            \item \RRR sets $\mathsf{List} = \{(\CCC^{\RRR}_{i}\Vert k \Vert w_{i,k} + t,0^\ell)\}_{i\in[n],k\in[d],t\in[\text{-}\delta,\delta]}$.
            
            \item \RRR and \SSS invoke functionality \Func[so\text{-}OPPRF] where \RRR inputs $\mathsf{List}$ and \SSS inputs $\{\CCC^{\SSS}_{j,z} \Vert k \Vert q_{j,k}\}_{j\in[m],k\in[d],z\in [2^d]}$. \RRR receives $\{e^\RRR_{j,k,z}\}_{j\in[m],k\in[d],z\in [2^d]}$ and $\OOO^F$, and \SSS receives $\{e^\SSS_{j,k,z}\}_{j\in[m],k\in[d],z\in [2^d]}$.

            \item For $j\in [m],z\in [2^d]$, \RRR computes $v^\RRR_{j,z} = {\sum}_{k\in[d]} e^\RRR_{j,k,z}$, and \SSS computes $v^\SSS_{j,z} = {\sum}_{k\in[d]} e^\SSS_{j,k,z}$.

            \item For $j\in [m],z\in[2^d]$, \RRR and \SSS invoke \ensuremath{\FFF_{\mathsf{Eq}}} where \RRR inputs $v^{\RRR}_{j,z}$ and \SSS inputs $-v^{\SSS}_{j,z}$. \RRR receives $b_{j,z} = \mathbf{1}\{ v^{\RRR}_{j,z} = -v^{\SSS}_{j,z} \}$.

            \item For $j\in [m]$, \RRR computes $b_j = {\bigvee}_{z\in[2^d]}b_{j,z}$.

            \item For $j\in [m]$, \RRR and \SSS invoke functionality \Func[OT] where \RRR inputs $b_j$ and \SSS inputs $(\bot,\vecq_j)$. \RRR receives $\vecu_j$.

            \item \RRR outputs $I = \{ \vecu_j \; \vert \; b_j=1 \;\text{for} \;  j\in[m] \}$.

        \end{enumerate}

    \end{nfprot}
    \vspace{0.5em}
    \caption{Protocol of fuzzy PSI for $L_\infty$ distance under receiver-sided unique cell assumption.}
    \label{Prot: receiver 2delta Linf}
\end{figure}

\section{Optimized Fuzzy PSI from Prefix Trie}

In this section, we optimize our protocols to achieve $O(\log\delta)$ complexity in the threshold $\delta$ via prefix trie techniques.

We introduce a core building block that resolves the exponential complexity arising from the incorporation of prefix trie techniques. The ideal functionality is given in Figure~\ref{Func: eq sel}. Functionality $\FFF^{f,\mu}_{\mathsf{ConSel}}$ takes as input two sets of $\mu$ values and selects a unique pair based on a condition $f$. Specifically, if there exists a unique index $i$ such that $f(e^\SSS_i, e^\RRR_i) = 1$, it outputs fresh random shares summing to $e^\SSS_i + e^\RRR_i$; otherwise, it outputs uniformly random shares. For each dimension, this functionality acts as a filter, extracting only one prefix and discarding all non-matching prefixes, thereby avoiding exponential complexity. It is worth noting that this functionality is predicated on there being at most one pair of inputs satisfying $f$, as it produces only a single pair of outputs. The uniqueness of the matching prefix is established in Lemma~\ref{lemma: prefix}, ensuring that this functionality can be safely adopted without compromising correctness when we instantiate the $f$ as an equality test function $\mathsf{Eq}$ in our protocol. The concrete construction is presented in Figure~\ref{Prot: eq sel}, and we refer to it as Equality-conditional Selection ($\mathsf{EqSel}$). Over Boolean shares, $\mathsf{EqSel}$ selects precisely the shares of zero. 
When $\mu=1$, this functionality reduces to a special case of $\mathsf{EqRand}$ in which the value is the condition itself. Its correctness follows that of $\FFF^{f}_{\mathsf{ConRand}}$, and we defer the security proof to Appendix~\ref{appendix: proof con sel}.

\begin{figure}[t]
\begin{nffunc}{\ensuremath{\FFF^{f,\mu}_{\mathsf{ConSel}}}}

\noindent \textbf{Parameters:} Sender \SSS and Receiver \RRR. Condition function $f(\cdot,\cdot)$ where $f$ can be an equality or a comparison function. Input size $\mu$

\noindent \textbf{Functionality:}

\begin{enumerate}

    \item Wait for input $(e^\SSS_1, \cdots\!,e^\SSS_\mu)$ from \SSS and $(e^\RRR_1, \cdots\!,e^\RRR_\mu)$ from \RRR, where there exists at most one $(e^\SSS_i,e^\RRR_i)$ with $f(e^\SSS_i,e^\RRR_i) = 1$.

    \item Compute $b_i:= f(e^\SSS_i,e^\RRR_i)$ for all $i\in[\mu]$.

    \item Generate random shares $z^\SSS, z^\RRR$ such that $z^\SSS + z^\RRR = e_i^\SSS + e_i^\RRR$ if there exists $b_i=1$, and sample $z^\SSS, z^\RRR$ uniformly at random otherwise.

    \item Output $z^\SSS$ to \SSS and $z^\RRR$ to \RRR.
    
\end{enumerate}

\end{nffunc}
\caption{Functionality of conditional selection.}
\label{Func: eq sel}
\end{figure}

\begin{figure}[t]
    \begin{nfprot}{$\Pi^{\mathsf{Eq},\mu}_{\mathsf{ConSel}}$}
    \noindent \textbf{Parameters:} Ideal functionality \Func[MUX] and \Func[ssPEQT].
    
    \noindent \textbf{Input:} \SSS inputs $\{e_i^\SSS\}_{i \in [\mu]}$ and \RRR inputs $\{e_i^\RRR\}_{i \in [\mu]}$ where there exists at most one $e_i^\SSS=e_i^\RRR$.
    
    \noindent \textbf{Protocol:}
    \begin{enumerate}
        \item \SSS and \RRR invoke \Func[ssPEQT], where \SSS inputs $\{e_i^\SSS\}_{i \in [\mu]}$ and \RRR inputs $\{e_i^\RRR\}_{i \in [\mu]}$. \SSS receives $\{b_i^\SSS\}_{i \in [\mu]} \in \bool^\mu$, and \RRR receives $\{b_i^\RRR\}_{i \in [\mu]} \in \bool^\mu$.

        \item \SSS computes $b^\SSS := \Xor_{i \in [\mu]} b_i^\SSS$ and \RRR computes $b^\RRR := \Xor_{i \in [\mu]} b_i^\RRR$.
        
        \item \SSS randomly samples $r^\SSS$ and \RRR randomly samples $r^\RRR$.
        
        \item \SSS and \RRR invoke \Func[MUX], where \SSS inputs $(1 \xor b^\SSS, r^\SSS)$ and \RRR inputs $(b^\RRR, r^\RRR)$. \SSS receives $m^\SSS$ and \RRR receives $m^\RRR$.

        \item \SSS outputs $m^\SSS$ and \RRR outputs $m^\RRR$.

    \end{enumerate}
    
    \end{nfprot}
    \vspace{0.5em}
    \caption{Protocol of equality-conditional selection.}
    \label{Prot: eq sel}
\end{figure}

We have presented our idea to incorporate prefix trie in Section~\ref{Sec: overview prefix trie}, and the prefix-optimized protocol under the receiver-sided assumption for $L_\infty$ distance is illustrated in Figure~\ref{Prot: receiver 2delta Linf px}. We defer
the security proof in Appendix~\ref{appendix: proof receiver 4delta Linf prefix}. The protocols for $L_p$ distances can be optimized analogously, and we defer the details to Appendix~\ref{appendix: other protocols}.

\begin{theorem}
\label{thm: receiver 2delta px Linf}
    The protocol $\Pi_\mathsf{FPSI\text{-}Px}^{L_\infty}$ in Figure~\ref{Prot: receiver 2delta Linf px} securely realizes the functionality \Func[FPSI] for $L_\infty$ distance against semi-honest adversaries in the (\Func[so\text{-}OPPRF],$\FFF^{\mathsf{Eq},\mu}_{\mathsf{ConSel}}$,\Func[Eq],\Func[OT])-hybrid model.
\end{theorem}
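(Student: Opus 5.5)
The proof has two parts, correctness and security, and security is shown by building a simulator for each corrupted party in the hybrid model. The protocol in Figure~\ref{Prot: receiver 2delta Linf px} is not reproduced above, so I assume the natural shape from Section~\ref{Sec: overview prefix trie}. The receiver programs keys $\CCC^\RRR_i\Vert k\Vert w_{i,k,h}$ with value $0^\ell$, where $\{w_{i,k,h}\}_{h\in[\mu]}=\mathsf{PxTrie}(w_{i,k}-\delta,w_{i,k}+\delta)$. The sender evaluates at $\CCC^\SSS_{j,z}\Vert k\Vert q_{j,k,h}$. For each $(j,k,z)$ the parties call $\FFF^{\mathsf{Eq},\mu}_{\mathsf{ConSel}}$ on $\{e^\RRR_{j,k,z,h}\}_h$ and $\{-e^\SSS_{j,k,z,h}\}_h$ and obtain $v^\RRR_{j,k,z},v^\SSS_{j,k,z}$. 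These are summed over $k$, followed by $\FFF_{\mathsf{Eq}}$, an OR over $z$, and $\FFF_{\mathsf{OT}}$.

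\textbf{Correctness.} Fix $j$ and $z$. By the receiver-sided unique cell assumption, the key space for cell $\CCC^\SSS_{j,z}$ is programmed by at most one $\vecw_i$, and the cell-identifier prefix prevents collisions. Hence $\mathsf{List}$ is a well-defined function.

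Suppose $\mathsf{dist}_\infty(\vecq_j,\vecw_i)\le\delta$. Then $\mathsf{cell}_{2\delta}(\vecw_i)\in\mathsf{neigh}_{2\delta}(\vecq_j)$, so some $z$ hits that cell. For each $k$, Lemma~\ref{lemma: prefix} gives exactly one $h^*$ with $q_{j,k,h^*}$ among the programmed prefixes, so the shares at $h^*$ reconstruct $0^\ell$. The precondition of $\FFF^{\mathsf{Eq},\mu}_{\mathsf{ConSel}}$ requires at most one equal pair. For programmed prefixes, Lemma~\ref{lemma: prefix} ensures only one. For non-programmed ones, the so-OPPRF output is uniform in $\FF_{2^\ell}$, so it is zero only with probability $2^{-\ell}$. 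The functionality then outputs fresh shares of $0$, and summing over $k$ gives shares of $0$, so $b_j=1$.

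Now suppose $\vecq_j$ matches no $\vecw_i$. Then for every $z$ either the cell is empty or some dimension $k^*$ has no matching prefix. In that dimension the ConSel output is uniformly random, so the sum is nonzero except with probability $2^{-\ell}$. A union bound over the $O(m2^d d\mu)$ events, using $\ell=\lambda+d+\log m$ (plus the $\log\log\delta$ slack, which I would add to $\ell$ if needed), shows the error is negligible in $\lambda$.

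\textbf{Corrupted sender.} The simulator's task is easy because every value the sender receives from a hybrid functionality is a uniformly random share. Concretely, the so-OPPRF gives the sender its shares $e^\SSS$, ConSel gives $v^\SSS$, $\FFF_{\mathsf{Eq}}$ gives the sender no output, and OT gives the sender nothing. The simulator samples all these uniformly, and the view is identically distributed.

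\textbf{Corrupted receiver.} The simulator is given $W$ and the output $I$. The receiver's view consists of $\OOO^F$, the shares $e^\RRR$, the ConSel outputs $v^\RRR$, the bits $b_{j,z}$, and the OT outputs. The shares $e^\RRR$ and $v^\RRR$ are uniform and independent, so they are sampled at random.

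The delicate step, which I expect to be the main obstacle, is simulating $b_{j,z}$ for each $z$ individually, not just the aggregate $b_j$. The plan is to argue that $b_{j,z}=1$ iff $\vecq_j$ matches the unique $\vecw_i$ in cell $\CCC^\SSS_{j,z}$. Such a match in turn determines which neighbor index $z$ corresponds to that cell, and that index can be computed from $\vecw_i$ alone, up to the sender's shuffle of $Q$ and the canonical ordering of $\mathsf{neigh}$. Given this, the simulator proceeds as follows. It assigns the elements of $I$ to random positions $j$, which the shuffle justifies. For each such $j$ it picks a matching $\vecw_i$, and it sets $b_{j,z}=1$ exactly for the $z$ whose cell holds a matching $\vecw_i$.

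This requires knowing the position of $\mathsf{cell}(\vecw_i)$ within $\mathsf{neigh}(\vecq_j)$. That position depends on $\vecq_j$'s location relative to the cell grid, and $\vecq_j$ is known to the simulator because $\vecq_j\in I$. So the simulation goes through, including the case where $\vecq_j$ matches points in several cells. For every $j$ not assigned an element of $I$, the simulator sets all bits to $0$ and gives $\bot$ as the OT output.

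Indistinguishability then follows by a hybrid argument. The first hybrid replaces ConSel outputs and so-OPPRF shares by uniform values. The second hybrid uses correctness to fix the $b_{j,z}$, which costs only negligible statistical distance.
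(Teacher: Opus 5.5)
Your proposal is correct and follows essentially the same route as the paper. Correctness goes through Lemma~\ref{lemma: prefix} and the $\mathsf{ConSel}$/$\mathsf{Eq}$/$\mathsf{OT}$ chain with a union bound; the corrupted-sender view is all uniform shares; and the corrupted-receiver simulator pads and permutes $I$, sets each $b_{j,z}$ from $W$ and the neighbor cells of the matched sender points, and uses the same hybrid sequence. Your version is slightly more careful than the paper's in two places: you account for random non-programmed prefixes violating the $\mathsf{ConSel}$ precondition, which is what the figure's $\ell=\lambda+d+\log(md\mu)$ covers, and you treat the hybrid that fixes the $b_{j,z}$ as statistically close rather than identical.
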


\begin{figure}[t]
    \begin{nfprot}{\ensuremath{\Pi_\mathsf{FPSI\text{-}Px}^{L_\infty}}\xspace}
        \noindent \textbf{Parameters:}  Threshold \( \delta \). Bit length $\ell = \lambda + d + \log md\mu$. Prefix length $\mu = \log\delta+2$

        \noindent \textbf{Input:} \SSS inputs \( Q=\{\vecq_j\}_{j\in [m]} \in \mathbb{U}^{d\times m} \) and \RRR inputs \( W=\{\vecw_i\}_{i\in [n]} \in \mathbb{U}^{d\times n} \).

        \noindent \textbf{Protocol:}
        \begin{enumerate}
            \item For $i\in[n]$, \RRR computes $\CCC_i^\RRR = \mathsf{cell}_{2\delta}(\vecw_i)$. \SSS shuffles $Q$ and computes $\{ \CCC^{\SSS}_{j,z}\}_{z\in [2^d]} = \mathsf{neigh}_{2\delta}(\vecq_j)$ for $j\in [m]$.
            
            \item \RRR sets $\mathsf{List} = \{(\CCC^{\RRR}_{i}\Vert k \Vert \tilde{w}_{i,k,h},0^\ell)\}_{i\in[n],k\in[d],h\in [\mu]}$ where $\{\tilde{w}_{i,k,h}\}_{h\in[\mu]} = \textsf{PxTrie}(w_{i,k}-\delta,w_{i,k}+\delta)$.

            \item \SSS computes $\mathsf{query} = \{\CCC^{\SSS}_{j,z} \Vert k \Vert \tilde{q}_{j,k,h}\}_{j\in[m],k\in[d],z\in [2^d],h\in[\mu]}$ where $\{\tilde{q}_{j,k,h} \}_{h\in[\mu]} = \mathsf{PxPath}(q_{j,k},\delta)$.
            
            \item \RRR and \SSS invoke functionality \Func[so\text{-}OPPRF] where \RRR inputs $\mathsf{List}$ and \SSS inputs $\mathsf{query}$. \RRR receives $\{e^\RRR_{j,k,z,h}\}_{j\in[m],k\in[d],z\in [2^d],h\in[\mu]}$ and $\OOO^F$, and \SSS receives $\{e^\SSS_{j,k,z,h}\}_{j\in[m],k\in[d],z\in [2^d],h\in[\mu]}$.

            \item For $j\in [m],k\in [d],z\in[2^d]$, \RRR and \SSS invoke functionality $\FFF^{\mathsf{Eq},\mu}_{\mathsf{ConSel}}$ where \RRR inputs $\{e^\RRR_{j,k,z,h}\}_{h\in[\mu]}$ and \SSS inputs $\{-e^\SSS_{j,k,z,h}\}_{h\in[\mu]}$. \RRR receives $v^\RRR_{j,k,z}$ and \SSS receives $v^\SSS_{j,k,z}$.
            
            \item For $j\in[m],z\in[2^d]$, \RRR computes $r^\RRR_{j,z}={\sum}_{k\in[d]}v^\RRR_{j,k,z}$, and  \SSS computes $r^\SSS_{j,z}={\sum}_{k\in[d]}v^\SSS_{j,k,z}$.

            \item For $j\in [m],z\in[2^d]$, \RRR and \SSS invoke functionality \ensuremath{\FFF_{\mathsf{Eq}}} where \RRR inputs $r^{\RRR}_{j,z}$ and \SSS inputs $-r^{\SSS}_{j,z}$. \RRR receives $b_{j,z} = \mathbf{1}\{ r^{\RRR}_{j,z} = -r^{\SSS}_{j,z} \}$.

            \item For $j\in [m]$, \RRR computes $b_j = {\bigvee}_{z\in[2^d]}b_{j,z}$.

            \item For $j\in [m]$, \RRR and \SSS invoke functionality \Func[OT] where \RRR inputs $b_j$ and \SSS inputs $(\bot,\vecq_j)$. \RRR receives $\vecu_j$.

            \item \RRR outputs $I = \{ \vecu_j \; \vert \; b_j=1 \;\text{for} \;  j\in[m] \}$.

        \end{enumerate}

    \end{nfprot}
    \vspace{0.5em}
    \caption{Protocol of prefix-optimized fuzzy PSI for $L_\infty$ distance under receiver-sided unique cell assumption.}
    \label{Prot: receiver 2delta Linf px}
\end{figure}

\section{Fuzzy PSI for Unique Block Assumptions}
\label{Sec: 4delta}

We further consider a sparser distribution, termed \textit{unique block} (refer to Appendix~\ref{appendix: analysis of assumptions}), in which each cell intersects with at most one $L_\infty$ ball centered at a receiver point. This distribution guarantees that each receiver point's $L_\infty$ ball occupies a unique block and is therefore associated with a collision-free set of $2^d$ neighboring cells.

Unlike the \textit{unique cell} setting, our insight is to reverse the roles of the two parties in the spatial hashing procedure: the receiver programs all $2^d$ candidate cells without collision, while the sender evaluates only the single cell containing its point. This shifts the $2^d$ factor from the evaluation side to the programming side. As noted in~\cite{yang2026}, programming the so-OPPRF primarily involves OKVS~\cite{raghuraman2022blazing} encoding operations, which is substantially cheaper than evaluating the so-OPPRF; exploiting this asymmetry therefore yields a more efficient construction.

The detailed protocol for $L_\infty$ distance is presented in Figure~\ref{Prot: receiver 4delta Linf} and the security proof is deferred to Appendix~\ref{appendix: proof receiver 4delta Linf}. Constructions for general $L_p$ distances and their prefix-optimized variants are provided in Appendix~\ref{appendix: other protocols 4delta}.

\begin{theorem}
\label{thm: receiver 4delta Linf}
    The protocol $\Pi_\mathsf{FPSI}^{L_\infty}$ in Figure~\ref{Prot: receiver 4delta Linf} securely realizes the functionality \Func[FPSI] for $L_\infty$ distance against semi-honest adversaries in the (\Func[so\text{-}OPPRF],\Func[Eq],\Func[OT])-hybrid model.
\end{theorem}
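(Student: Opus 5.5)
We would follow the same template as the proof of Theorem~\ref{thm: receiver 2delta Linf}, since the protocol in Figure~\ref{Prot: receiver 4delta Linf} differs only in which party expands to the $2^d$ neighboring cells. The argument has three parts: well-definedness of the programmed list, correctness, and two simulators, one for a corrupted \SSS and one for a corrupted \RRR. All of these work in the (\Func[so\text{-}OPPRF],\Func[Eq],\Func[OT])-hybrid model.

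\emph{Well-definedness and correctness.} The receiver programs keys $\CCC^\RRR_{i,z}\Vert k\Vert (w_{i,k}+t)$ with value $0^\ell$ for every $z\in[2^d]$, where $\{\CCC^\RRR_{i,z}\}_z=\mathsf{neigh}_{2\delta}(\vecw_i)$.
\begin{enumerate}
\item By Definition~\ref{def: unique block}, the sets $\mathsf{neigh}_{2\delta}(\vecw_i)$ are pairwise disjoint. Hence two keys with the same cell prefix belong to the same $\vecw_i$, and within one $(i,z,k)$ the values $w_{i,k}+t$ are distinct. So $\mathsf{List}$ is collision-free and \Func[so\text{-}OPPRF] is well defined. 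The random padding cells collide only with negligible probability.
\item The sender evaluates $\CCC^\SSS_j\Vert k\Vert q_{j,k}$ with $\CCC^\SSS_j=\mathsf{cell}_{2\delta}(\vecq_j)$. If $\mathsf{dist}_\infty(\vecq_j,\vecw_i)\le\delta$, then $\vecq_j$ lies in the $\delta$-ball of $\vecw_i$, so $\CCC^\SSS_j\in\mathsf{neigh}_{2\delta}(\vecw_i)$ and $|q_{j,k}-w_{i,k}|\le\delta$ for every $k$. Every per-dimension evaluation therefore hits a programmed key, the aggregated shares sum to $0$, and \Func[Eq] returns $b_j=1$.
\item Conversely, suppose $\vecq_j$ matches no $\vecw_i$. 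Either $\CCC^\SSS_j$ lies in no programmed cell, or it lies in the block of exactly one $\vecw_i$ (by uniqueness) and some coordinate falls outside $[w_{i,k}-\delta,w_{i,k}+\delta]$. In both cases at least one summand is a uniformly random $F$-value. The sum is then zero with probability $2^{-\ell}$, and a union bound over the $m$ queries gives failure probability at most $m2^{-\ell}$, which is negligible for $\ell=\lambda+\log m$ (plus slack).
\end{enumerate}
The OT then delivers exactly the matching $\vecq_j$, since $b_j=1$ precisely for them.

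\emph{Simulation.}
\begin{itemize}
\item Corrupted \SSS: its view consists only of its so-OPPRF output shares, the \Func[Eq] interaction (in which \SSS gets no output), and its \Func[OT] sender role. The simulator samples the shares $e^\SSS_{j,k}$ uniformly from $\FF$, which matches the real distribution because \Func[so\text{-}OPPRF] re-shares every output freshly. It then plays the hybrid \Func[Eq] and \Func[OT] calls with no output to \SSS.
\item Corrupted \RRR: it receives uniformly random shares $e^\RRR_{j,k}$ and the oracle $\OOO^F$. It also receives the bits $b_j$ from \Func[Eq] and the OT outputs $\vecu_j$.
\end{itemize}
The simulator for \RRR obtains the output set $I$ from \Func[FPSI]. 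It builds a random function consistent with $\mathsf{List}$ for $\OOO^F$ and samples the shares uniformly. It then sets $b_j=1$ for $|I|$ positions and $0$ for the rest. Because \SSS shuffled $Q$ in Step~1, the positions of the matches are uniform. Finally, it feeds the elements of $I$ as the OT outputs at those positions.

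\emph{Main obstacle.} The delicate step is showing that the bit vector $(b_j)$ seen by \RRR depends only on $I$ and on the shuffle, and not on, say, which $\vecw_i$ was matched. It must also carry no false-positive bias that would correlate with the sender's non-matching points. Here we would rely on the fact that, unlike the $1$-to-$n$ case, no aggregated value is ever reconstructed: \RRR's share $v^\RRR_j=\sum_k e^\RRR_{j,k}$ is a fresh uniform share independent of the sender's inputs. The only leakage channel is \Func[Eq]'s output, which equals the ideal match bit except with the negligible probability bounded above.

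Indistinguishability then follows by a single hybrid that replaces the real $(b_j)$ with the ideal match indicators. Its distinguishing advantage is at most $m2^{-\ell}$. After this replacement, \RRR's view is identically distributed to the simulated one.
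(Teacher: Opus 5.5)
Your proposal is correct and follows essentially the same route as the paper. It proves correctness via spatial hashing plus a union bound with $\ell=\lambda+\log m$, uses a simulator for the sender that samples uniform so-OPPRF shares, and uses a simulator for the receiver that samples $F$ consistent with $\mathsf{List}$ and uniform shares, then derives the bits $b_j$ and the OT outputs from the padded, permuted output set $I$. Two of your additions are slightly more careful than the paper, which simply asserts perfect indistinguishability: your explicit collision-freeness check for $\mathsf{List}$ under the unique block assumption, and your accounting of the $m2^{-\ell}$ false-positive gap in the hybrid.
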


\begin{figure}[t]
    \begin{nfprot}{\ensuremath{\Pi_\mathsf{FPSI}^{L_\infty}}\xspace}
        \noindent \textbf{Parameters:}  Sender \SSS and receiver \RRR with input sizes $m$ and $n$. Threshold \( \delta \). Bit length $\ell = \lambda + \log m$.

        \noindent \textbf{Input:} \SSS inputs \( Q=\{\vecq_j\}_{j\in [m]} \in \mathbb{U}^{d\times m} \) and \RRR inputs \( W=\{\vecw_i\}_{i\in [n]} \in \mathbb{U}^{d\times n} \).

        \noindent \textbf{Protocol:}
        \begin{enumerate}

            \item For $i\in[n]$, \RRR computes $\{\CCC_{i,z}^\RRR\}_{z\in[2^d]} = \mathsf{neigh}_{2\delta}(\vecw_i)$. \SSS shuffles $Q$ and computes $ \CCC^{\SSS}_{j} = \mathsf{cell}_{2\delta}(\vecq_j)$ for $j\in [m]$.
            
            \item \RRR sets $\mathsf{List} = \{(\CCC^{\RRR}_{i,z}\Vert k \Vert w_{i,k} + t,0^\ell)\}_{i\in[n],k\in[d],z\in[2^d],t\in[\text{-}\delta,\delta]}$.
            
            \item \RRR and \SSS invoke functionality \Func[so\text{-}OPPRF] where \RRR inputs $\mathsf{List}$ and \SSS inputs $\{\CCC^{\SSS}_j \Vert k \Vert q_{j,k}\}_{j\in[m],k\in[d]}$. \RRR receives $\{e^\RRR_{j,k}\}_{j\in[m],k\in[d]}$ and $\OOO^F$, and \SSS receives $\{e^\SSS_{j,k}\}_{j\in[m],k\in[d]}$.

            \item For $j\in[m]$, \RRR computes $r^\RRR_j={\sum}_{k\in[d]}e^\RRR_{j,k}$, and  \SSS computes $r^\SSS_j={\sum}_{k\in[d]}e^\SSS_{j,k}$.
            
            \item For $j\in[m]$, \RRR and \SSS invoke functionality \Func[Eq] where \RRR inputs $r^\RRR_j$ and \SSS inputs $-r^\SSS_j$. \RRR receives $b_j=\mathbf{1}\{r^\RRR_j=-r^\SSS_j\}$.
            \item For $j\in[m]$, \RRR and \SSS invoke functionality \Func[OT] where \RRR inputs $b_j$ and \SSS inputs $(\perp,\vecq_j)$. \RRR receives ${\vecu_j}$.
            
            \item \RRR outputs $\{\vecu_j\; \vert \; b_j = 1\;\text{for}\; j\in [m] \}$.

        \end{enumerate}

    \end{nfprot}
    \vspace{0.5em}
    \caption{Protocol of fuzzy PSI for $L_\infty$ distance under receiver-sided unique block assumption.}

    \label{Prot: receiver 4delta Linf}
\end{figure}

%% file: experiment.tex
\begin{table*}[t]
\caption{The communication (MB) and running time (s) of fuzzy PSI for $L_\infty$ distance under the unique cell assumption. Results of our protocols for $L_p$ distance are reported in Appendix~\ref{appendix: results Lp 2delta}, as existing works do not support this setting. ``---'' indicates out-of-memory or execution timeout (> 2.5h). The best result is marked in {\textcolor{green!95!black}{green}}.
}
\label{Tab: performance 2delta}
\resizebox{\textwidth}{!}{
\begin{tabular}{ccccccccccccc}

\hline
\multirow{2}{*}{\begin{tabular}[c]{@{}c@{}}Size\\ $m\!=\!n$\end{tabular}}     & \multirow{2}{*}{\begin{tabular}[c]{@{}c@{}}Dim.\\ $d$\end{tabular}} & \multirow{2}{*}{Protocol} & \multicolumn{2}{c}{$\delta=32$}          & \multicolumn{2}{c}{$\delta=64$}          & \multicolumn{2}{c}{$\delta=128$}         & \multicolumn{2}{c}{$\delta=256$}         & \multicolumn{2}{c}{$\delta=512$}         \\ \cline{4-13} 
                         &                    &                           & \multicolumn{1}{c}{Comm} & Time & \multicolumn{1}{c}{Comm} & Time & \multicolumn{1}{c}{Comm} & Time & \multicolumn{1}{c}{Comm} & Time & \multicolumn{1}{c}{Comm} & Time \\ \hline
\multirow{12}{*}{$2^8$}   & \multicolumn{1}{c}{\multirow{4}{*}{2}} & van Baarsen et al.~\cite{van2024fuzzy}  & \multicolumn{1}{c}{3.11}    & 1.77    & \multicolumn{1}{c}{6.11}     & 3.47    & \multicolumn{1}{c}{12.11}   & 6.95    & \multicolumn{1}{c}{24.11}    & 14.05   & \multicolumn{1}{c}{48.11}        & 27.98         \\ 
                          & \multicolumn{1}{c}{}                   & Bui et al.~\cite{fss25}             & \multicolumn{1}{c}{3.56}    & 1.48    & \multicolumn{1}{c}{4.15}     & 1.53    & \multicolumn{1}{c}{4.32}    & 1.71    & \multicolumn{1}{c}{ 4.98}         &    1.73      & \multicolumn{1}{c}{5.15}        &  2.07       \\ 
                          & \multicolumn{1}{c}{}                   & Ours                               & \multicolumn{1}{c}{\cellcolor{green!30}2.64} & \cellcolor{green!30}0.23 & \multicolumn{1}{c}{\cellcolor{green!30}3.32} & \cellcolor{green!30}0.24 & \multicolumn{1}{c}{4.68}    & \cellcolor{green!30}0.25 & \multicolumn{1}{c}{7.39}     & \cellcolor{green!30}0.29 & \multicolumn{1}{c}{12.83}        &  \cellcolor{green!30}0.41       \\
                          & \multicolumn{1}{c}{}                   & Ours-Px                            & \multicolumn{1}{c}{3.26}    & 0.45    & \multicolumn{1}{c}{3.34}     & 0.43    & \multicolumn{1}{c}{\cellcolor{green!30}3.41} & 0.45 & \multicolumn{1}{c}{\cellcolor{green!30}3.76} & 0.45 & \multicolumn{1}{c}{\cellcolor{green!30}3.76} & 0.48 \\ \cline{2-13}
                          & \multicolumn{1}{c}{\multirow{4}{*}{4}} & van Baarsen et al.~\cite{van2024fuzzy}  & \multicolumn{1}{c}{\cellcolor{green!30}6.34} & 3.72   & \multicolumn{1}{c}{12.34}    & 7.12    & \multicolumn{1}{c}{24.34}   & 14.31   & \multicolumn{1}{c}{48.34}    & 27.77   & \multicolumn{1}{c}{96.34}        & 55.63        \\ 
                          & \multicolumn{1}{c}{}                   & Bui et al.~\cite{fss25}             & \multicolumn{1}{c}{52.45}   & 10.53   & \multicolumn{1}{c}{121.79}   & 14.13   & \multicolumn{1}{c}{122.19}  & 23.61   & \multicolumn{1}{c}{248.00}         &   28.92      & \multicolumn{1}{c}{248.02}        &  42.94       \\ 
                          & \multicolumn{1}{c}{}                   & Ours                               & \multicolumn{1}{c}{8.66}    & \cellcolor{green!30}0.41 & \multicolumn{1}{c}{\cellcolor{green!30}10.02} & \cellcolor{green!30}0.41 & \multicolumn{1}{c}{\cellcolor{green!30}12.73} & \cellcolor{green!30}0.51 & \multicolumn{1}{c}{18.17}   & \cellcolor{green!30}0.67 & \multicolumn{1}{c}{29.05}        &   \cellcolor{green!30}0.73      \\
                          & \multicolumn{1}{c}{}                   & Ours-Px                            & \multicolumn{1}{c}{15.22}   & 1.22    & \multicolumn{1}{c}{15.57}    & 1.12    & \multicolumn{1}{c}{15.57}   & 1.14    & \multicolumn{1}{c}{\cellcolor{green!30}15.91} & 1.12 & \multicolumn{1}{c}{\cellcolor{green!30}16.25} & 1.16 \\ \cline{2-13}
                          & \multicolumn{1}{c}{\multirow{4}{*}{6}} & van Baarsen et al.~\cite{van2024fuzzy}  & \multicolumn{1}{c}{\cellcolor{green!30}10.14} & 6.43  & \multicolumn{1}{c}{\cellcolor{green!30}19.14} & 11.64  & \multicolumn{1}{c}{\cellcolor{green!30}37.14} & 22.08  & \multicolumn{1}{c}{73.14}    & 43.13   & \multicolumn{1}{c}{145.14}        &  83.55        \\ 
                          & \multicolumn{1}{c}{}                   & Bui et al.~\cite{fss25}             & \multicolumn{1}{c}{3070.90} & 2048.17 & \multicolumn{1}{c}{11695.09} & 2755.55 & \multicolumn{1}{c}{11695.82} & 7516.48 & \multicolumn{1}{c}{---}        & ---        & \multicolumn{1}{c}{---}        &    ---     \\ 
                          & \multicolumn{1}{c}{}                   & Ours                               & \multicolumn{1}{c}{39.43}   & \cellcolor{green!30}1.56 & \multicolumn{1}{c}{41.47}    & \cellcolor{green!30}1.43 & \multicolumn{1}{c}{45.55}   & \cellcolor{green!30}1.51 & \multicolumn{1}{c}{\cellcolor{green!30}53.70} & \cellcolor{green!30}1.71 & \multicolumn{1}{c}{\cellcolor{green!30}70.07}        &  \cellcolor{green!30}2.08       \\
                          & \multicolumn{1}{c}{}                   & Ours-Px                            & \multicolumn{1}{c}{82.96}   & 5.31    & \multicolumn{1}{c}{83.06}    & 5.67    & \multicolumn{1}{c}{83.30}   & 5.20    & \multicolumn{1}{c}{83.82}    & 5.41    & \multicolumn{1}{c}{84.32} & 5.38 \\ \hline
\multirow{12}{*}{$2^{12}$} & \multicolumn{1}{c}{\multirow{4}{*}{2}} & van Baarsen et al.~\cite{van2024fuzzy} & \multicolumn{1}{c}{49.75}   & 29.07   & \multicolumn{1}{c}{97.75}    & 56.37   & \multicolumn{1}{c}{193.75}  & 112.96  & \multicolumn{1}{c}{385.75}   & 222.60  & \multicolumn{1}{c}{769.75}        &  442.27       \\ 
                          & \multicolumn{1}{c}{}                   & Bui et al.~\cite{fss25}             & \multicolumn{1}{c}{38.18}   & 3.57    & \multicolumn{1}{c}{47.55}    & 4.64    & \multicolumn{1}{c}{50.18}   & 6.55    & \multicolumn{1}{c}{61.06}         &  9.62       & \multicolumn{1}{c}{63.69}        &  14.78       \\ 
                          & \multicolumn{1}{c}{}                   & Ours                               & \multicolumn{1}{c}{\cellcolor{green!30}24.72} & \cellcolor{green!30}0.96 & \multicolumn{1}{c}{35.60}    & \cellcolor{green!30}0.94 & \multicolumn{1}{c}{57.37}   & \cellcolor{green!30}1.43 & \multicolumn{1}{c}{100.93}   & \cellcolor{green!30}2.05    & \multicolumn{1}{c}{188.10}        &  3.74       \\
                          & \multicolumn{1}{c}{}                   & Ours-Px                            & \multicolumn{1}{c}{31.39}   & 2.11    & \multicolumn{1}{c}{\cellcolor{green!30}32.75} & 2.16 & \multicolumn{1}{c}{\cellcolor{green!30}34.11} & 2.25 & \multicolumn{1}{c}{\cellcolor{green!30}36.84} & 2.12 & \multicolumn{1}{c}{\cellcolor{green!30}39.54} & \cellcolor{green!30}2.26 \\ \cline{2-13}
                          & \multicolumn{1}{c}{\multirow{4}{*}{4}} & van Baarsen et al.~\cite{van2024fuzzy}  & \multicolumn{1}{c}{\cellcolor{green!30}101.50} & 59.81  & \multicolumn{1}{c}{197.50}   & 115.70  & \multicolumn{1}{c}{389.50}  & 228.65  & \multicolumn{1}{c}{773.50}   & 448.70  & \multicolumn{1}{c}{1541.50}        & 887.83         \\ 
                          & \multicolumn{1}{c}{}                   & Bui et al.~\cite{fss25}             & \multicolumn{1}{c}{819.28}  & 155.32  & \multicolumn{1}{c}{1931.30}  & 206.54  & \multicolumn{1}{c}{1936.54} & 341.16  & \multicolumn{1}{c}{3954.58}         &  453.26       & \multicolumn{1}{c}{3959.78}        &    704.95     \\ 
                          & \multicolumn{1}{c}{}                   & Ours                               & \multicolumn{1}{c}{120.46}  & \cellcolor{green!30}3.92 & \multicolumn{1}{c}{\cellcolor{green!30}142.23} & \cellcolor{green!30}4.34 & \multicolumn{1}{c}{\cellcolor{green!30}185.79} & \cellcolor{green!30}4.99 & \multicolumn{1}{c}{272.97}   & \cellcolor{green!30}6.48 & \multicolumn{1}{c}{447.41}        &   \cellcolor{green!30}9.61      \\
                          & \multicolumn{1}{c}{}                   & Ours-Px                            & \multicolumn{1}{c}{223.36}  & 14.45   & \multicolumn{1}{c}{226.09}   & 13.69   & \multicolumn{1}{c}{228.79}  & 14.60   & \multicolumn{1}{c}{\cellcolor{green!30}234.23} & 13.85 & \multicolumn{1}{c}{\cellcolor{green!30}239.67} & 14.81 \\ \cline{2-13}
                          & \multicolumn{1}{c}{\multirow{4}{*}{6}} & van Baarsen et al.~\cite{van2024fuzzy}  & \multicolumn{1}{c}{\cellcolor{green!30}162.25} & 106.67 & \multicolumn{1}{c}{\cellcolor{green!30}306.25} & 192.80 & \multicolumn{1}{c}{\cellcolor{green!30}594.25} & 354.21 & \multicolumn{1}{c}{1170.25}  & 690.49  & \multicolumn{1}{c}{2322.25}        &   1357.23      \\ 
                          & \multicolumn{1}{c}{}                   & Bui et al.~\cite{fss25}             & \multicolumn{1}{c}{---}     & ---     & \multicolumn{1}{c}{---}      & ---     & \multicolumn{1}{c}{---}     & ---     & \multicolumn{1}{c}{---}      & ---     & \multicolumn{1}{c}{---}        &   ---      \\ 
                          & \multicolumn{1}{c}{}                   & Ours                               & \multicolumn{1}{c}{612.81}  & \cellcolor{green!30}21.77 & \multicolumn{1}{c}{645.52}   & \cellcolor{green!30}22.00 & \multicolumn{1}{c}{710.97}  & \cellcolor{green!30}22.25 & \multicolumn{1}{c}{\cellcolor{green!30}841.95} & \cellcolor{green!30}24.60 & \multicolumn{1}{c}{\cellcolor{green!30}1104.05}        &   \cellcolor{green!30}29.58      \\
                          & \multicolumn{1}{c}{}                   & Ours-Px                            & \multicolumn{1}{c}{1311.18} & 80.01   & \multicolumn{1}{c}{1315.26}  & 80.14   & \multicolumn{1}{c}{1319.33} & 82.16   & \multicolumn{1}{c}{1327.52}  & 80.28   & \multicolumn{1}{c}{1335.70} & 81.44 \\ \hline
\multirow{12}{*}{$2^{16}$} & \multicolumn{1}{c}{\multirow{4}{*}{2}} & van Baarsen et al.~\cite{van2024fuzzy}  & \multicolumn{1}{c}{796.00}  & 478.07  & \multicolumn{1}{c}{1564.00}  & 904.65  & \multicolumn{1}{c}{3100.00} & 1836.01 & \multicolumn{1}{c}{6172.00}  & 3693.64 & \multicolumn{1}{c}{12316.00}        &  7499.23       \\ 
                          & \multicolumn{1}{c}{}                   & Bui et al.~\cite{fss25}             & \multicolumn{1}{c}{583.20}  & 39.28   & \multicolumn{1}{c}{733.25}   & 55.02   & \multicolumn{1}{c}{775.30}  & 80.48   & \multicolumn{1}{c}{949.34}         &    122.20     & \multicolumn{1}{c}{991.39}        &   240.33      \\ 
                          & \multicolumn{1}{c}{}                   & Ours                               & \multicolumn{1}{c}{\cellcolor{green!30}377.35} & \cellcolor{green!30}10.79 & \multicolumn{1}{c}{551.79}   & \cellcolor{green!30}13.15 & \multicolumn{1}{c}{900.81}  & \cellcolor{green!30}19.00   & \multicolumn{1}{c}{1599.34}  & 37.03   & \multicolumn{1}{c}{2996.97}        &   59.56      \\
                          & \multicolumn{1}{c}{}                   & Ours-Px                            & \multicolumn{1}{c}{483.52}  & 28.61   & \multicolumn{1}{c}{\cellcolor{green!30}505.41} & 29.10 & \multicolumn{1}{c}{\cellcolor{green!30}527.06} & 29.72 & \multicolumn{1}{c}{\cellcolor{green!30}570.86} & \cellcolor{green!30}30.17 & \multicolumn{1}{c}{\cellcolor{green!30}614.18} & \cellcolor{green!30}30.28 \\ \cline{2-13}
                          & \multicolumn{1}{c}{\multirow{4}{*}{4}} & van Baarsen et al.~\cite{van2024fuzzy}  & \multicolumn{1}{c}{\cellcolor{green!30}1624.00} & 970.83 & \multicolumn{1}{c}{3160.00}  & 1877.65 & \multicolumn{1}{c}{6232.00} & 3778.05 & \multicolumn{1}{c}{12376.00}      &   7453.94   & \multicolumn{1}{c}{---}        &   ---      \\ 
                          & \multicolumn{1}{c}{}                   & Bui et al.~\cite{fss25}             & \multicolumn{1}{c}{13078.68} & 2522.63 & \multicolumn{1}{c}{30874.60} & 3446.77 & \multicolumn{1}{c}{30958.79}     &  5633.09   & \multicolumn{1}{c}{---}         &  ---       & \multicolumn{1}{c}{---}        &   ---      \\ 
                          & \multicolumn{1}{c}{}                   & Ours                               & \multicolumn{1}{c}{1909.78} & \cellcolor{green!30}59.38 & \multicolumn{1}{c}{\cellcolor{green!30}2258.83} & \cellcolor{green!30}64.31 & \multicolumn{1}{c}{\cellcolor{green!30}2957.36} & \cellcolor{green!30}78.68 & \multicolumn{1}{c}{4354.93}  & \cellcolor{green!30}103.90 & \multicolumn{1}{c}{7151.31}        &  \cellcolor{green!30}157.46       \\
                          & \multicolumn{1}{c}{}                   & Ours-Px                            & \multicolumn{1}{c}{3577.83} & 211.56  & \multicolumn{1}{c}{3621.63}  & 219.71  & \multicolumn{1}{c}{3664.95} & 218.80  & \multicolumn{1}{c}{\cellcolor{green!30}3752.61} & 219.09 & \multicolumn{1}{c}{\cellcolor{green!30}3839.31} & 229.93 \\ \cline{2-13}
                          & \multicolumn{1}{c}{\multirow{4}{*}{6}} & van Baarsen et al.~\cite{van2024fuzzy}  & \multicolumn{1}{c}{\cellcolor{green!30}2596.00} & 1818.51 & \multicolumn{1}{c}{\cellcolor{green!30}4900.00} & 3066.45 & \multicolumn{1}{c}{\cellcolor{green!30}9508.00}     &   5825.47   & \multicolumn{1}{c}{18724.00}      & ---     & \multicolumn{1}{c}{37156.00}        &   ---      \\ 
                          & \multicolumn{1}{c}{}                   & Bui et al.~\cite{fss25}             & \multicolumn{1}{c}{---}     & ---     & \multicolumn{1}{c}{---}      & ---     & \multicolumn{1}{c}{---}     & ---     & \multicolumn{1}{c}{---}      & ---     & \multicolumn{1}{c}{---}        &   ---      \\ 
                          & \multicolumn{1}{c}{}                   & Ours                               & \multicolumn{1}{c}{9790.15} & \cellcolor{green!30}331.51 & \multicolumn{1}{c}{10314.64} & \cellcolor{green!30}339.08 & \multicolumn{1}{c}{11364.13} & \cellcolor{green!30}362.54 & \multicolumn{1}{c}{\cellcolor{green!30}13464.23}         & \cellcolor{green!30}400.39        & \multicolumn{1}{c}{\cellcolor{green!30}17666.67}        &  \cellcolor{green!30}477.81       \\
                          & \multicolumn{1}{c}{}                   & Ours-Px                            & \multicolumn{1}{c}{21118.37} & 1351.75 & \multicolumn{1}{c}{21183.78} & 1372.92 & \multicolumn{1}{c}{21249.35} & 1367.30 & \multicolumn{1}{c}{21380.25} & 1394.17 & \multicolumn{1}{c}{21511.43} & 1349.09 \\ \hline
\end{tabular}}
\end{table*}

\section{Evaluation}

We provide a full implementation of our protocols, available at\linebreak {{\color{blue}\url{https://github.com/Th0masAndy/FPSI}}}. All experiments are conducted on a single server equipped with AMD EPYC 9554 processor and 512 GB of RAM, with the sender and receiver emulated as two separate threads. Each experiment is repeated ten times, and we report the average total running time and communication cost. 
We set the computational and statistical security parameters to $\kappa = 128$ and $\lambda = 40$, respectively. 
For implementation, we adopt the so-OPPRF from~\cite{alamati2024improved,yang2026}, silent OT from libOTe\footnote{\url{https://github.com/osu-crypto/libOTe}}, secret-shared comparison from~\cite{huang2022cheetah}, and instantiate hash functions using BLAKE3\footnote{\url{https://github.com/BLAKE3-team/BLAKE3}}.

{
\noindent \textbf{Comparison Baselines.}
We compare our protocols with state-of-the-art fuzzy PSI constructions under one-sided assumptions~\cite{van2024fuzzy,dang2025ccs,fss25}.
The evaluation is divided according to two assumptions: unique cell and unique block.
Under the unique cell assumption, we compare against~\cite{van2024fuzzy,fss25}, as our unique cell assumption is strictly weaker than the $2\delta$-apart assumption of~\cite{van2024fuzzy} and equivalent to the mini-universe assumption of~\cite{fss25}.
Under the unique block assumption, we compare against~\cite{van2024fuzzy,dang2025ccs}, since our unique block assumption is strictly weaker than the $4\delta$-apart assumption used in~\cite{van2024fuzzy,dang2025ccs}.
Please refer to Appendix~\ref{appendix: analysis of assumptions} for detailed comparisons of different assumptions.
For fair comparison with~\cite{fss25}, we double its radius parameter, as~\cite{fss25} defines $L_\infty$ balls with side length $\delta$ rather than $2\delta$.

Since these baselines~\cite{van2024fuzzy,dang2025ccs,fss25} consider only receiver-sided assumptions, we evaluate our protocols under the same receiver-sided setting. Our sender-sided results are deferred to Appendix~\ref{appendix: results sender 2delta}.
}

\begin{table*}[t]
\caption{The communication (MB) and running time (s) of fuzzy PSI under the unique block assumption. The set size is $m=n=2^{12}$. ``---'' indicates out-of-memory or execution timeout (> 2.5h). The best result is marked in {\textcolor{green!95!black}{green}}.
}
\label{Tab: performance 4delta}
\resizebox{\textwidth}{!}{
\begin{tabular}{ccccccccccccc}
\hline
\multirow{2}{*}{Metric} & \multirow{2}{*}{\begin{tabular}[c]{@{}c@{}}Dim.\\ $d$\end{tabular}} & \multirow{2}{*}{Protocol} & \multicolumn{2}{c}{$\delta=32$} & \multicolumn{2}{c}{$\delta=64$} & \multicolumn{2}{c}{$\delta=128$} & \multicolumn{2}{c}{$\delta=256$} & \multicolumn{2}{c}{$\delta=512$} \\ \cline{4-13}
 & & & \multicolumn{1}{c}{Comm} & Time & \multicolumn{1}{c}{Comm} & Time & \multicolumn{1}{c}{Comm} & Time & \multicolumn{1}{c}{Comm} & Time & \multicolumn{1}{c}{Comm} & Time \\ \hline
\multirow{12}{*}{$L_\infty$} & \multirow{4}{*}{2} & van Baarsen et al.~\cite{van2024fuzzy} & \multicolumn{1}{c}{195.25}   & 111.97  & \multicolumn{1}{c}{387.25}   & 223.08  & \multicolumn{1}{c}{771.25}   & 446.28   & \multicolumn{1}{c}{1539.25}  & 893.64   & \multicolumn{1}{c}{3075.25}  & 1806.35  \\
 & & Dang et al.~\cite{dang2025ccs}   & \multicolumn{1}{c}{127.56}   & 14.42   & \multicolumn{1}{c}{145.48}   & 16.16   & \multicolumn{1}{c}{164.38}   & 19.12    & \multicolumn{1}{c}{183.55}   & 18.61    & \multicolumn{1}{c}{201.86 }         &   25.58       \\
 & & Ours                            & \multicolumn{1}{c}{48.58}    & \cellcolor{green!30}1.45    & \multicolumn{1}{c}{92.15}    & 2.54    & \multicolumn{1}{c}{179.32}   & 4.54     & \multicolumn{1}{c}{353.75}   & 8.67    & \multicolumn{1}{c}{702.75}         &   17.03       \\
 & & Ours-Px                         & \multicolumn{1}{c}{\cellcolor{green!30}29.76}  & 1.96  & \multicolumn{1}{c}{\cellcolor{green!30}33.70}  & \cellcolor{green!30}2.12  & \multicolumn{1}{c}{\cellcolor{green!30}37.79}  & \cellcolor{green!30}2.32  & \multicolumn{1}{c}{\cellcolor{green!30}41.83}  & \cellcolor{green!30}2.53  & \multicolumn{1}{c}{\cellcolor{green!30}45.75}         &   \cellcolor{green!30}2.71       \\ \cline{2-13}
 & \multirow{4}{*}{4} & van Baarsen et al.~\cite{van2024fuzzy} & \multicolumn{1}{c}{1560.25}  & 932.15  & \multicolumn{1}{c}{3096.25}  & 1836.02 & \multicolumn{1}{c}{6168.25}  & 3732.99  & \multicolumn{1}{c}{12312.25} & 7697.61  & \multicolumn{1}{c}{---}         &  ---        \\
 & & Dang et al.~\cite{dang2025ccs}   & \multicolumn{1}{c}{2289.54}  & 196.76  & \multicolumn{1}{c}{1576.67}  & 142.00  & \multicolumn{1}{c}{2006.78}  & 161.69   & \multicolumn{1}{c}{2578.55}  & 208.96   & \multicolumn{1}{c}{2866.33}         &  280.98        \\
 & & Ours                            & \multicolumn{1}{c}{361.61}   & 8.91   & \multicolumn{1}{c}{710.66}   & 17.14   & \multicolumn{1}{c}{1409.19}  & 34.32    & \multicolumn{1}{c}{2806.76}  & 68.73    & \multicolumn{1}{c}{5603.14}         &  139.48        \\
 & & Ours-Px                         & \multicolumn{1}{c}{\cellcolor{green!30}86.46}  & \cellcolor{green!30}3.93  & \multicolumn{1}{c}{\cellcolor{green!30}98.56}  & \cellcolor{green!30}4.83  & \multicolumn{1}{c}{\cellcolor{green!30}110.85} & \cellcolor{green!30}5.07  & \multicolumn{1}{c}{\cellcolor{green!30}123.06} & \cellcolor{green!30}5.59  & \multicolumn{1}{c}{\cellcolor{green!30}135.24}         & \cellcolor{green!30}5.90        \\ \cline{2-13}
 & \multirow{4}{*}{6} & van Baarsen et al.~\cite{van2024fuzzy} & \multicolumn{1}{c}{9360.25}  & 5507.11 & \multicolumn{1}{c}{---}      & ---     & \multicolumn{1}{c}{---}      & ---      & \multicolumn{1}{c}{---}      & ---      & \multicolumn{1}{c}{---}      & ---      \\
 & & Dang et al.~\cite{dang2025ccs}   & \multicolumn{1}{c}{13731.49} & 1192.01 & \multicolumn{1}{c}{9443.74}  & 821.75  & \multicolumn{1}{c}{12016.79} & 1013.32  & \multicolumn{1}{c}{15450.58} & 1275.52  & \multicolumn{1}{c}{17170.10}         &   1617.10       \\
 & & Ours                            & \multicolumn{1}{c}{2141.00}  & 55.85   & \multicolumn{1}{c}{4241.01}  & 103.50  & \multicolumn{1}{c}{8443.66}  & 219.94   & \multicolumn{1}{c}{16852.30} & 474.79   & \multicolumn{1}{c}{33544.18}         &  996.22        \\
 & & Ours-Px                         & \multicolumn{1}{c}{\cellcolor{green!30}301.11} & \cellcolor{green!30}9.12  & \multicolumn{1}{c}{\cellcolor{green!30}343.89} & \cellcolor{green!30}10.47  & \multicolumn{1}{c}{\cellcolor{green!30}386.23} & \cellcolor{green!30}12.30 & \multicolumn{1}{c}{\cellcolor{green!30}428.84} & \cellcolor{green!30}13.69 & \multicolumn{1}{c}{\cellcolor{green!30}471.97}         &   \cellcolor{green!30}14.29       \\ \hline
\multirow{12}{*}{$L_1$}      & \multirow{4}{*}{2} & van Baarsen et al.~\cite{van2024fuzzy} & \multicolumn{1}{c}{195.77}   & 112.01  & \multicolumn{1}{c}{388.27}   & 224.03  & \multicolumn{1}{c}{773.27}   & 454.50   & \multicolumn{1}{c}{1543.27}  & 893.72   & \multicolumn{1}{c}{3083.27}  & 1787.95  \\
 & & Dang et al.~\cite{dang2025ccs}   & \multicolumn{1}{c}{312.84}   & 155.45  & \multicolumn{1}{c}{381.93}   & 207.62  & \multicolumn{1}{c}{456.06}   & 279.29   & \multicolumn{1}{c}{534.66}   & 359.87   & \multicolumn{1}{c}{618.78}         & 461.89         \\
 & & Ours                            & \multicolumn{1}{c}{\cellcolor{green!30}51.83}    & \cellcolor{green!30}1.57    & \multicolumn{1}{c}{95.43}    & \cellcolor{green!30}2.67    & \multicolumn{1}{c}{182.64}   & \cellcolor{green!30}4.74     & \multicolumn{1}{c}{357.21}   & 9.09    & \multicolumn{1}{c}{706.33}         &  17.25        \\
 & & Ours-Px                         & \multicolumn{1}{c}{80.17}    & 3.97    & \multicolumn{1}{c}{\cellcolor{green!30}93.12}    & 4.32    & \multicolumn{1}{c}{\cellcolor{green!30}106.11}   & 4.82     & \multicolumn{1}{c}{\cellcolor{green!30}119.26}   & \cellcolor{green!30}5.56     & \multicolumn{1}{c}{\cellcolor{green!30}132.36}         & \cellcolor{green!30}6.02         \\ \cline{2-13}
 & \multirow{4}{*}{4} & van Baarsen et al.~\cite{van2024fuzzy} & \multicolumn{1}{c}{1560.77}  & 901.23  & \multicolumn{1}{c}{3097.27}  & 1823.36 & \multicolumn{1}{c}{6170.27}  & 3660.91  & \multicolumn{1}{c}{12316.27} & 7279.21  & \multicolumn{1}{c}{---}      & ---      \\
 & & Dang et al.~\cite{dang2025ccs}   & \multicolumn{1}{c}{3609.97}  & 603.43  & \multicolumn{1}{c}{4944.93}  & 5443.02 & \multicolumn{1}{c}{5517.94}  & 5624.82  & \multicolumn{1}{c}{6376.92}  & 5695.84  & \multicolumn{1}{c}{---}         &  ---        \\
 & & Ours                            & \multicolumn{1}{c}{367.18}   & 9.14   & \multicolumn{1}{c}{716.27}   & 17.49   & \multicolumn{1}{c}{1414.84}  & 34.57    & \multicolumn{1}{c}{2812.54}  & 72.58    & \multicolumn{1}{c}{5609.03}         &  146.14        \\
 & & Ours-Px                         & \multicolumn{1}{c}{\cellcolor{green!30}207.10}   & \cellcolor{green!30}8.34    & \multicolumn{1}{c}{\cellcolor{green!30}240.99}   & \cellcolor{green!30}9.59   & \multicolumn{1}{c}{\cellcolor{green!30}275.07}   & \cellcolor{green!30}10.91    & \multicolumn{1}{c}{\cellcolor{green!30}309.59}   & \cellcolor{green!30}12.01    & \multicolumn{1}{c}{\cellcolor{green!30}343.87}         &  \cellcolor{green!30}13.19        \\ \cline{2-13}
 & \multirow{4}{*}{6} & van Baarsen et al.~\cite{van2024fuzzy} & \multicolumn{1}{c}{9360.77}  & 5481.36 & \multicolumn{1}{c}{---}      & ---     & \multicolumn{1}{c}{---}      & ---      & \multicolumn{1}{c}{---}      & ---      & \multicolumn{1}{c}{---}      & ---      \\
 & & Dang et al.~\cite{dang2025ccs}   & \multicolumn{1}{c}{22946.21} & 7243.12 & \multicolumn{1}{c}{---}      & ---     & \multicolumn{1}{c}{---}      & ---      & \multicolumn{1}{c}{---}      & ---      & \multicolumn{1}{c}{---}      & ---      \\
 & & Ours                            & \multicolumn{1}{c}{2148.88}  & 56.24   & \multicolumn{1}{c}{4248.93}  & 115.48  & \multicolumn{1}{c}{8451.61}  & 234.19   & \multicolumn{1}{c}{16860.39} & 445.43   & \multicolumn{1}{c}{33552.39}         &    989.84      \\
 & & Ours-Px                         & \multicolumn{1}{c}{\cellcolor{green!30}604.10}   & \cellcolor{green!30}22.84   & \multicolumn{1}{c}{\cellcolor{green!30}704.70}   & \cellcolor{green!30}22.41   & \multicolumn{1}{c}{\cellcolor{green!30}804.81}   & \cellcolor{green!30}25.55    & \multicolumn{1}{c}{\cellcolor{green!30}904.76}   & \cellcolor{green!30}29.43    & \multicolumn{1}{c}{\cellcolor{green!30}1004.55}         &  \cellcolor{green!30}38.15        \\ \hline
\multirow{12}{*}{$L_2$}      & \multirow{4}{*}{2} & van Baarsen et al.~\cite{van2024fuzzy} & \multicolumn{1}{c}{211.27}   & 112.01  & \multicolumn{1}{c}{451.27}   & 221.86  & \multicolumn{1}{c}{1027.27}  & 448.16   & \multicolumn{1}{c}{2563.27}  & 891.59   & \multicolumn{1}{c}{7171.27}  & 1803.15  \\
 & & Dang et al.~\cite{dang2025ccs}   & \multicolumn{1}{c}{530.49}   & 151.32  & \multicolumn{1}{c}{642.30}   & 212.06  & \multicolumn{1}{c}{766.64}   & 285.79   & \multicolumn{1}{c}{911.44}   & 383.57   & \multicolumn{1}{c}{1092.79}         &    566.46      \\
 & & Ours                            & \multicolumn{1}{c}{\cellcolor{green!30}52.27}    & \cellcolor{green!30}1.55    & \multicolumn{1}{c}{\cellcolor{green!30}96.33}    & \cellcolor{green!30}2.58    & \multicolumn{1}{c}{183.57}   & \cellcolor{green!30}4.84     & \multicolumn{1}{c}{358.25}   & 8.96    & \multicolumn{1}{c}{707.49}         &  17.82        \\
 & & Ours-Px                         & \multicolumn{1}{c}{108.54}   & 4.33    & \multicolumn{1}{c}{126.57}   & 5.14    & \multicolumn{1}{c}{\cellcolor{green!30}144.23}   & 5.83     & \multicolumn{1}{c}{\cellcolor{green!30}162.12}   & \cellcolor{green!30}6.66     & \multicolumn{1}{c}{\cellcolor{green!30}179.96}         &  \cellcolor{green!30}7.37        \\ \cline{2-13}
 & \multirow{4}{*}{4} & van Baarsen et al.~\cite{van2024fuzzy} & \multicolumn{1}{c}{1576.27}  & 896.00  & \multicolumn{1}{c}{3160.27}  & 1795.90 & \multicolumn{1}{c}{6424.27}  & 3604.21  & \multicolumn{1}{c}{13336.27} & 7208.44  & \multicolumn{1}{c}{---}         &   ---       \\
 & & Dang et al.~\cite{dang2025ccs}   & \multicolumn{1}{c}{6995.61}  & 627.84  & \multicolumn{1}{c}{7493.30}  & 5470.09 & \multicolumn{1}{c}{8644.48}  & 5484.85  & \multicolumn{1}{c}{10379.32} & 5568.10  & \multicolumn{1}{c}{---}         & ---         \\
 & & Ours                            & \multicolumn{1}{c}{367.62}   & \cellcolor{green!30}9.61   & \multicolumn{1}{c}{717.17}   & 17.93   & \multicolumn{1}{c}{1415.77}  & 35.54    & \multicolumn{1}{c}{2813.58}  & 73.14    & \multicolumn{1}{c}{5610.19}         &   141.26       \\
 & & Ours-Px                         & \multicolumn{1}{c}{\cellcolor{green!30}263.22}   & 9.63   & \multicolumn{1}{c}{\cellcolor{green!30}306.83}   & \cellcolor{green!30}11.09   & \multicolumn{1}{c}{\cellcolor{green!30}350.20}   & \cellcolor{green!30}12.45    & \multicolumn{1}{c}{\cellcolor{green!30}394.08}   & \cellcolor{green!30}14.24    & \multicolumn{1}{c}{\cellcolor{green!30}437.73}         & \cellcolor{green!30}15.87         \\ \cline{2-13}
 & \multirow{4}{*}{6} & van Baarsen et al.~\cite{van2024fuzzy} & \multicolumn{1}{c}{9376.27}  & 5485.92 & \multicolumn{1}{c}{---}      & ---     & \multicolumn{1}{c}{---}      & ---      & \multicolumn{1}{c}{---}      & ---      & \multicolumn{1}{c}{---}         &  ---        \\
 & & Dang et al.~\cite{dang2025ccs}   & \multicolumn{1}{c}{43227.86}      & 7374.09     & \multicolumn{1}{c}{---}      & ---     & \multicolumn{1}{c}{---}      & ---      & \multicolumn{1}{c}{---}      & ---      & \multicolumn{1}{c}{---}      & ---      \\
 & & Ours                            & \multicolumn{1}{c}{2149.32}  & 52.40   & \multicolumn{1}{c}{4249.83}  & 109.84  & \multicolumn{1}{c}{8452.55}  & 218.51   & \multicolumn{1}{c}{16861.43} & 447.69   & \multicolumn{1}{c}{33553.55}         &   967.63       \\
 & & Ours-Px                         & \multicolumn{1}{c}{\cellcolor{green!30}687.98}   & \cellcolor{green!30}24.19   & \multicolumn{1}{c}{\cellcolor{green!30}802.92}   & \cellcolor{green!30}25.46   & \multicolumn{1}{c}{\cellcolor{green!30}916.94}   & \cellcolor{green!30}27.46    & \multicolumn{1}{c}{\cellcolor{green!30}1030.87}  & \cellcolor{green!30}32.82    & \multicolumn{1}{c}{\cellcolor{green!30}1144.66}         &  \cellcolor{green!30}40.48        \\ \hline
\end{tabular}}
\end{table*}

\subsection{Performance on Unique Cell Assumptions}

We compare our protocols with the state-of-the-art constructions~\cite{fss25,van2024fuzzy} for $L_\infty$ distance under the unique cell assumption in Table~\ref{Tab: performance 2delta}. 
The results show that our protocols outperform prior works in both communication and computation across most parameter settings.
Since existing works do not support general $L_p$ distance under this assumption, we present our results for $L_1$ and $L_2$ distances in Appendix~\ref{appendix: results Lp 2delta}.

\noindent\textbf{Communication.} Against van Baarsen and Pu~\cite{van2024fuzzy}, our best protocol reduces communication by up to $20\times$ at $d=2$, with the advantage growing substantially as $\delta$ increases, a direct consequence of our sublinear dependence on $\delta$ versus the linear growth of~\cite{van2024fuzzy}. At higher dimensions ($d=4, 6$), \cite{van2024fuzzy} achieves lower communication at small $\delta=32$, but our protocols overtake it as $\delta$ grows. Against Bui et al.~\cite{fss25}, our protocols achieve up to $282\times$ communication reduction at $d=6$, where the communication of~\cite{fss25} blows up due to its $O((\log\delta)^d)$ complexity. At $d=2$, the gap is smaller ($1.2$--$1.7\times$), as~\cite{fss25} benefits from low-dimensional setting.

\noindent\textbf{Computation.} The computational advantage is significant across all settings. Against van Baarsen and Pu~\cite{van2024fuzzy}, our protocols achieve speedups of $4$--$248\times$, with the largest gains at large $\delta$ and large set sizes, reaching over $200\times$ at $n = 2^{16}$ and $\delta = 512$. Against Bui et al.~\cite{fss25}, we achieve speedups of $4$--$4978\times$, with the most dramatic gains at $d=6$, where the running time of~\cite{fss25} becomes prohibitive due to its $O((\log\delta)^d)$ complexity. These gains stem primarily from our exclusive use of symmetric-key primitives, avoiding both the superlinear complexity of~\cite{fss25} and the expensive additive homomorphic operations of~\cite{van2024fuzzy}.

\noindent\textbf{Effect of Prefix Optimization.} These two variants expose a clear communication-computation tradeoff. The non-optimized protocol scales linearly in $\delta$ and achieves the lowest running time for small $\delta$ (32 and 64). The prefix-optimized protocol scales logarithmically in $\delta$, making it superior in communication at large $\delta$; for instance, at $\delta = 512$ and $d=2$, it reduces communication by up to $20\times$ over~\cite{van2024fuzzy} while maintaining a substantial computational advantage.

\subsection{Performance on Unique Block Assumptions}

{We compare our protocols with the state-of-the-art constructions~\cite{dang2025ccs,van2024fuzzy} under the unique block assumption. 
Table~\ref{Tab: performance 4delta} reports results for set size $2^{12}$ as a representative case; other settings exhibit similar trends, since both prior works and our protocols scale linearly with the set size.}

\noindent\textbf{Communication.} Compared to van Baarsen and Pu~\cite{van2024fuzzy}, the communication reduction grows substantially with the threshold: at $\delta = 32$ we achieve $4$--$31\times$ reduction, increasing to $13$--$100\times$ at $\delta = 256$ and up to $67\times$ at $\delta = 512$. This rapid scaling is a direct consequence of our logarithmic dependence on $\delta$, whereas~\cite{van2024fuzzy} incurs communication growing linearly in $\delta$ or $\delta^p$. Against Dang et al.~\cite{dang2025ccs}, our prefix-optimized protocol consistently reduces communication by $4$--$63\times$ across all metrics and parameter settings, with the largest gains at higher dimension ($d = 6$). We note that our protocol achieves better performance at larger parameter settings, owing to the underlying silent OT~\cite{boyle2019efficient} used in our implementation, which offers superior amortized communication efficiency for large batch generation.

\noindent\textbf{Computation.}  The computational advantage is even more pronounced. Against van Baarsen and Pu~\cite{van2024fuzzy}, our protocol achieves speedups of $71$--$1377\times$ across all metrics and dimensions. Against Dang et al.~\cite{dang2025ccs}, we achieve $7$--$568\times$ speedups. 
These dramatic gains stem from two factors. First, our prefix-optimized protocol achieves $O(\log\delta)$ complexity, while~\cite{van2024fuzzy} scales linearly in $\delta$ and~\cite{dang2025ccs} incurs $O((\log\delta)^d)$ complexity. Second, our protocols rely exclusively on symmetric-key primitives, whereas both~\cite{van2024fuzzy,dang2025ccs} depend on expensive public-key operations such as DDH and Paillier.

\noindent\textbf{Effect of Prefix Optimization.}
The prefix-optimized variant attains the best performance in nearly all settings, in contrast to the previous case. This is because, under the \textit{unique block} assumption, the dominant bottleneck is the receiver programming $2^d \cdot d \cdot n \cdot (2\delta+1)$ key-value pairs, while the sender evaluates at only $d \cdot m$ points. Incorporating prefix trie techniques reduces the receiver's overhead by a factor of $O(\delta/\log\delta)$, while increasing the sender's overhead by only $O(\log\delta)$, yielding a clear improvement in overall efficiency. Moreover, this advantage becomes more pronounced as the dimension $d$ and the threshold $\delta$ increase, indicating that the prefix-optimized variant is well-suited to large-scale settings.

%% file: conclusion.tex
\section{Conclusion}
In this work, we presented concretely efficient fuzzy PSI protocols for general 
$L_{p\in[1,\infty]}$ distances under one-sided assumptions, relying solely on 
lightweight symmetric-key primitives. The core technique is new multi-point fuzzy 
matching protocols, upon which we derived protocols flexibly supporting both sender-sided and receiver-sided settings. By incorporating prefix-trie techniques, 
we further achieved $O(\log\delta)$ complexity in the distance threshold $\delta$, improving upon the $O((\log\delta)^d)$ and $O(\delta)$ complexities 
of prior works. Extensive experiments confirm that our protocols significantly 
outperform all prior works under the same assumptions.

%% file: appendix.tex
\section{Open Science}

We follow the principles of the Open Science Policy. Our research artifacts, comprising the source code of the proposed protocols, benchmarking scripts, and comprehensive technical documentation, have been consolidated into an open-source repository, which is accessible at {\color{blue}\url{https://github.com/Th0masAndy/FPSI}}.

\section{Ethical Considerations}

This work provides effective solutions for fuzzy private set intersection tasks. The experiments in this paper are all based on synthetic datasets and do not contain any personal or illegal information. We firmly believe that our research was done ethically.

\section{Generative AI Usage}

This paper used LLMs (including ChatGPT and Gemini) solely to assist with grammar checking and language polishing. The LLMs did not contribute any conceptual ideas, methodological innovations, experimental designs, or analytical insights to this work. All intellectual contributions originate from the authors. The content provided to the LLMs for linguistic refinement contains no sensitive, personal, or ethically problematic information.

\section{Other Related Work}
\label{appendix: other work}

van Baarsen and Pu~\cite{van2024fuzzy} proposed the first general fuzzy PSI construction under the ``\textit{existing disjoint}'' assumption, requiring that each point in both parties' sets is at least $2\delta$ apart from all other points in some dimension. The overall complexity scales as $O((d\delta)^2)$, with no additional exponential factor. Gao et al.~\cite{gao2025efficient} subsequently improved this to $O(d\delta)$, achieving fully linear complexity for the first time. Their protocol, however, still operates under the same two-sided ``\textit{existing disjoint}'' assumption, as the symmetric structure of their construction requires both parties' inputs to satisfy the condition. Zhang et al.~\cite{zhang2025fast} adopted a stronger ``$s$-\textit{separate}'' assumption and proposed an efficient fuzzy PSI protocol from symmetric-key primitives, at the cost of $O(\delta^s)$ overhead, which becomes prohibitive for large $\delta$. Dang et al.~\cite{dang2025ccs} improved upon Gao et al.~\cite{gao2025efficient} under the same assumption via prefix trie techniques, reducing the overhead from $O(d\delta)$ to $O(d\log\delta)$. This was further improved by Yang et al.~\cite{yang2026} via secret-shared OPRF~\cite{alamati2024improved}, eliminating the reliance on public-key primitives.
Recently, van Baarsen and Pu~\cite{van2025} revisited their earlier construction~\cite{van2024fuzzy} using symmetric-key primitives. Their new protocol relies on spatial hashing to map each point to a unique cell, requiring both parties' points to be at least $2\delta$ or $4\delta$ apart to prevent additional leakage. Chongchitmate et al.~\cite{chongchitmate2024approximate} proposed a nearly linear fuzzy PSI protocol, but their construction requires a strong gap assumption: every pair of points across the two sets must be either close or far apart. This is a strict assumption, as it imposes a joint constraint across both parties' inputs.

While these works have achieved substantial efficiency improvements, their advantages rely on two-sided assumptions, which are harder to satisfy in real-world applications. Removing the assumption on either party's input would render these protocols insecure, inapplicable, or subject to dramatically increased overhead.

Two additional constructions operate under the arbitrary distribution setting. The first~\cite{garimella2024computation} incurs $O(mn(\log\delta)^d)$ complexity, which is quadratic in the set sizes $m$ and $n$. The second~\cite{van2025} achieves $O((m+n)(\log\delta)^d)$ complexity. However, both constructions suffer from the worse $O((\log\delta)^d)$ factor in both communication and computation, making them far from practical. Moreover, they support only $L_\infty$ distance with no known efficient extension to general $L_{p\in[1,\infty]}$ distances; we therefore omit a detailed discussion.

\section{Other Functionalities}
\label{appendix: func other}

\subsection{Interval Test}
Chakraborti et al.~\cite{chakraborti2023distance} introduced an efficient interval test protocol, which runs in $O(\log\delta)$ complexity with respect to the interval size $\delta$. We give the ideal functionality in Figure~\ref{Func: Interval}.

\begin{figure}[!h]
\begin{nffunc}{$\FFF^\delta_{\mathsf{Interval}}$}

\noindent \textbf{Parameters:} Sender \SSS and receiver \RRR. Threshold $\delta$.

\noindent \textbf{Functionality:}

\begin{enumerate}

    \item Wait for input $x^{\SSS}$ from \SSS and $x^{\RRR}$ from \RRR.

    \item Output 1 to \RRR if $x^{\SSS}+x^{\RRR} \le \delta$, otherwise output 0.

\end{enumerate}

\end{nffunc}
\vspace{0.5em}
\caption{Functionality of private interval test.}
\label{Func: Interval}
\end{figure}

\subsection{MUX}

The ideal functionality of MUX is given in Figure~\ref{Func:MUX}.

\begin{figure}[!h]
\begin{nffunc}{\Func[MUX]}

\noindent \textbf{Parameters:} Sender \SSS and receiver \RRR.

\noindent \textbf{Functionality:}

\begin{enumerate}

    \item Wait for input $(b_0,x_0)$ from \SSS and $(b_1,x_1)$ from \RRR.

    \item Select $r_0$ randomly, and set $r_1=x_0+x_1-r_0$ if $b_0\oplus b_1=1$ otherwise set $r_1=-r_0$.
    \item Output $r_0$ to \RRR and $r_1$ to \SSS.
    
\end{enumerate}

\end{nffunc}
\vspace{0.5em}
\caption{Functionality of MUX.}
\label{Func:MUX}
\end{figure}

\subsection{Secret-shared Private Equality Test}

The ideal functionality of secret-shared private equality test is given in Figure~\ref{Func: ssPEQT}.

\begin{figure}[!h]
\begin{nffunc}{\Func[ssPEQT]}

\noindent \textbf{Parameters:} Sender \SSS and receiver \RRR.

\noindent \textbf{Functionality:}

\begin{enumerate}

    \item Wait for input $x_0$ from \SSS and $x_1$ from \RRR.

    \item Select $r_0$ randomly, and set $r_1=1-r_0$ if $x_0=x_1$ otherwise set $r_1=-r_0$.
    \item Output $r_0$ to \RRR and $r_1$ to \SSS.
    
\end{enumerate}

\end{nffunc}
\vspace{0.5em}
\caption{Functionality of secret-shared PEQT.}
\label{Func: ssPEQT}
\end{figure}

\subsection{Secret-shared Comparison}

The ideal functionality of secret-shared comparison is given in Figure~\ref{Func: ssCMP}.

\begin{figure}[!h]
\begin{nffunc}{\Func[ssCMP]}

\noindent \textbf{Parameters:} Sender \SSS and receiver \RRR.

\noindent \textbf{Functionality:}

\begin{enumerate}

    \item Wait for input $x_0$ from \SSS and $x_1$ from \RRR.

    \item Select $r_0$ randomly, and set $r_1=1-r_0$ if $x_0\ge x_1$ otherwise set $r_1=-r_0$.
    \item Output $r_0$ to \RRR and $r_1$ to \SSS.
    
\end{enumerate}

\end{nffunc}
\vspace{0.5em}
\caption{Functionality of secret-shared comparison.}
\label{Func: ssCMP}
\end{figure}

\subsection{Private Equality Test}

The ideal functionality of private equality test is given in Figure~\ref{Func: Eq}.

\begin{figure}[!h]
\begin{nffunc}{\Func[Eq]}

\noindent \textbf{Parameters:} Sender \SSS and receiver \RRR.

\noindent \textbf{Functionality:}

\begin{enumerate}

    \item Wait for input $x_0$ from \SSS and $x_1$ from \RRR.

    \item Set $r=1$ if $x_0=x_1$ otherwise set $r=0$.
    \item Output $r$ to \RRR
    
\end{enumerate}

\end{nffunc}
\vspace{0.5em}
\caption{Functionality of private equality test.}
\label{Func: Eq}
\end{figure}

\section{Comparisons of Assumptions}
\label{appendix: analysis of assumptions}

We compare the unique cell assumption of Definition~\ref{def: unique cell} with existing assumptions as follows.

\subsection{Unique Cell}

\noindent\subsubsection{Comparison with disjoint hash assumption.} The disjoint hash assumption~\cite{piske2025distance,richardson2024fuzzy} (abbreviated as disj.\ hash in Figure~\ref{Fig: assumptions}) requires the spatial hash to map each point to a distinct cell, which is identical to our \textit{unique cell} assumption. However, while~\cite{piske2025distance,richardson2024fuzzy} rely on this assumption in their theoretical constructions, their concrete protocols require additional assumptions on the other party.

\subsubsection{Comparison with mini-universe assumption}
We revisit the \textit{mini-universe} assumption from~\cite{fss25,garimella2024computation}. Since \cite{garimella2024computation,fss25} defines $L_\infty$ balls of side length $\delta$ rather than $2\delta$, we double the radius parameter to align with our setting. 
Consider a $d$-dimensional space tiled by cells of side length $2\delta$. For any grid vertex $\veco = (o_1, o_2, \ldots, o_d)$, where coordinate $o_i$ is an integer multiple of $2\delta$, we define the \emph{mini-universe} with origin $\veco$ as the $d$-dimensional hypercube of side length $4\delta$, namely $\mathsf{Univ}_{\veco} := [o_1,\, o_1+4\delta) \times \cdots \times [o_d,\, o_d+4\delta)$.
For any point $\vecx \in \mathbb{U}^d$, we define $\mathsf{Ball}_{\vecx}$ as the $L_\infty$ ball of side length $2\delta$ with lower-left corner at $\vecx$, namely $\mathsf{Ball}_{\vecx} := [x_1,\, x_1+2\delta] \times \cdots \times [x_d,\, x_d+2\delta]$. Here, we use closed intervals to align with our definition, since our ball contains exactly $2\delta+1$ values in each dimension, while the ball in~\cite{garimella2024computation,fss25} only contains $2\delta$ values.  

\begin{definition}[Mini-Universe~\cite{garimella2024computation,fss25}]
A set $W \in \mathbb{U}^{n \times d}$ satisfies the \textit{mini-universe} assumption if for every $\vecx \in W$, there exists a unique mini-universe $\mathsf{Univ}_{\veco}$ containing $\mathsf{Ball}_{\vecx}$. In other words, for any two distinct points $\vecx, \vecx' \in W$ with $\mathsf{Ball}_{\vecx} \subseteq \mathsf{Univ}_{\veco}$ and $\mathsf{Ball}_{\vecx'} \subseteq \mathsf{Univ}_{\veco'}$, it holds that $\mathsf{Univ}_{\veco} \neq \mathsf{Univ}_{\veco'}$.
\end{definition}

{Below, we formally prove that the unique cell assumption is equivalent to the mini-universe assumption.}

\begin{lemma}
A set $W \in \mathbb{U}^{n \times d}$ satisfies the \textit{unique cell} assumption if and only if it satisfies the \textit{mini-universe} assumption.
\end{lemma}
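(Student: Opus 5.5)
The plan is to reduce both assumptions to a statement about the same map, namely the cell identifier $\mathsf{cell}_{2\delta}(\cdot)$. I would first show that, for every point $\vecx \in \mathbb{U}^d$, there is \emph{exactly one} mini-universe containing $\mathsf{Ball}_{\vecx}$. I would also show that its origin is the lower-left grid vertex of the cell containing $\vecx$, i.e. $o_k = 2\delta \lfloor x_k/(2\delta) \rfloor$ for all $k \in [d]$. Once this is established, both directions of the equivalence follow at once, since the definitions compare distinct points by distinct cells and by distinct mini-universes respectively.

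\textbf{Step 1: the per-coordinate containment condition.} Since both $\mathsf{Ball}_{\vecx}$ and $\mathsf{Univ}_{\veco}$ are products of intervals, containment holds iff it holds in every coordinate. In coordinate $k$ the requirement is $[x_k, x_k+2\delta] \subseteq [o_k, o_k+4\delta)$. This is equivalent to $o_k \le x_k$ together with $x_k + 2\delta < o_k + 4\delta$, i.e.
\[
x_k - 2\delta < o_k \le x_k .
\]
The half-open interval $(x_k-2\delta, x_k]$ has length exactly $2\delta$. It therefore contains exactly one integer multiple of $2\delta$, namely $o_k = 2\delta\lfloor x_k/(2\delta)\rfloor = 2\delta\cdot id_k$, where $id_k$ is the $k$-th component of $\mathsf{cell}_{2\delta}(\vecx)$. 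This gives existence and uniqueness of the containing mini-universe, together with the explicit formula for its origin.

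\textbf{Step 2: the bijection.} Define $\phi(\vecx) := \veco$ as above. By Step 1, $\phi(\vecx) = 2\delta\cdot \mathsf{cell}_{2\delta}(\vecx)$ coordinatewise. The map $\CCC \mapsto 2\delta\cdot\CCC$ from cell identifiers to grid vertices is injective, and mini-universes are identified by their origins. Hence for any $\vecx,\vecx'$ we have $\mathsf{Univ}_{\phi(\vecx)} = \mathsf{Univ}_{\phi(\vecx')}$ iff $\mathsf{cell}_{2\delta}(\vecx) = \mathsf{cell}_{2\delta}(\vecx')$.

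\textbf{Step 3: conclude.} The mini-universe condition for $W$ states that distinct $\vecx,\vecx' \in W$ have distinct containing mini-universes. The unique cell condition (Definition~\ref{def: unique cell}) states that they have distinct cells. By Step 2, these two conditions coincide, giving both directions.

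The only delicate point, and the step I expect to need the most care, is the boundary bookkeeping in Step 1. Closed balls of $2\delta+1$ values sit inside half-open universes of $4\delta$ values, and the resulting strict inequality $o_k > x_k - 2\delta$ is exactly what forces uniqueness rather than allowing two candidate origins.
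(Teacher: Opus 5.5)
Your proof is correct and follows essentially the same route as the paper: both reduce $\mathsf{Ball}_{\vecx}\subseteq\mathsf{Univ}_{\veco}$ coordinatewise to $o_k \le x_k < o_k+2\delta$, which forces $\veco$ to be the lower-left corner of the cell containing $\vecx$. Your single existence-and-uniqueness statement, valid for every point, is slightly tidier than the paper's two-direction split. The paper's first direction asserts that the containing mini-universe is unique, but that fact only comes from the inequality derived in its second direction.
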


\begin{proof}
We first show that the \textit{unique cell} assumption implies the \textit{mini-universe} assumption. Let $\vecx \in W$ lie in the grid cell with lower-left corner $\veco$, i.e., $x_i \in [o_i, o_i + 2\delta)$ for all $i \in [d]$. Then $x_i + 2\delta < o_i + 4\delta$, so $[x_i,x_i+2\delta] \subset [o_i,o_i+4\delta)$ for all $i\in[d]$. Therefore, it holds that $\mathsf{Ball}_{\vecx} \subseteq \mathsf{Univ}_{\veco}$, meaning every point in $W$ has its ball $\mathsf{Ball}_{\vecx}$ contained in the mini-universe $\mathsf{Univ}_{\veco}$ determined by its cell's lower-left corner $\veco$. For any two distinct points $\vecx, \vecx' \in W$, the \textit{unique cell} assumption gives $\mathsf{cell}_{2\delta}(\vecx) \neq \mathsf{cell}_{2\delta}(\vecx')$, so $\veco \neq \veco'$ and hence $\mathsf{Univ}_{\veco} \neq \mathsf{Univ}_{\veco'}$. Thus the mini-universe containing each ball is unique.

We next show that the \textit{mini-universe} assumption implies the \textit{unique cell} assumption. Let $\vecx \in W$ with $\mathsf{Ball}_{\vecx} \subseteq \mathsf{Univ}_{\veco}$ for some corner $\veco$. Since $\mathsf{Ball}_{\vecx} = [x_1, x_1+2\delta] \times \cdots \times [x_d, x_d+2\delta]$ and $\mathsf{Univ}_{\veco} = [o_1, o_1+4\delta) \times \cdots \times [o_d, o_d+4\delta)$, the containment implies $o_i \leq x_i < o_i + 2\delta$ for all $i \in [d]$. Since each $o_i$ is an integer multiple of $2\delta$, the region $[o_1, o_1+2\delta) \times \cdots \times [o_d, o_d+2\delta)$ is precisely the grid cell with lower-left corner $\veco$, and $\vecx$ lies in this cell. For any two distinct points $\vecx, \vecx' \in W$ with $\mathsf{Ball}_{\vecx} \subseteq \mathsf{Univ}_{\veco}$ and $\mathsf{Ball}_{\vecx'} \subseteq \mathsf{Univ}_{\veco'}$, the \textit{mini-universe} assumption gives $\mathsf{Univ}_{\veco} \neq \mathsf{Univ}_{\veco'}$, hence $\veco \neq \veco'$. By the above argument, $\vecx$ and $\vecx'$ lie in distinct cells, so $\mathsf{cell}_{2\delta}(\vecx) \neq \mathsf{cell}_{2\delta}(\vecx')$.
\end{proof}

The difference between the two assumptions stems solely from how the $L_\infty$ ball of a point is defined: our construction centers the ball at the point, whereas the structure-aware PSI~\cite{garimella2024computation,fss25} places the ball's lower-left corner at the point. Despite this difference in convention, the two assumptions impose the same distributional constraint on the input set.

\noindent\subsubsection{Comparison with $2\delta$-apart assumption.}
We recall the $2\delta$-apart assumption from~\cite{van2024fuzzy} as follows.

\begin{definition}[$2\delta$-apart]
A set $W \in \mathbb{U}^{n \times d}$ satisfies the \textit{$2\delta$-apart} assumption if for any two distinct points $\vecw, \vecw' \in W$, $\mathsf{dist}_{p}(\vecw, \vecw') \ge 2\delta d^{1/p}$. Specifically, if $p = \infty$, this reduces to $\mathsf{dist}_{\infty}(\vecw, \vecw') \ge 2\delta$.
\end{definition}

The following lemma from \cite{van2024fuzzy} illustrates that the \textit{$2\delta$-apart} assumption is strictly stronger than the \textit{unique cell} assumption. Intuitively, the \textit{unique cell} assumption permits the $L_\infty$ balls of input points to overlap as long as their centers fall into distinct cells, whereas the \textit{$2\delta$-apart} assumption ensures that these $L_\infty$ balls are pairwise disjoint.

\begin{lemma}[\cite{van2024fuzzy}]\label{lemma: 2delta}
Suppose there are multiple $L_p$ balls ($p \in [1,\infty]$) with radius $\delta$ lying in a $d$-dimensional space which is tiled by cells with side length $2\delta$. If these balls' centers are at least $2\delta d^{1/p}$ apart, then for each cell, there is at most one center of the balls lying in this cell. Specifically, if $p=\infty$, then the unique center holds for disjoint balls since $2\delta d^{1/p}$ degrades to $2\delta$ in this case.
\end{lemma}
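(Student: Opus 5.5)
We prove the contrapositive. Suppose two distinct centers $\vecw,\vecw'$ lie in the same cell of side length $2\delta$. The cell is a product of half-open intervals $[o_k, o_k+2\delta)$. Hence for every coordinate $k\in[d]$ we get $|w_k - w'_k| < 2\delta$; over the integer universe this sharpens to $|w_k-w'_k|\le 2\delta-1$.

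The key step is to bound the distance using the per-coordinate bound. For $p=\infty$,
\[
\mathsf{dist}_\infty(\vecw,\vecw') = \max_{k}|w_k-w'_k| < 2\delta .
\]
For $p\in[1,\infty)$,
\[
\mathsf{dist}_p(\vecw,\vecw') = \Bigl(\sum_{k\in[d]} |w_k-w'_k|^p\Bigr)^{1/p} < \bigl(d\,(2\delta)^p\bigr)^{1/p} = 2\delta d^{1/p} .
\]
The inequality is strict because each term is strictly below $(2\delta)^p$ and $d\ge 1$. This contradicts the hypothesis that the centers are at least $2\delta d^{1/p}$ apart. Therefore no cell contains two centers, which is exactly Definition~\ref{def: unique cell}.

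For the $p=\infty$ remark, note that $d^{1/p}\to 1$, so the separation is $2\delta$. Balls of radius $\delta$ being disjoint means their centers are more than $2\delta$ apart in $L_\infty$, and over integers at least $2\delta$ apart with the closed-ball convention. So disjoint balls satisfy the hypothesis and the unique-center conclusion follows.

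The only delicate point is strictness, which comes from the half-open cell boundaries: two points in a common cell differ by strictly less than $2\delta$ in each coordinate. That strictness is what rules out equality with the threshold $2\delta d^{1/p}$. To close the claim of strictness (that the $2\delta$-apart assumption is strictly stronger), I would also give a short example: take $\vecw=(2\delta-1,0,\dots)$ and $\vecw'=(2\delta,0,\dots)$. These lie in distinct cells yet are at distance $1$, so unique cell holds while $2\delta$-apart fails.
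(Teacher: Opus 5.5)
The paper gives no proof of this lemma; it imports it from van Baarsen and Pu. Your argument is correct and is the standard one: cell identifiers are $\lfloor x_k/(2\delta)\rfloor$, so two centers in a common cell differ by strictly less than $2\delta$ in every coordinate. Their $L_p$ distance is therefore strictly below $(d(2\delta)^p)^{1/p}=2\delta d^{1/p}$ (resp.\ below $2\delta$ for $p=\infty$), which contradicts the separation hypothesis.

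Your $p=\infty$ remark is slightly loose: closed integer $L_\infty$ balls of radius $\delta$ are disjoint exactly when their centers are at least $2\delta+1$ apart, not $2\delta$. All you need, though, is that disjointness implies the $2\delta$ hypothesis, and that holds under any ball convention. Your two-point example also correctly shows the converse fails, which the paper only asserts informally.
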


\noindent\subsubsection{Comparison with $2\delta$-disjoint projection assumption.}
The \textit{$2\delta$-disjoint projection} assumption (abbreviated as $2\delta$-disj.\ proj.\ in Figure~\ref{Fig: assumptions}), adopted in~\cite{van2024fuzzy,gao2025efficient,dang2025ccs,van2025,garimella2022structure,yang2026}, requires that for every point $\vecw \in W$, there exists a dimension $k$ such that $|w_k - w'_k| \ge 2\delta$ for all $\vecw' \in W \setminus \{\vecw\}$. It is strictly stronger than the \textit{unique cell} assumption for general $L_p$ distances, since $|w_k - w'_k| \ge 2\delta$ for some dimension $k$ implies $\mathsf{cell}_{2\delta}(\vecw) \neq \mathsf{cell}_{2\delta}(\vecw')$.

To the best of our knowledge, all existing constructions relying on the \textit{$2\delta$-disjoint projection} assumption require it to hold for both parties' input sets, and no known technique extends it to the one-sided setting.

\subsection{Unique Block}

We compare the unique block assumption of Definition~\ref{def: unique block} with existing assumptions as follows.

\subsubsection{Comparison with unique cell assumption.}
We note that the \textit{unique block} assumption is strictly stronger than the \textit{unique cell} assumption, as it requires the $L_\infty$ balls to have no common cells and therefore their centers fall into different cells.

\subsubsection{Comparison with $4\delta$-apart assumption.}
We recall the $4\delta$-apart assumption from~\cite{van2024fuzzy} as follows.

\begin{definition}[$4\delta$-apart]
A set $W \in \mathbb{U}^{n \times d}$ satisfies the \textit{$4\delta$-apart} assumption if for any two distinct points $\vecw, \vecw' \in W$, $\mathsf{dist}_{p}(\vecw, \vecw') \ge 2\delta(d^{1/p}+1)$. In particular, for $p = \infty$, this reduces to $\mathsf{dist}_{\infty}(\vecw, \vecw') \ge 4\delta$.
\end{definition}

The following lemma from \cite{van2024fuzzy} illustrates that the \textit{$4\delta$-apart} assumption is strictly stronger than the \textit{unique block} assumption.

\begin{lemma}[\cite{van2024fuzzy}]\label{lemma: 4delta}
Suppose there are multiple $\delta$-radius $L_p$ balls ($p \in [1,\infty]$) distributed in a $d$-dimensional space which is tiled by cells of side length $2\delta$. If these balls' centers are at least $2\delta (d^{1/p}+1)$ apart from each other, then there exists at most one ball intersecting with the same cell. Specifically, if $p=\infty$, this holds for $L_\infty$ balls with $4\delta$-apart centers.
\end{lemma}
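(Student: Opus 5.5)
We argue by contradiction via the triangle inequality for $\mathsf{dist}_p$. Suppose two distinct centers $\vecw, \vecw' \in W$ satisfy $\mathsf{dist}_p(\vecw,\vecw') \ge 2\delta(d^{1/p}+1)$, and suppose both $\delta$-radius $L_p$ balls $B(\vecw)$ and $B(\vecw')$ intersect the same cell $C$. Pick witnesses $\veca \in B(\vecw)\cap C$ and $\veca' \in B(\vecw')\cap C$. Then
\[
\mathsf{dist}_p(\vecw,\vecw') \le \mathsf{dist}_p(\vecw,\veca) + \mathsf{dist}_p(\veca,\veca') + \mathsf{dist}_p(\veca',\vecw') \le 2\delta + \mathsf{dist}_p(\veca,\veca').
\]
The whole argument therefore reduces to bounding the $L_p$-diameter of a single cell strictly below $2\delta d^{1/p}$.

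For the diameter, we use the spatial hashing convention of Section~\ref{sec:pre:assumptions}. The cell with identifier $id_1\Vert\cdots\Vert id_d$ consists of the points with $\lfloor x_k/(2\delta)\rfloor = id_k$, i.e.\ $x_k \in [2\delta\, id_k,\, 2\delta\, id_k + 2\delta)$. Since $\mathbb{U}$ is an integer domain, any two points $\veca, \veca'$ of the same cell satisfy $|a_k - a'_k| \le 2\delta - 1 < 2\delta$ for every $k\in[d]$. For $p<\infty$ this gives
\[
\mathsf{dist}_p(\veca,\veca') = \Bigl(\sum_{k\in[d]} |a_k-a'_k|^p\Bigr)^{1/p} < (d\,(2\delta)^p)^{1/p} = 2\delta d^{1/p}.
\]
For $p=\infty$ it gives $\max_k |a_k-a'_k| < 2\delta$, which matches $2\delta d^{1/p}=2\delta$ under the convention $d^{1/\infty}=1$.

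Combining the two bounds yields $\mathsf{dist}_p(\vecw,\vecw') < 2\delta + 2\delta d^{1/p} = 2\delta(d^{1/p}+1)$, contradicting the separation assumption. Hence at most one ball intersects each cell. The $L_\infty$ specialization follows immediately: the threshold becomes $2\delta(1+1)=4\delta$, which is exactly the $4\delta$-apart condition.

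The only delicate step is strictness at the boundary. With closed cells, two centers at distance exactly $2\delta(d^{1/p}+1)$ could both reach a shared face, and the non-strict triangle bound would not give a contradiction. The plan therefore relies on the half-open cells induced by $\lfloor\cdot\rfloor$. In a continuous model the same argument goes through, since half-open intervals still give $|a_k-a'_k|<2\delta$ for all $k$.

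Finally, to connect with Definition~\ref{def: unique block}, we note that $\mathsf{neigh}_{2\delta}(\vecw)$ is exactly the set of cells meeting $B(\vecw)$. The conclusion "no cell meets two balls" is therefore the statement $\mathsf{neigh}_{2\delta}(\vecw)\cap\mathsf{neigh}_{2\delta}(\vecw')=\emptyset$, which shows that the $4\delta$-apart assumption implies unique block.
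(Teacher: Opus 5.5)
The paper does not prove this lemma; it imports it from van Baarsen and Pu~\cite{van2024fuzzy}. Your argument is correct and is the standard one: route the distance between the two centers through witnesses in the shared cell, then bound the $L_p$-diameter of a cell strictly below $2\delta d^{1/p}$.

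Your attention to strictness is essential, not pedantic. With closed cells the statement fails at the boundary: for $d=1$ and $p=\infty$, centers $-\delta$ and $3\delta$ are exactly $4\delta$ apart, yet both closed balls touch the closed cell $[0,2\delta]$. So it is precisely the half-open cells induced by $\lfloor\cdot\rfloor$ in $\mathsf{cell}_{2\delta}$, or equivalently the integer domain, that make the lemma true.

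The only loose end is in your final paragraph, which concerns the link to the unique block definition rather than the lemma itself. $\mathsf{neigh}_{2\delta}$ pads its output to $2^d$ identifiers with random values. The disjointness $\mathsf{neigh}_{2\delta}(w)\cap\mathsf{neigh}_{2\delta}(w')=\emptyset$ therefore holds exactly for the genuine cells, but for the padded entries only except with negligible probability.
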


\section{Security Proof}
\label{appendix: Proof of Fuzzy PSI}

\subsection{Threat Model}
\label{appendix: Threat Model}
Similar to prior works~\cite{van2024fuzzy,dang2025ccs,gao2025efficient,van2025,fss25,yang2026}, we consider static semi-honest probabilistic polynomial-time (PPT) adversaries.
Namely, a PPT adversary \AAA passively corrupts either the sender \SSS or the receiver \RRR at the beginning of the protocol and honestly follows the protocol specification.
We use the standard simulation-based security definition for secure two-party computation.
Our construction invokes multiple sub-protocols, and we use the \textit{hybrid model} to describe them.
By convention, a protocol invoking a functionality \FFF is referred to as the \FFF-hybrid model.
We give the formal security definition as follows.

\begin{definition}
Let $\mathsf{view}_{\SSS}^{\Pi}(x, y)$ and $\mathsf{view}_{\RRR}^{\Pi}(x, y)$ be the views of $\SSS$ and $\RRR$ in a protocol $\Pi$, respectively, where $x$ is the input of $\SSS$ and  $y$ is the input of $\RRR$. Let $\mathsf{out}(x, y)$ be the protocol's output of both parties and $\mathcal{F}(x, y)$ be the functionality's output. $\Pi$ is said to securely compute a functionality $\mathcal{F}$ in the semi-honest model if for every PPT adversary \AAA there exists PPT simulators $\mathsf{Sim}_{\SSS}$ and $\mathsf{Sim}_{\RRR}$ such that for all inputs $x$ and $y$,
\begin{equation}
\begin{aligned}
    &\{\mathsf{view}_{\SSS}^{\Pi}(x, y), \mathsf{out}(x, y)\} \approx_c \{\mathsf{Sim}_{\SSS}(x, \mathcal{F}_{\SSS}(x, y)), \mathcal{F}(x, y)\}, \\
    &\{\mathsf{view}_{\RRR}^{\Pi}(x, y), \mathsf{out}(x, y)\} \approx_c \{\mathsf{Sim}_{\RRR}(y, \mathcal{F}_{\RRR}(x, y)), \mathcal{F}(x, y)\}. \nonumber
\end{aligned}
\end{equation}
\end{definition}

\subsection{$L_\infty$ Distance under Sender-sided Unique Cell Assumptions}
\label{appendix: proof sender 2delta Linf}

\begin{proof}

    \textbf{Correctness.} We first consider that for some $\vecq_j \in Q$, there exists $\vecw_i \in W$ such that $\mathsf{dist}_{\infty}(\vecq_j,\vecw_i)\le \delta$. By the correctness of spatial hashing, there exists an index $z^*\in[2^d]$ such that $\CCC^\RRR_{i,z^*}=\CCC^\SSS_{j}$. Since $\mathsf{dist}_{\infty}(\vecq_j,\vecw_i)\le \delta$, it holds that $\abs{q_{j,k}-w_{i,k}}\le \delta$ for all $k\in[d]$. Therefore, for each $k\in[d]$, there exists $t_k\in[-\delta,\delta]$ such that $q_{j,k} + t_k=w_{i,k}$. By the correctness of functionality \Func[so\text{-}OPPRF],  $e^\RRR_{i,k,z^*}\Vert r^\RRR_{i,k,z^*}$  and $e^\SSS_{i,k,z^*}\Vert r^\SSS_{i,k,z^*}$ are the shares of $0^\ell\Vert s_{j,k}$. And then the aggregated shares $e^\RRR_{i,z^*}$ and $e^\SSS_{i,z^*}$ are the shares of $0^\ell$, $r^\RRR_{i,z^*}$ and $r^\SSS_{i,z^*}$ are the shares of $s_j = \sum_{k\in[d]}s_{j,k}$. Since $e^\RRR_{i,z^*}+e^\SSS_{i,z^*}=0$, by the correctness of functionality $\FFF^{\mathsf{Eq}}_{\mathsf{ConRand}}$, it holds that $v^\RRR_{i,z^*}+v^\SSS_{i,z^*}=s_j$. Then, the \RRR reconstructs $v_{i,z^*}=s_j$. After shuffling, there must be $x_{j^*}\in X$ and $x_{j^*} = H_{\lambda'+du}(s_j)\oplus(0^{\lambda'}\Vert \vecq_j)$. Finally, the \RRR unmasks $x_{j^*}$ by $H_{\lambda'+du}(v_{i,z^*})\oplus x_{j^*}$ and correctly gets $\vecq_j$.

    We then consider that for some $\vecq_j \in Q$, it holds $\mathsf{dist}_{\infty}(\vecq_j,\vecw_i) > \delta$ for any $\vecw_i \in W$. We consider the following two cases. 
    
    (1) The first case, there exists an index $z^*\in[2^d]$ such that $\CCC^\RRR_{i,z^*}=\CCC^\SSS_{j}$. Since $\mathsf{dist}_{\infty}(\vecq_j,\vecw_i) > \delta$, it holds that there exists an index $k^*\in[d]$ such that $\abs{q_{j,k^*}-w_{i,k^*}}>\delta$. By the correctness of functionality \Func[so\text{-}OPPRF], the outputs $e^\RRR_{i,k^*,z^*}\Vert r^\RRR_{i,k^*,z^*}$ and $e^\SSS_{i,k^*,z^*}\Vert r^\SSS_{i,k^*,z^*}$ are uniformly random. Then, after aggregation, the shares $e^\RRR_{i,z^*}\Vert r^\RRR_{i,z^*}$ and $e^\SSS_{i,z^*}\Vert r^\SSS_{i,z^*}$ are also uniformly random. By the correctness of functionality $\FFF^{\mathsf{Eq}}_{\mathsf{ConRand}}$, the new shares $v^\RRR_{i,z^*}$ and $v^\SSS_{i,z^*}$ are uniformly random. After reconstruction, $v_{i,z*}$ is uniformly random.
    
    (2) The second case, for any $z\in[2^d]$, it holds that $\CCC^\RRR_{i,z}\neq \CCC^\SSS_{j}$. By the correctness of functionality \Func[so\text{-}OPPRF], the outputs $e^\RRR_{i,k,z}\Vert r^\RRR_{i,k,z}$ and $e^\SSS_{i,k,z}\Vert r^\SSS_{i,k,z}$ are uniformly random for all $z\in [2^d],k\in [d]$. Similarly, $v_{i,z}$ is uniformly random.

    In both cases, $v_{i,z}$ is uniformly random for all $i\in[n],z\in[2^d]$. By setting the bit length of $s_j$ as $\kappa > \lambda +d+ \log mn$, the probability of $v_{i,z}=s_j$ for some $j\in[m]$ is at most $2^{-\lambda}$, which is negligible. On the other hand, if $v_{i,z}\neq s_j$, the decryption $H_{\lambda^\prime+du}(v_{i,z}) \oplus x_{j}$ is uniformly random for all $i\in[n],j\in[m],z\in[2^d]$. By setting $\lambda^\prime = \lambda + d + \log mn$, the probability of valid decryption is at most $2^{-\lambda}$, which is negligible. A union bound shows that the receiver can obtain $\vecq_j$ with the probability at most $2^{-\lambda}(2-2^{-\lambda})<2^{-\lambda+1}$, which is negligible. 
    This completes the correctness proof.
    
\textbf{Corrupted sender.}
The simulator $\Sim[\SSS]$ receives the sender's input $Q$ and proceeds as follows:
\begin{enumerate}

    \item \Sim[\SSS] maintains a table for queries of the random oracle $H$. For each query to $H$, \Sim[\SSS] returns a consistent random value.

    \item $\Sim[\SSS]$ emulates $\Func[so\text{-}OPPRF]$ by sampling uniformly random shares $e_{i,k,z}^\SSS, r_{i,k,z}^\SSS$ for $i \in [n], k \in [d], z \in [2^d]$, and sampling a random function $F$ such that $F(\CCC^{\SSS}_{j}\Vert k \Vert q_{j,k} + t) = {0^\ell} \Vert s_{j,k}$ for $j\in[m],k\in[d], t\in [-\delta,\delta]$, and sends $e_{i,k,z}^\SSS, r_{i,k,z}^\SSS$ and $\OOO^F$ to $\AAA$.
    
    \item $\Sim[\SSS]$ emulates \ensuremath{\FFF^{\mathsf{Eq}}_{\mathsf{ConRand}}} by sampling uniformly random shares $v_{i,z}^\SSS$ for $i \in [n], z \in [2^d]$, and sends them to $\AAA$.
\end{enumerate}

\textit{Indistinguishability.}
The adversary's view in the real execution is identically distributed to its view in the simulated execution. This follows because the values $e_{i,k,z}^\SSS, r_{i,k,z}^\SSS$ received from $\Func[so\text{-}OPPRF]$ and $v_{i,z}^\SSS$ from \ensuremath{\FFF^{\mathsf{Eq}}_{\mathsf{ConRand}}} are uniformly random in the real world and are sampled from the same distribution by the simulator in the ideal world. Moreover, the random function oracle $\OOO^F$ subject to $F(\CCC^{\SSS}_{j}\Vert k \Vert q_{j,k} + t) = {0^\ell} \Vert s_{j,k}$, both of which are independent of \SSS's actual input $Q$.
Therefore, the view in the real-world execution is identically distributed to its view in the simulation.

\textbf{Corrupted receiver.} 
We show the simulator $\Sim[\RRR](W, I)$: 
\begin{enumerate}

    \item \Sim[\RRR] maintains a table for queries of the random oracle $H$. For each query to $H$, \Sim[\RRR] returns a consistent random value.

    \item $\Sim[\RRR]$ emulates $\Func[so\text{-}OPPRF]$ by sampling uniformly random shares $e_{i,k,z}^\RRR, r_{i,k,z}^\RRR$ for $i \in [n], k \in [d], z \in [2^d]$, and sends $e_{i,k,z}^\RRR, r_{i,k,z}^\RRR$ to $\AAA$.
    
    \item $\Sim[\RRR]$ emulates \ensuremath{\FFF^{\mathsf{Eq}}_{\mathsf{ConRand}}} by sampling uniformly random shares $v_{i,z}^\RRR$ for $i \in [n], z \in [2^d]$, and sends them to $\AAA$.
    
    \item For each $\vecq_j \in I$, \Sim[\RRR] samples uniformly random $s_j$ and sets $v_{i,z}^\SSS = s_j - v_{i,z}^\RRR$ for all $i \in [n]$ and $z \in [2^d]$ such that $\CCC_{i, z}^\RRR = \CCC_j^\SSS$ and $\mathsf{dist}(\vecq_j, \vecw_i) \leq \delta$. For the remaining $v_{i,z}^\SSS$'s, \Sim[\RRR] samples uniformly random values. \Sim[\RRR] sends these $v_{i,z}^\SSS$ to \AAA.
    
    \item \Sim[\RRR] initializes an empty set $X$. For each $\vecq_j \in I$, where there exists $\vecw_i$ and $z \in [2^d]$ such that $\CCC_{i, z}^\RRR = \CCC_j^\SSS$ and $\mathsf{dist}(\vecq_j, \vecw_i) \leq \delta$, it computes $x = H(s_j) \oplus (0^{\lambda'} \| \vecq_j)$ and adds $x$ to $X$. For the remaining $m - |I|$ points, \Sim[\RRR] samples uniformly random strings of length $\lambda' + d \cdot u$ and adds them to $X$. \Sim[\RRR] shuffles $X$ and sends them to \AAA.
    
\end{enumerate}

\textit{Indistinguishability.} We prove security via a sequence of hybrids.

\begin{description}
    \item[Hybrid $\mathsf{H}_0$.]
    This is the real execution of the protocol.

    \item[Hybrid $\mathsf{H}_1$.] 
    This hybrid is identical to $\mathsf{H}_0$, except that \Sim[\RRR] acts as \Func[so\text{-}OPPRF] and \ensuremath{\FFF^{\mathsf{Eq}}_{\mathsf{ConRand}}}, but follows the specification of these functionalities and uses the real sender's input $Q$. This is only a conceptual change, and hence the adversary's view in Hybrid $\mathsf{H}_0$ and Hybrid $\mathsf{H}_1$ are identically distributed.

    \item[Hybrid $\mathsf{H}_2$.] 
    This hybrid is identical to $\mathsf{H}_1$, except for the following modifications. 
    For each $\vecq_j \in I$, \Sim[\RRR] sets $v_{i,z}^\SSS = s_j - v_{i,z}^\RRR$ for all $i \in [n]$ and $z \in [2^d]$ such that $\CCC_{i, z}^\RRR = \CCC_j^\SSS$ and $\mathsf{dist}(\vecq_j, \vecw_i) \leq \delta$. For the remaining $v_{i,z}^\SSS$'s, \Sim[\RRR] samples uniformly random values. Besides, for each $\vecq_j \in I$, 
    it computes $x = H(s_j) \oplus (0^{\lambda'} \| \vecq_j)$ and adds $x$ to $X$. For the remaining $m - |I|$ points, \Sim[\RRR] samples uniformly random strings of length $\lambda' + d \cdot u$ and adds them to $X$. 

    In $\mathsf{H}_1$, for each $\vecw_i$ and $z\in[2^d]$ such that there is no $\vecq_j \in Q$ satisfying $\CCC_{i, z}^\RRR = \CCC_j^\SSS$ and $\mathsf{dist}(\vecq_j, \vecw_i) \leq \delta$, there exists at least one $k^*_i$ such that $e_{i, k^*_i, z}^\SSS$ and $e_{i, k^*_i, z}^\RRR$ are uniformly random by the correctness of so-OPPRF. This implies $e_{i, z}^\SSS$ and $e_{i, z}^\RRR$ are uniformly random. By setting the bit length as $\ell = \lambda + d + \log n$, a union bound shows that $e_{i, z}^\SSS \neq - e_{i, z}^\RRR$ with overwhelming probabilities. According to the correctness of \ensuremath{\FFF^{\mathsf{Eq}}_{\mathsf{ConRand}}}, $v_{i,z}^\SSS$ for the above $\vecw_i$ and $z\in[2^d]$ is uniformly random and hence identical in both worlds. 
    Besides, the above analysis implies that $s_{j}$ for $\vecq_j \notin I$ is hidden from \AAA.
    Moreover, due to $|s_{j}| = \kappa$, the probability that the adversary queries $H$ on $s_j$ is negligible. 
    Then, for each $\vecq_j \notin I$, $H(s_j) \xor (0^{\lambda'} \| \vecq_j)$ yields a string identically distributed to a uniform random string.
    Therefore, the adversary's view of Hybrid $\mathsf{H}_2$ is computationally indistinguishable from that of Hybrid $\mathsf{H}_1$.

    \item[Hybrid $\mathsf{H}_3$.] 
    This hybrid is identical to $\mathsf{H}_2$, except that \Sim[\RRR] simulates \Func[so\text{-}OPPRF] and \ensuremath{\FFF^{\mathsf{Eq}}_{\mathsf{ConRand}}} as follows. In particular, \Sim[\RRR] samples uniformly random shares $e_{i,k,z}^\RRR, r_{i,k,z}^\RRR$, $v_{i,z}^\RRR$ for $i \in [n], k \in [d], z \in [2^d]$, and sends them to \AAA.
    
    Since they are uniformly random shares from the adversary's view, the adversary's view in Hybrid $\mathsf{H}_2$ and Hybrid $\mathsf{H}_3$ are identically distributed.
    
\end{description}

Observe that Hybrid $\mathsf{H}_3$ is exactly the simulated execution produced by $\Sim[\RRR]$.
Therefore, the real and ideal executions are computationally indistinguishable, completing the proof.

\end{proof}

\subsection{$L_p$ Distance under Sender-sided Unique Cell Assumptions}
\label{appendix: proof sender 2delta Lp}

\begin{proof}

\textbf{Correctness.} We first consider that for some $\vecq_j \in Q$, there exists $\vecw_i \in W$ such that $\mathsf{dist}_{p}(\vecq_j,\vecw_i)\le \delta$ and therefore $\mathsf{dist}_{\infty}(\vecq_j,\vecw_i)\le \delta$. By the correctness of spatial hashing, there exists an index $z^*\in[2^d]$ such that $\CCC^\RRR_{i,z^*}=\CCC^\SSS_{j}$. Since $\mathsf{dist}_{p}(\vecq_j,\vecw_i)\le \delta$, it holds that $\abs{q_{j,k}-w_{i,k}}\le \delta$ for all $k\in[d]$. Therefore, for each $k\in[d]$, there exists $t_k\in[-\delta,\delta]$ such that $q_{j,k} + t_k=w_{i,k}$. By the correctness of functionality \Func[so\text{-}OPPRF],  $e^\RRR_{i,k,z^*}\Vert r^\RRR_{i,k,z^*}$  and $e^\SSS_{i,k,z^*}\Vert r^\SSS_{i,k,z^*}$ are the shares of $\abs{t_k}^p \Vert s_{j,k}$. And by the correctness of functionality \Func[B2A], $d^\RRR_{i,z^*}$ and $d^\SSS_{i,z^*}$ are the shares of $\sum_{k\in[d]}\abs{t_k}^p = \sum_{k\in[d]}(\abs{q_{j,k}-w_{i,k}})^p$. Since $\sum_{k\in[d]}(\abs{q_{j,k}-w_{i,k}})^p\le \delta^p$, by the correctness of functionality $\FFF^{\mathsf{Cmp}}_{\mathsf{ConRand}}$, it holds that $v^\RRR_{i,z^*}+v^\SSS_{i,z^*}=s_j$. Then, the \RRR reconstructs $v_{i,z^*}=s_j$. After shuffling, there must be $x_{j^*}\in X$ and $x_{j^*} = H_{\lambda'+du}(s_j)\oplus(0^{\lambda'}\Vert \vecq_j)$. Finally, the \RRR unmasks $x_{j^*}$ by $H_{\lambda'+du}(v_{i,z^*})\oplus x_{j^*}$ and correctly gets $\vecq_j$.

We then consider that for some $\vecq_j \in Q$, it holds $\mathsf{dist}_{p}(\vecq_j,\vecw_i) > \delta$ for any $\vecw_i \in W$. We consider the following three cases. 

(1) The first case: for any $z\in[2^d]$, it holds that $\CCC^\RRR_{i,z} \neq \CCC^\SSS_{j}$. By the correctness of functionality \Func[so\text{-}OPPRF], the outputs $e^\RRR_{i,k,z}\Vert r^\RRR_{i,k,z}$ and $e^\SSS_{i,k,z}\Vert r^\SSS_{i,k,z}$ are uniformly random for all $z\in [2^d],k\in [d]$. By the correctness of \Func[B2A], after aggregation, the shares $d^\RRR_{i,z}$ and $d^\SSS_{i,z}$ are also uniformly random. 
By the correctness of $\FFF^{\mathsf{Cmp}}_{\mathsf{ConRand}}$, the new shares $v^\RRR_{i,z}$ and $v^\SSS_{i,z}$ are uniformly random. Therefore, after reconstruction, $v_{i,z}$ is uniformly random.

(2) The second case: there exists an index $z^*\in[2^d]$ such that $\CCC^\RRR_{i,z^*}=\CCC^\SSS_{j}$ but $\mathsf{dist}_{\infty}(\vecq_j,\vecw_i)>\delta$. Then there exists an index $k^*\in[d]$ such that $\abs{q_{j,k^*}-w_{i,k^*}} > \delta$. By the correctness of functionality \Func[so\text{-}OPPRF], the outputs $e^\RRR_{i,k^*,z^*}\Vert r^\RRR_{i,k^*,z^*}$ and $e^\SSS_{i,k^*,z^*}\Vert r^\SSS_{i,k^*,z^*}$ are uniformly random. By the correctness of \Func[B2A], after aggregation, the shares $d^\RRR_{i,z^*}$ and $d^\SSS_{i,z^*}$ are also uniformly random. Similarly, $v_{i,z^*}$ is uniformly random.

(3) The third case: there exists an index $z^*\in[2^d]$ such that $\CCC^\RRR_{i,z^*}=\CCC^\SSS_{j}$ and $\mathsf{dist}_{\infty}(\vecq_j,\vecw_i)\le\delta$. By the correctness of \Func[so\text{-}OPPRF] and \Func[B2A], it holds that $d^\RRR_{i,z^*}+d^\SSS_{i,z^*} = {\sum}_{k\in[d]} (\abs{w_{i,k}-q_{j,k}})^p$.
Since $\mathsf{dist}_{p}(\vecq_j,\vecw_i) > \delta$, therefore $d^\RRR_{i,z^*}+d^\SSS_{i,z^*}>\delta^p$.
By the correctness of functionality $\FFF^{\mathsf{Cmp}}_{\mathsf{ConRand}}$, the outputs $v^\RRR_{i,z^*}$ and $v^\SSS_{i,z^*}$ are uniformly random. After reconstruction, $v_{i,z*}$ is also uniformly random.

In all cases, $v_{i,z}$ is uniformly random for all $i\in[n],z\in[2^d]$. By setting the bit length of $s_j$ as $\kappa > \lambda +d+ \log nm$, the probability of $v_{i,z}=s_j$ for some $j\in[m]$ is at most $2^{-\lambda}$, which is negligible. On the other hand, if $v_{i,z}\neq s_j$, the decryption $H_{\lambda^\prime+du}(v_{i,z}) \oplus x_{j}$ is uniformly random for all $i\in[n],j\in[m],z\in[2^d]$. By setting $\lambda^\prime = \lambda + d + \log mn$, the probability of valid decryption is at most $2^{-\lambda}$, which is negligible. A union bound shows that the receiver can obtain valid $\vecq_j$ with the probability at most $2^{-\lambda}+(1-2^{-\lambda})2^{-\lambda}<2^{-\lambda+1}$, which is negligible.

This completes the correctness proof.

\end{proof}

\subsection{$L_\infty$ Distance under Receiver-sided Unique Cell Assumptions}
\label{appendix: proof receiver 2delta Linf}

\begin{proof}

\textbf{Correctness.} We first consider that for some $\vecq_j \in Q$, there exists $\vecw_i \in W$ such that $\mathsf{dist}_{\infty}(\vecq_j,\vecw_i)\le \delta$. By the correctness of spatial hashing, there exists an index $z^*\in[2^d]$ such that $\CCC^\SSS_{j,z^*}=\CCC^\RRR_{i}$. Since $\mathsf{dist}_{\infty}(\vecq_j,\vecw_i)\le \delta$, it holds that $\abs{q_{j,k}-w_{i,k}}\le \delta$ for all $k\in[d]$. Therefore, for each $k\in[d]$, there exists $t_k\in[-\delta,\delta]$ such that $\CCC^\SSS_{j}\Vert k\Vert q_{j,k} + t_k=w_{i,k}$. By the correctness of functionality \Func[so\text{-}OPPRF],  $e^\RRR_{j,k,z^*}$ and $e^\SSS_{j,k,z^*}$ are the shares of $0^\ell$. And then the aggregated shares $r^\RRR_{j,z^*}$ and $r^\SSS_{j,z^*}$ are still the shares of $0^\ell$. By the correctness of \Func[Eq], it holds that $b_{j,z^*} =1$ and therefore $b_j = 1$.
By the correctness of \Func[OT], \RRR correctly gets $\vecq_j$. 

We then consider that for some $\vecq_j\in Q$, it holds $\mathsf{dist}_{\infty}(\vecq_j,\vecw_i)> \delta$ for any $\vecw_i \in W$. We consider the following two cases. 

(1) The first case: there exists an index $z^*$ such that $\CCC^\SSS_{j,z^*} = \CCC^\RRR_i$. Since $\mathsf{dist}_{\infty}(\vecq_j,\vecw_i)> \delta$, it holds that there exists an index $k^*\in [d]$ such that $\abs{q_{j,k^*}-w_{i,k^*}}>\delta$. By the correctness of functionality \Func[so\text{-}OPPRF], the outputs $e^\RRR_{j,k^*,z^*}$ and $e^\SSS_{j,k^*,z^*}$ are uniformly random. Then, after aggregation, $r^\RRR_{j,z^*}$ and $r^\SSS_{j,z^*}$ are also uniformly random. 

(2) The second case: for any $z\in [2^d]$, it holds that $\CCC^\SSS_{j,z} \neq \CCC^\RRR_i$. By the correctness of functionality \Func[so\text{-}OPPRF], the outputs $e^\RRR_{j,k,z}$ and $e^\SSS_{j,k,z}$ are uniformly random. Then, $r^\RRR_{j,z}$ and $r^\SSS_{j,z}$ are also uniformly random.

In both cases, $r^\RRR_{j,z}$ and $r^\SSS_{j,z}$ are uniformly random for all $j\in [m],z\in [2^d]$. By the correctness of functionality \Func[Eq], the probability of $b_{j,z}=1$ is at most $1/\abs{\FF}$. A union bound shows that the probability of there existing $j\in [m]$ such that $b_{j}=1$ is at most $m2^d/\abs{\FF}$. By setting $  \log \abs{\FF} =\ell= \lambda+d+\log m$, the probability $m2^d/\abs{\FF} = 2^{-\lambda}$ is negligible. By the correctness of \Func[OT], \RRR learns nothing about $\vecq_j$. This completes the correctness proof.

\textbf{Corrupted sender.}
The simulator $\Sim[\SSS]$ receives the sender's input $Q$ and proceeds as follows:
\begin{enumerate}
    \item $\Sim[\SSS]$ emulates $\Func[so\text{-}OPPRF]$ by sampling uniformly random shares $e_{j,k,z}^{\SSS}$ for all $j \in [m]$, $k \in [d]$, and $z \in [2^d]$, and sends them to $\AAA$.

    \item $\Sim[\SSS]$ emulates $\Func[Eq]$ and $\Func[OT]$, both of which output nothing to $\AAA$.
\end{enumerate}

\textit{Indistinguishability.}
The adversary's view in the real execution is identically distributed to its view in the simulated execution. This follows because the values $e_{j,k,z}^{\SSS}$ received from $\Func[so\text{-}OPPRF]$ are uniformly random in the real world and are sampled from the same distribution by the simulator in the ideal world.

\textbf{Corrupted receiver.}
The simulator $\Sim[\RRR]$ receives the receiver's input $W$ and output $I$, and proceeds as follows:
\begin{enumerate}
    \item $\Sim[\RRR]$ emulates $\Func[so\text{-}OPPRF]$ by sampling uniformly random shares $e_{j,k,z}^{\RRR}$ for all $j \in [m]$, $k \in [d]$, and $z \in [2^d]$, and by sampling a random function $F$ such that $F(\CCC^{\RRR}_{i}\Vert k \Vert w_{i,k}+t)=0$ for all $i\in[n]$, $k\in[d]$, and $t\in[-\delta,\delta]$. It then sends these shares $e_{j,k,z}^{\RRR}$ and $\OOO^F$ to $\AAA$.

    \item $\Sim[\RRR]$ emulates $\Func[Eq]$ by padding $I$ with $\bot$ to size $m$ and randomly permuting the padded $I$. For each $j \in [m]$ and $z \in [2^d]$, it sets
    \[
    b_{j,z} =
    \begin{cases}
        1, & \text{if } \vecq_j = I[j] \neq \bot \text{ and there exists } \vecw_i \text{ such that } \\
        & \CCC_i^{\RRR}=\CCC_{j,z}^{\SSS} 
            \text{ and } \mathsf{dist}(\vecq_j,\vecw_i)\le \delta,\\
        0, & \text{otherwise.}
    \end{cases}
    \]
    $\Sim[\RRR]$ sends all values $b_{j,z}$ to $\AAA$.

    \item $\Sim[\RRR]$ emulates $\Func[OT]$ and sends the padded and permuted $I$ to $\AAA$.
\end{enumerate}

\textit{Indistinguishability.} We prove security via a sequence of hybrids.

\begin{description}
    \item[Hybrid $\mathsf{H}_0$.]
    This is the real execution of the protocol.

    \item[Hybrid $\mathsf{H}_1$.] 
    This hybrid is identical to $\mathsf{H}_0$, except that \Sim[\RRR] acts as \Func[so\text{-}OPPRF], \Func[Eq], and \Func[OT], but follows the specification of these functionalities and uses the real sender's input $Q$. This is only a conceptual change, and hence the adversary's view in Hybrid $\mathsf{H}_0$ and Hybrid $\mathsf{H}_1$ are identically distributed.

    \item[Hybrid $\mathsf{H}_2$.]
    This hybrid is identical to $\mathsf{H}_1$, except that instead of computing $b_{j, z}$'s (resp. $\vecu_j$'s) according to \Func[Eq] (resp. \Func[OT]), \Sim[\RRR] derives them utilizing the receiver's input $W$ and output $I$. Specifically, it pads $I$ with $\bot$ to size $m$, applies a random permutation,
    and for every $j\in[m],z\in[2^d]$ sets
    \[
    b_{j,z}=
    \begin{cases}
        1, &\text{if } \vecq_j = I[j] \neq \bot \text{ and there exists }\vecw_i
        \text{ such that } \\ &\CCC_i^{\RRR}=\CCC_{j,z}^{\SSS}
        \text{ and }\mathsf{dist}(\vecq_j,\vecw_i)\le\delta,\\
        0,&\text{otherwise.}
    \end{cases}
    \]
    The values $b_{j, z}$'s (resp.  $\vecu_j$'s) are returned to $\AAA$ as the outputs of $\Func[Eq]$ (resp. $\Func[OT]$).

    The adversary's views in Hybrid $\mathsf{H}_1$ and Hybrid $\mathsf{H}_2$ are identically distributed, because
    functionality Eq outputs exactly these bits $b_{j,z}$ and the outputs of functionality OT are exactly determined by $b_{j,z}$ and $I$.

    \item[Hybrid $\mathsf{H}_3$.]
    This hybrid is identical to $\mathsf{H}_2$, except that we replace the sender's interaction in $\Func[so\text{-}OPPRF]$ with a simulation as follows:
    sample random shares
    $e_{j,k,z}^{\RRR}$ for all $j\in[m],k\in[d],z\in[2^d]$,
    and sample a random function $F$ conditioned on $F(\CCC_i^{\RRR}\Vert k\Vert w_{i,k}+t)=0$
    for all $i\in[n]$, $k\in[d]$, and $t\in[-\delta,\delta]$.
    The simulator sends ${e_{j,k,z}^{\RRR}}$ and $\OOO^F$ to $\AAA$.

    By the security of $\Func[so\text{-}OPPRF]$, the adversarial views of Hybrid $\mathsf{H}_2$ and Hybrid $\mathsf{H}_3$ are identically distributed.
\end{description}

Observe that Hybrid $\mathsf{H}_3$ is exactly the simulated execution produced by $\Sim[\RRR]$.
Therefore, the real and ideal executions are perfectly indistinguishable, completing the proof.

\end{proof}

\subsection{$L_p$ Distance under Receiver-sided Unique Cell Assumptions}
\label{appendix: proof receiver 2delta Lp}

\begin{proof}
\textbf{Correctness.} We first consider that for some $\vecq_j \in Q$, there exists $\vecw_i \in W$ such that $\mathsf{dist}_{p}(\vecq_j,\vecw_i)\le \delta$. By the correctness of spatial hashing, there exists an index $z^*\in[2^d]$ such that $\CCC^\SSS_{j,z^*}=\CCC^\RRR_{i}$. Since $\mathsf{dist}_{p}(\vecq_j,\vecw_i)\le \delta$, it holds that $\abs{q_{j,k}-w_{i,k}}\le \delta$ for all $k\in[d]$. Therefore, for each $k\in[d]$, there exists $t_k\in[-\delta,\delta]$ such that $q_{j,k} + t_k=w_{i,k}$. By the correctness of functionality \Func[so\text{-}OPPRF],  $e^\RRR_{j,k,z^*}$ and $e^\SSS_{j,k,z^*}$ are the shares of $\abs{t_k}^p$. By the correctness of \Func[B2A], $d^\RRR_{j,k,z^*}+d^\SSS_{j,k,z^*} = \abs{t_k}^p$. Then, after aggregation, $r^\RRR_{j,z^*}+r^\SSS_{j,z^*} =\sum_{k\in[d]}\abs{t_k}^p = \sum_{k\in[d]}\abs{q_{j,k}-w_{i,k}}^p$. Since $\mathsf{dist}_p(\vecq_j,\vecw_i) \le \delta$, by the correctness of $\FFF^{\delta^p}_{\mathsf{Interval}}$, it holds that $b_{j,z^*} =1$ and therefore $b_j = 1$.
By the correctness of \Func[OT], \RRR correctly gets $\vecq_j$. 

We then consider that for some $\vecq_j\in Q$, it holds $\mathsf{dist}_{p}(\vecq_j,\vecw_i)> \delta$ for any $\vecw_i \in W$. We consider the following three cases. 

(1) The first case, for any $z\in [2^d]$, it holds that $\CCC^\SSS_{j,z} \neq \CCC^\RRR_i$. By the correctness of functionality \Func[so\text{-}OPPRF], the outputs $e^\RRR_{j,k,z}$ and $e^\SSS_{j,k,z}$ are uniformly random for all $k\in[d],z\in[2^d]$. 
By the correctness of \Func[B2A], the shares $d^\RRR_{j,k,z}$ and $d^\SSS_{j,k,z}$ are still uniformly random.
Then, after aggregation, $r^\RRR_{j,z}$ and $r^\SSS_{j,z}$ are also uniformly random.

(2) The second case, there exists an index $z^*$ such that $\CCC^\SSS_{j,z^*} = \CCC^\RRR_i$ but $\mathsf{dist}_{\infty}(\vecq_j,\vecw_i) > \delta$. Therefore, it holds that there exists an index $k^*\in [d]$ such that $\abs{q_{j,k^*}-w_{i,k^*}}>\delta$. By the correctness of functionality \Func[so\text{-}OPPRF], the outputs $e^\RRR_{j,k^*,z^*}$ and $e^\SSS_{j,k^*,z^*}$ are uniformly random. By the correctness of functionality \Func[B2A], $d^\RRR_{j,k^*,z^*}$ and $d^\SSS_{j,k^*,z^*}$ are also uniformly random. Then, after aggregation, $r^\RRR_{j,z^*}$ and $r^\SSS_{j,z^*}$ are also uniformly random. 

In the first two cases, $r^\RRR_{j,z}$ and $r^\SSS_{j,z}$ are uniformly random for all $j\in [m],z\in [2^d]$. By the correctness of functionality $\FFF^{\delta^p}_{\mathsf{Interval}}$, the probability of $b_{j,z}=1$ is at most $\delta^p/\abs{\FF}$. A union bound shows that the probability of there existing $j\in [m]$ such that $b_{j}=1$ is at most $m2^d\delta^p/\abs{\FF}$. By setting $\log\abs{\FF} = \ell = \lambda+d+\log m + p\log\delta$, the probability $m2^d\delta^p/\abs{\FF} = 2^{-\lambda}$ is negligible. By the correctness of \Func[OT], \RRR learns nothing about $\vecq_j$.

(3) The third case, there exists an index $z^*$ such that $\CCC^\SSS_{j,z^*} = \CCC^\RRR_i$ and $\mathsf{dist}_{\infty}(\vecq_j,\vecw_i) \le \delta$. By the correctness of functionality \Func[so\text{-}OPPRF] and \Func[B2A], $r^\RRR_{j,z^*}+r^\SSS_{j,z^*}=\mathsf{dist}_p(\vecq_j,\vecw_i)^p > \delta^p$. By the correctness of functionality $\FFF^{\delta^p}_{\mathsf{Interval}}$ and \Func[OT], \RRR learns nothing about $\vecq_j$. 

This completes the correctness proof.

\end{proof}

\subsection{$L_\infty$ Distance under Receiver-sided Unique Cell Assumptions with Prefix Optimization}
\label{appendix: proof receiver 4delta Linf prefix}

\begin{proof}

\textbf{Correctness.} We first consider that for some $\vecq_j \in Q$, there exists $\vecw_i \in W$ such that $\mathsf{dist}_{\infty}(\vecq_j,\vecw_i)\le \delta$. By the correctness of spatial hashing, there exists an index $z^*\in[2^d]$ such that $\CCC^\SSS_{j,z^*}=\CCC^\RRR_{i}$. Since $\mathsf{dist}_{\infty}(\vecq_j,\vecw_i)\le \delta$, it holds that $\abs{q_{j,k}-w_{i,k}}\le \delta$ for all $k\in[d]$. 

Therefore, for each $k\in[d]$, $q_{j,k} \in [w_{i,k}-\delta, w_{i,k}+\delta]$. By Lemma~\ref{lemma: prefix}, there exists exactly one index $h^*\in [\mu]$ such that $\tilde{q}_{j,k,h^*}\in \{ \tilde{w}_{i,k,h}\}_{h\in[\mu]}$.
By the correctness of functionality \Func[so\text{-}OPPRF],  $e^\RRR_{j,k,z^*,h^*}$ and $e^\SSS_{j,k,z^*,h^*}$ are the shares of $0^\ell$. By the correctness of functionality $\FFF^{\mathsf{Eq},\mu}_{\mathsf{ConSel}}$, $v^\RRR_{j,k,z^*}+v^\SSS_{j,k,z^*}=e^\RRR_{j,k,z^*,h^*}+e^\SSS_{j,k,z^*,h^*}=0$. Then, after aggregation, $r^\RRR_{j,z^*}+r^\SSS_{j,z^*}=0$. By the correctness of functionality \Func[Eq], $b_{j,z^*}=1$ and therefore $b_j=1$. By the correctness of functionality \Func[OT], \RRR correctly obtains $\vecq_j$.

We then consider that for some $\vecq_j\in Q$, it holds $\mathsf{dist}_{\infty}(\vecq_j,\vecw_i)> \delta$ for any $\vecw_i \in W$. We consider the following two cases. 

(1) The first case, there exists an index $z^*$ such that $\CCC^\SSS_{j,z^*} = \CCC^\RRR_i$. Since $\mathsf{dist}_{\infty}(\vecq_j,\vecw_i)> \delta$, it holds that there exists an index $k^*\in [d]$ such that $\abs{q_{j,k^*}-w_{i,k^*}}>\delta$, therefore $q_{j,k^*} \notin [w_{i,k^*}-\delta,w_{i,k^*}+\delta]$. By Lemma~\ref{lemma: prefix}, it holds that $\{ \tilde{q}_{j,k,h} \}_{h\in[\mu]} \cap \{ \tilde{w}_{i,k,h}\}_{h\in[\mu]} = \emptyset$. By the correctness of functionality \Func[so\text{-}OPPRF], the outputs $e^\RRR_{j,k^*,z^*,h}$ and $e^\SSS_{j,k^*,z^*,h}$ are uniformly random for all $h\in [\mu]$. Then, by the correctness of functionality $\FFF^{\mathsf{Eq},\mu}_{\mathsf{ConSel}}$, $v^\RRR_{j,k,z^*}$ and $v^\SSS_{j,k,z^*}$ are uniformly random. After aggregation, $r^\RRR_{j,z^*}$ and $r^\SSS_{j,z^*}$ are also uniformly random. 

(2) The second case, for any $z\in [2^d]$, it holds that $\CCC^\SSS_{j,z} \neq \CCC^\RRR_i$. By the correctness of functionality \Func[so\text{-}OPPRF], the outputs $e^\RRR_{j,k,z}$ and $e^\SSS_{j,k,z}$ are uniformly random. Then, $r^\RRR_{j,z}$ and $r^\SSS_{j,z}$ are also uniformly random.

In both cases, $r^\RRR_{j,z}$ and $r^\SSS_{j,z}$ are uniformly random for all $j\in [m],z\in [2^d]$. By the correctness of functionality \Func[Eq], the probability of $b_{j,z}=1$ is at most $1/\abs{\FF}$. A union bound shows that the probability of there existing $j\in [m]$ such that $b_{j}=1$ is at most $m2^d/\abs{\FF}$. By setting $\log\abs{\FF} = \ell = \lambda+d+\log m$, the probability $m2^d/\abs{\FF} = 2^{-\lambda}$ is negligible. By the correctness of \Func[OT], \RRR learns nothing about $\vecq_j$.
This completes the correctness proof.

\textbf{Corrupted sender.}
The simulator $\Sim[\SSS]$ receives the sender's input $Q$ and proceeds as follows:
\begin{enumerate}
    \item $\Sim[\SSS]$ emulates $\Func[so\text{-}OPPRF]$ by sampling uniformly random shares $e_{j,k,z,h}^{\SSS}$ for all $j \in [m]$, $k \in [d]$, $z \in [2^d]$, and $h \in [\mu]$ and sends them to $\AAA$.

    \item $\Sim[\SSS]$ emulates $\FFF^{\mathsf{Eq},\mu}_{\mathsf{ConSel}}$ by sampling uniformly random shares $v_{j,k,z}^{\SSS}$ for all $j \in [m]$, $k \in [d]$, and $z \in [2^d]$, and sends them to $\AAA$.

    \item $\Sim[\SSS]$ emulates $\Func[Eq]$ and $\Func[OT]$, both of which output nothing to $\AAA$.
\end{enumerate}

\textit{Indistinguishability.}
The adversary's view in the real execution is identically distributed to its view in the simulated execution. This follows because the values $e_{j,k,z,h}^{\SSS}$ received from $\Func[so\text{-}OPPRF]$ and $v_{j,k,z}^{\SSS}$ from $\FFF^{\mathsf{Eq},\mu}_{\mathsf{ConSel}}$ are uniformly random in the real world and are sampled from the same distribution by the simulator in the ideal world.

\textbf{Corrupted receiver.}
The simulator $\Sim[\RRR]$ receives the receiver's input $W$ and output $I$, and proceeds as follows:
\begin{enumerate}
    \item $\Sim[\RRR]$ emulates $\Func[so\text{-}OPPRF]$ by sampling uniformly random shares $e_{j,k,z,h}^{\RRR}$ for all $j \in [m]$, $k \in [d]$, $z \in [2^d]$, $h \in [\mu]$, and by sampling a random function $F$ such that $F(\CCC^{\RRR}_{i}\Vert k \Vert \tilde{w}_{i,k,h})=0$ for all $i\in[n]$, $k\in[d]$, and $h\in[\mu]$. It then sends these shares $e_{j,k,z,h}^{\RRR}$ and $\OOO^F$ to $\AAA$.

    \item $\Sim[\RRR]$ emulates $\FFF^{\mathsf{Eq},\mu}_{\mathsf{ConSel}}$ by sampling uniformly random shares $v_{j,k,z}^{\RRR}$ for all $j \in [m]$, $k \in [d]$, and $z \in [2^d]$, and sends them to $\AAA$.

    \item $\Sim[\RRR]$ emulates $\Func[Eq]$ by padding $I$ with $\bot$ to size $m$ and randomly permuting the padded $I$. For each $j \in [m]$ and $z \in [2^d]$, it sets
    \[
    b_{j,z} =
    \begin{cases}
        1, & \text{if } \vecq_j = I[j] \neq \bot \text{ and there exists } \vecw_i \text{ such that } \\
        & \CCC_i^{\RRR}=\CCC_{j,z}^{\SSS} 
            \text{ and } \mathsf{dist}(\vecq_j,\vecw_i)\le \delta,\\
        0, & \text{otherwise.}
    \end{cases}
    \]
    $\Sim[\RRR]$ sends all values $b_{j,z}$ to $\AAA$.

    \item $\Sim[\RRR]$ emulates $\Func[OT]$ and sends the padded and permuted $I$ to $\AAA$.
\end{enumerate}

\textit{Indistinguishability.} We prove security via a sequence of hybrids.

\begin{description}
    \item[Hybrid $\mathsf{H}_0$.]
    This is the real execution of the protocol.

    \item[Hybrid $\mathsf{H}_1$.] 
    This hybrid is identical to $\mathsf{H}_0$, except that \Sim[\RRR] acts as \Func[so\text{-}OPPRF], $\FFF^{\mathsf{Eq},\mu}_{\mathsf{ConSel}}$, \Func[Eq], and \Func[OT], but follows the specification of these functionalities and uses the real sender's input $Q$. This is only a conceptual change, and hence the adversary's view in Hybrid $\mathsf{H}_0$ and Hybrid $\mathsf{H}_1$ are identically distributed.

    \item[Hybrid $\mathsf{H}_2$.]
    This hybrid is identical to $\mathsf{H}_1$, except that instead of computing $b_{j, z}$'s (resp. $\vecu_j$'s) according to \Func[Eq] (resp. \Func[OT]), \Sim[\RRR] derives them utilizing the receiver's input $W$ and output $I$. Specifically, it pads $I$ with $\bot$ to size $m$, applies a random permutation,
    and for every $j\in[m],z\in[2^d]$ sets
    \[
    b_{j,z}=
    \begin{cases}
        1, &\text{if } \vecq_j = I[j] \neq \bot \text{ and there exists }\vecw_i
        \text{ such that } \\ &\CCC_i^{\RRR}=\CCC_{j,z}^{\SSS}
        \text{ and }\mathsf{dist}(\vecq_j,\vecw_i)\le\delta,\\
        0,&\text{otherwise.}
    \end{cases}
    \]
    The values $b_{j, z}$'s (resp.  $\vecu_j$'s) are returned to $\AAA$ as the outputs of $\Func[Eq]$ (resp. $\Func[OT]$).

    The adversary's views in Hybrid $\mathsf{H}_1$ and Hybrid $\mathsf{H}_2$ are identically distributed, because
    functionality Eq outputs exactly these bits $b_{j,z}$ and the outputs of functionality OT are exactly determined by $b_{j,z}$ and $I$.

    \item[Hybrid $\mathsf{H}_3$.]
    This hybrid is identical to $\mathsf{H}_2$, except that we replace the sender's interaction in $\Func[so\text{-}OPPRF]$ and $\FFF^{\mathsf{Eq},\mu}_{\mathsf{ConSel}}$ with a simulation as follows.
    $\Sim[\RRR]$ emulates $\Func[so\text{-}OPPRF]$ by sampling uniformly random shares $e_{j,k,z,h}^{\RRR}$ for all $j \in [m]$, $k \in [d]$, $z \in [2^d]$, $h \in [\mu]$, and by sampling a random function $F$ such that $F(\CCC^{\RRR}_{i}\Vert k \Vert \tilde{w}_{i,k,h})=0$ for all $i\in[n]$, $k\in[d]$, and $h\in[\mu]$. It then sends these shares $e_{j,k,z,h}^{\RRR}$ and $\OOO^F$ to $\AAA$. Then, $\Sim[\RRR]$ emulates $\FFF^{\mathsf{Eq},\mu}_{\mathsf{ConSel}}$ by sampling uniformly random shares $v_{j,k,z}^{\RRR}$ for all $j \in [m]$, $k \in [d]$, and $z \in [2^d]$, and sends them to $\AAA$.

    By the security of $\Func[so\text{-}OPPRF]$ and $\FFF^{\mathsf{Eq},\mu}_{\mathsf{ConSel}}$, the adversarial views of Hybrid $\mathsf{H}_2$ and Hybrid $\mathsf{H}_3$ are identically distributed.
\end{description}

Observe that Hybrid $\mathsf{H}_3$ is exactly the simulated execution produced by $\Sim[\RRR]$.
Therefore, the real and ideal executions are perfectly indistinguishable, completing the proof.

\end{proof}

\subsection{$L_\infty$ Distance under Receiver-sided Unique Block Assumptions}
\label{appendix: proof receiver 4delta Linf}

\begin{proof}
\textbf{Correctness.} We first consider that for some $\vecq_j \in Q$, there exists $\vecw_i \in W$ such that $\mathsf{dist}_{\infty}(\vecq_j,\vecw_i)\le \delta$. By the correctness of spatial hashing, there exists an index $z^*\in[2^d]$ such that $\CCC^\SSS_{j}=\CCC^\RRR_{i,z^*}$. Since $\mathsf{dist}_{\infty}(\vecq_j,\vecw_i)\le \delta$, it holds that $\abs{q_{j,k}-w_{i,k}}\le \delta$ for all $k\in[d]$. Therefore, for each $k\in[d]$, there exists $t_k\in[-\delta,\delta]$ such that $q_{j,k} + t_k=w_{i,k}$. By the correctness of functionality \Func[so\text{-}OPPRF],  $e^\RRR_{j,k}$ and $e^\SSS_{j,k}$ are the shares of $0^\ell$. And then the aggregated shares $r^\RRR_{j}$ and $r^\SSS_{j}$ are still the shares of $0^\ell$. By the correctness of \Func[Eq], it holds that $b_{j} =1$. By the correctness of \Func[OT], \RRR correctly obtains $\vecq_j$. 

We then consider that for some $\vecq_j\in Q$, it holds $\mathsf{dist}_{\infty}(\vecq_j,\vecw_i)> \delta$ for any $\vecw_i \in W$. We consider the following two cases. 

(1) The first case: there exists an index $z^*$ such that $\CCC^\SSS_{j} = \CCC^\RRR_{i,z^*}$. Since $\mathsf{dist}_{\infty}(\vecq_j,\vecw_i)> \delta$, it holds that there exists an index $k^*\in [d]$ such that $\abs{q_{j,k^*}-w_{i,k^*}}>\delta$. By the correctness of functionality \Func[so\text{-}OPPRF], the outputs $e^\RRR_{j,k^*}$ and $e^\SSS_{j,k^*}$ are uniformly random. Then, after aggregation, $r^\RRR_{j}$ and $r^\SSS_{j}$ are also uniformly random. 

(2) The second case: for any $z\in [2^d]$, it holds that $\CCC^\SSS_{j} \neq \CCC^\RRR_{i,z}$. By the correctness of functionality \Func[so\text{-}OPPRF], the outputs $e^\RRR_{j,k}$ and $e^\SSS_{j,k}$ are uniformly random. Then, $r^\RRR_{j}$ and $r^\SSS_{j}$ are also uniformly random.

In both cases, $r^\RRR_{j}$ and $r^\SSS_{j}$ are uniformly random for all $j\in [m]$. By the correctness of functionality \Func[Eq], the probability of $b_{j}=1$ is at most $1/\abs{\FF}$. A union bound shows that the probability of there existing $j\in [m]$ such that $b_{j}=1$ is at most $m/\abs{\FF}$. By setting $\log \abs{\FF}= \ell = \lambda+\log m$, the probability $m/\abs{\FF} = 2^{-\lambda}$ is negligible. By the correctness of \Func[OT], \RRR learns nothing about $\vecq_j$.
This completes the correctness proof.

\textbf{Corrupted sender.}
The simulator $\Sim[\SSS]$ receives the sender's input $Q$ and proceeds as follows:
\begin{enumerate}
    \item $\Sim[\SSS]$ emulates $\Func[so\text{-}OPPRF]$ by sampling uniformly random shares $e_{j,k}^{\SSS}$ for all $j \in [m]$ and $k \in [d]$, and sends them to $\AAA$.

    \item $\Sim[\SSS]$ emulates $\Func[Eq]$ and $\Func[OT]$, both of which output nothing to $\AAA$.
\end{enumerate}

\textit{Indistinguishability.}
The adversary's view in the real execution is identically distributed to its view in the simulated execution. This follows because the values $e_{j,k}^{\SSS}$ received from $\Func[so\text{-}OPPRF]$ are uniformly random in the real world and are sampled from the same distribution by the simulator in the ideal world.

\textbf{Corrupted receiver.}
The simulator $\Sim[\RRR]$ receives the receiver's input $W$ and output $I$, and proceeds as follows:
\begin{enumerate}
    \item $\Sim[\RRR]$ emulates $\Func[so\text{-}OPPRF]$ by sampling uniformly random shares $e_{j,k}^{\RRR}$ for all $j \in [m]$ and $k \in [d]$, and by sampling a random function $F$ such that $F(\CCC^{\RRR}_{i, z}\Vert k \Vert w_{i,k}+t)=0$ for all $i\in[n]$, $k\in[d]$, $z \in [2^d]$, and $t\in[-\delta,\delta]$. It then sends these shares $e_{j,k}^{\RRR}$ and $\OOO^F$ to $\AAA$.

    \item $\Sim[\RRR]$ emulates $\Func[Eq]$ by padding $I$ with $\bot$ to size $m$ and randomly permuting the padded $I$. For each $j \in [m]$, it sets
    \[
    b_{j} =
    \begin{cases}
        1, & \text{if } \vecq_j = I[j] \neq \bot \text{ and there exists } \vecw_i \text{ and } z \in [2^d] \\ & \text{ such that } \CCC_{i,z}^{\RRR}=\CCC_{j}^{\SSS} 
            \text{ and } \mathsf{dist}(\vecq_j,\vecw_i)\le \delta,\\
        0, & \text{otherwise.}
    \end{cases}
    \]
    $\Sim[\RRR]$ sends all values $b_{j}$ to $\AAA$.

    \item $\Sim[\RRR]$ emulates $\Func[OT]$ and sends the padded and permuted $I$ to $\AAA$.
\end{enumerate}

\textit{Indistinguishability.} We prove security via a sequence of hybrids.

\begin{description}
    \item[Hybrid $\mathsf{H}_0$.]
    This is the real execution of the protocol.

    \item[Hybrid $\mathsf{H}_1$.] 
    This hybrid is identical to $\mathsf{H}_0$, except that \Sim[\RRR] acts as \Func[so\text{-}OPPRF], \Func[Eq], and \Func[OT], but follows the specification of these functionalities and uses the real sender's input $Q$. This is only a conceptual change, and hence the adversary's view in Hybrid $\mathsf{H}_0$ and Hybrid $\mathsf{H}_1$ are identically distributed.

    \item[Hybrid $\mathsf{H}_2$.]
    This hybrid is identical to $\mathsf{H}_1$, except that instead of computing $b_{j}$'s (resp. $\vecu_j$'s) according to \Func[Eq] (resp. \Func[OT]), \Sim[\RRR] derives them utilizing the receiver's input $W$ and output $I$. Specifically, it pads $I$ with $\bot$ to size $m$, applies a random permutation,
    and for every $j\in[m]$ sets
    \[
    b_{j}=
    \begin{cases}
        1, &\text{if } \vecq_j = I[j] \neq \bot \text{ and there exists }\vecw_i \text{ and } z\in[2^d] \\
        & \text{ such that }\CCC_{i,z}^{\RRR}=\CCC_{j}^{\SSS} \text{ and }\mathsf{dist}(\vecq_j,\vecw_i)\le\delta,\\
        0,&\text{otherwise.}
    \end{cases}
    \]
    The values $b_{j}$'s (resp.  $\vecu_j$'s) are returned to $\AAA$ as the outputs of $\Func[Eq]$ (resp. $\Func[OT]$).

    The adversary's views in Hybrid $\mathsf{H}_1$ and Hybrid $\mathsf{H}_2$ are identically distributed, because
    functionality Eq outputs exactly these bits $b_{j}$ and the outputs $\vecu_{j}$ of functionality OT are exactly determined by $b_{j}$ and $I$.

    \item[Hybrid $\mathsf{H}_3$.]
    This hybrid is identical to $\mathsf{H}_2$, except that we replace the sender's interaction in $\Func[so\text{-}OPPRF]$ with a simulation as follows: sample random shares
    $e_{j,k}^{\RRR}$ for all $j\in[m],k\in[d]$,
    and sample a random function $F$ conditioned on $F(\CCC_{i,z}^{\RRR}\Vert k\Vert w_{i,k}+t)=0$ for all $i\in[n]$, $k\in[d]$, $z\in[2^d]$, and $t\in[-\delta,\delta]$.
    The simulator sends $\{e_{j,k}^{\RRR}\}$ and $\OOO^F$ to $\AAA$.

    By the security of $\Func[so\text{-}OPPRF]$, the adversary’s view of Hybrid $\mathsf{H}_2$ and Hybrid $\mathsf{H}_3$ are identically distributed.
\end{description}

Observe that Hybrid $\mathsf{H}_3$ is exactly the simulated execution produced by $\Sim[\RRR]$.
Therefore, the real and ideal executions are perfectly indistinguishable, completing the proof.

\end{proof}

\subsection{Security Proof of $\mathcal{F}_{\mathsf{ConRand}}$}
\label{appendix: proof con rand}

\begin{proof}
\textbf{Correctness.} For any $e^\SSS$ and $e^\RRR$ such that $f(e^\SSS, e^\RRR) = 1$, the outputs $b^\SSS$ and $b^\RRR$ are secret shares of $1$. By the correctness of \Func[MUX], the MUX outputs $m^\SSS$ and $m^\RRR$ are secret shares of $0$, and hence $z^\SSS + z^\RRR = v^\SSS + v^\RRR$.

Conversely, for any $e^\SSS$ and $e^\RRR$ such that $f(e^\SSS, e^\RRR) = 0$, the outputs $b^\SSS$ and $b^\RRR$ are secret shares of $0$. Since $r^\SSS$ and $r^\RRR$ are sampled uniformly at random, by the correctness of \Func[MUX], $m^\SSS$ and $m^\RRR$ are uniformly random, and consequently $v^\SSS + m^\SSS$ and $v^\RRR + m^\RRR$ are each uniformly random.
This completes the correctness proof.

\textbf{Corrupted sender.}
The simulator $\Sim[\SSS]$ receives the sender's input $(e^\SSS, v^\SSS)$ and output $z^\SSS$, and proceeds as follows:
\begin{enumerate}
    \item $\Sim[\SSS]$ emulates $\Func[ssPEQT]$/$\Func[ssCMP]$ by sampling uniformly random shares $b^{\SSS}$ and sends it to $\AAA$.

    \item $\Sim[\SSS]$ emulates $\Func[MUX]$ by computing $m^{\SSS} = z^{\SSS} - v^{\SSS}$ and sends it to $\AAA$.
\end{enumerate}

\textit{Indistinguishability.}
The adversary's view in the real execution is identically distributed to its view in the simulated execution. This follows because the values $b^{\SSS}$ received from $\Func[ssPEQT]$/$\Func[ssCMP]$ are uniformly random in the real execution and are sampled from the same distribution by the simulator in the ideal execution. Moreover, the value $m^{\SSS}$ is identically distributed in both worlds conditioned on the relation
$z^{\SSS} = v^{\SSS} + m^{\SSS}$.
Therefore, the joint distribution of the adversary's view is identical in the real and ideal executions.

\textbf{Corrupted receiver.} Since the protocol execution is symmetric, the security of the corrupted receiver follows similarly. This completes the proof.

\end{proof}

\subsection{Security Proof of $\mathcal{F}_{\mathsf{ConSel}}$}
\label{appendix: proof con sel}

\begin{proof}

\textbf{Corrupted sender.}
The simulator $\Sim[\SSS]$ receives the sender's input $\{e_i^\SSS\}_{i \in [\mu]}$ and output $m^\SSS$, and proceeds as follows:
\begin{enumerate}
    \item $\Sim[\SSS]$ emulates $\Func[ssPEQT]$/$\Func[ssCMP]$ by sampling uniformly random shares $b_i^{\SSS}$ for $i \in [\mu]$ and sends them to $\AAA$.

    \item $\Sim[\SSS]$ emulates $\Func[MUX]$ and sends $m^\SSS$ to $\AAA$.
\end{enumerate}

\textit{Indistinguishability.}
The adversary's view in the real execution is identically distributed to its view in the simulated execution. This follows because the values $b_i^{\SSS}$ received from $\Func[ssPEQT]$/$\Func[ssCMP]$ are uniformly random in the real execution and are sampled from the same distribution by the simulator in the ideal execution. Moreover, the output $m^{\SSS}$ is identically distributed in both worlds.
Therefore, the joint distribution of the adversary's view is identical in the real and ideal executions.

\textbf{Corrupted receiver.} Since the protocol execution is symmetric, the security of the corrupted receiver follows similarly. This completes the proof.

\end{proof}

\subsection{Proof of Lemma~\ref{lemma: prefix}}
\label{appendix: proof of prefix}

\begin{proof}
By definition 3.1 of~\cite{dang2025ccs}, let $\{ b_j \}_{j\in[\mu]} = \mathsf{PxTrie}(q-\delta,q+\delta)$. Then:
\[
[q-\delta,q+\delta] = \bigcup_{j\in[\mu]} \mathsf{Interval}(b_j)
\]
and for any $b\in \{ b_j \}_{j\in[\mu]}$, if a binary string $b'$ is a prefix of $b$, then:
$$\mathsf{Interval}(b') \nsubseteq [q-\delta,q+\delta]$$

Therefore, for any $b_j$, we have $\mathsf{Interval}(b_j) \subseteq [q-\delta, q+\delta]$. Let $k_j$ be the number of wildcards in $b_j$. Since the length of the interval $|\mathsf{Interval}(b_j)| = 2^{k_j}$ cannot exceed the total length of the range $2\delta+1$, it holds that $2^{k_j} \le 2\delta+1$, which implies $k_j \le \lfloor \log_2(2\delta+1) \rfloor$.

{\textbf{Case 1}: $w \in [q-\delta, q+\delta]$.}
First, we show \textbf{existence}. Since $w \in [q-\delta, q+\delta]$, the completeness property of definition 3.1 of~\cite{dang2025ccs} ensures there exists an index $j^* \in [\mu]$ such that $w \in \mathsf{Interval}(b_{j^*})$. By the definition of $\mathsf{Interval}(\cdot)$, $b_{j^*}$ must be a prefix of $w$. 
As derived above, the wildcard count of $b_{j^*}$ satisfies $k_{j^*} \le \lfloor \log_2(2\delta+1) \rfloor$. By the definition of $\mathsf{PxPath}(w, \delta)$ provided in section~\ref{Sec: prefix-trie prelim}, $b_{j^*}$ is a valid prefix on the path of root-to-$w$ within this wildcard bound. Thus, $b_{j^*} \in \mathsf{PxPath}(w, \delta) \cap \mathsf{PxTrie}(q-\delta, q+\delta)$, and the intersection is non-empty.

Next, we show \textbf{uniqueness} by contradiction. Suppose there exist two distinct nodes $x, y$ in the intersection. Since $x, y \in \mathsf{PxPath}(w, \delta)$, both are prefixes of $w$, implying one must be a prefix of the other. However, definition 3.1 of~\cite{dang2025ccs} states that for any $b \in \mathsf{PxTrie}$, its proper prefixes $b'$ must satisfy $\mathsf{Interval}(b') \not\subseteq [q-\delta, q+\delta]$. This prefix-free property ensures that no node in $\mathsf{PxTrie}$ can be a prefix of another, contradicting the existence of both $x$ and $y$. Thus, the intersection size is exactly 1.

{\textbf{Case 2}: $w \notin [q-\delta, q+\delta]$.} We show the intersection is empty by contradiction. Suppose there exists some $b_i \in \mathsf{PxPath}(w,\delta) \cap \mathsf{PxTrie}(q-\delta,q+\delta)$. 
The membership $b_i \in \mathsf{PxPath}(w,\delta)$ implies $b_i$ is a prefix of $w$, so $w \in \mathsf{Interval}(b_i)$. 
The membership $b_i \in \mathsf{PxTrie}(q-\delta,q+\delta)$ implies $\mathsf{Interval}(b_i) \subseteq [q-\delta, q+\delta] $ per definition 3.1 of~\cite{dang2025ccs}. 
It follows that $w \in [q-\delta, q+\delta]$, which contradicts the assumption. Hence, the intersection must be empty.

Combining the above two cases completes the proof.

\end{proof}

\begin{figure}[!h]
\begin{nffunc}{\ensuremath{\FFF^{f,\mu}_{\mathsf{ExtConSel}}}}

\noindent \textbf{Parameters:} Sender \SSS and Receiver \RRR. Condition function $f(\cdot,\cdot)$ where $f$ can be an equality or a comparison function. Input size $\mu$

\noindent \textbf{Functionality:}

\begin{enumerate}

    \item Wait for input $\{(e^\SSS_1,r^\SSS_1), \cdots\!,(e^\SSS_\mu,r^\SSS_\mu)\}$ from \SSS and $\{(e^\RRR_1,r^\RRR_1), \cdots\!,(e^\RRR_\mu,r^\RRR_\mu)\}$ from \RRR, where there exists at most one $(e^\SSS_i,e^\RRR_i)$ with $f(e^\SSS_i,e^\RRR_i) = 1$.

    \item Compute $b_i:= f(e^\SSS_i,e^\RRR_i)$ for all $i\in[\mu]$.

    \item If exist $b_i=1$, generate two random shares such that $z^\SSS + z^\RRR = r_i^\SSS + r_i^\RRR$. 
    
    Otherwise, generate two random shares $(z^\SSS,z^\RRR)$.
    \item Output $z^\SSS$ to \SSS and $z^\RRR$ to \RRR.
    
\end{enumerate}

\end{nffunc}
\caption{Functionality of Extended Conditional Selection.}
\label{Func: eq sel ext}
\end{figure}

\begin{figure}[!h]
    \begin{nfprot}{$\Pi^{\mathsf{Eq},\mu}_{\mathsf{ExtConSel}}$}
    \noindent \textbf{Parameters:} Ideal functionality \Func[MUX] and \Func[ssPEQT].
    
    \noindent \textbf{Input:} \SSS inputs $\{(e_i^\SSS,r_i^\SSS)\}_{i \in [\mu]}$ and \RRR inputs $\{(e_i^\RRR,r_i^\RRR)\}_{i \in [\mu]}$.
    
    \noindent \textbf{Protocol:}
    \begin{enumerate}
        \item \SSS and \RRR invoke \Func[ssPEQT], where \SSS inputs $\{e_i^\SSS\}_{i \in [\mu]}$ and \RRR inputs $\{e_i^\RRR\}_{i \in [\mu]}$. \SSS receives $\{b_i^\SSS\}_{i \in [\mu]} \in \bool^\mu$, and \RRR receives $\{b_i^\RRR\}_{i \in [\mu]} \in \bool^\mu$. 

        \item \SSS and \RRR invoke \Func[MUX], where \SSS inputs $\{(b_i^\SSS, r_i^\SSS)\}_{i \in [\mu]}$ and \RRR inputs $\{(b_i^\RRR, r_i^\RRR)\}_{i \in [\mu]}$. \SSS receives $\{t_i^\SSS\}_{i \in [\mu]}$ and \RRR receives $\{t_i^\RRR\}_{i \in [\mu]}$.

        \item \SSS computes $t^\SSS := \sum_{i \in [\mu]} t_i^\SSS$ and \RRR computes $t^\RRR := \sum_{i \in [\mu]} t_i^\RRR$

        \item \SSS computes $b^\SSS := \Xor_{i \in [\mu]} b_i^\SSS$ and \RRR computes $b^\RRR := \Xor_{i \in [\mu]} b_i^\RRR$.
        
        \item \SSS randomly samples $r^\SSS$ and \RRR randomly samples $r^\RRR$.
        
        \item \SSS and \RRR invoke \Func[MUX], where \SSS inputs $(1-b^\SSS, r^\SSS)$ and \RRR inputs $(b^\RRR, r^\RRR)$. \SSS receives $m^\SSS$ and \RRR receives $m^\RRR$.

        \item \SSS outputs $z^\SSS = m^\SSS+t^\SSS$ and \RRR outputs $z^\RRR=m^\RRR+t^\RRR$.

    \end{enumerate}
    
    \end{nfprot}
    \vspace{0.5em}
    \caption{Protocol of Extended Equality-conditional Selection.}
    \label{Prot: eq sel ext}
\end{figure}

\begin{figure}[t]
\centering

\begin{subfigure}[t]{\columnwidth}
\centering
\begin{tikzpicture}[x=0.85cm, y=0.85cm,
                    every node/.style={font=\small}]
  \draw[->, thick] (-0.3, 0) -- (8.3, 0);

  \foreach \x in {0, 1.5, 5, 7}{
    \draw[thick] (\x, -0.06) -- (\x, 0.06);
  }
  \foreach \x in {1.5, 5, 7, 0}{
    \filldraw (\x, 0) circle (1.2pt);
  }
  \node[below=3pt] at (1.5, -0.03) {$q_{j,k}$};
  \node[below=3pt] at (5,   -0.03) {$q^{*}=w^{*}$};
  \node[below=3pt] at (7,   -0.03) {$w_{i,k}$};

  \draw[blue, thick, decorate, decoration={brace, amplitude=5pt}]
        (0, 0.45) -- (5, 0.45);
  \node[blue, above=8pt, align=center] at (2.5, 0.45)
        {matched prefix interval\\[-2pt]
         $\tilde{q}_{j,k,h}=\tilde{w}_{i,k,h'}$};

  \draw[<->, red, thick] (1.5, -0.8) -- (5, -0.8);
  \node[red, below=1pt] at (3.25, -0.8) {$|q^{*}-q_{j,k}|$};

  \draw[<->, red, thick] (5, -0.8) -- (7, -0.8);
  \node[red, below=1pt] at (6, -0.8) {$|w_{i,k}-w^{*}|$};

  \draw[<->, dashed, gray, thick] (1.5, -1.55) -- (7, -1.55);
  \node[gray, below=1pt] at (4.25, -1.55) {$|w_{i,k}-q_{j,k}|$};
\end{tikzpicture}
\caption{Left sub-interval: $q_{j,k}<w_{i,k}$, upper endpoint $q^{*}=w^{*}$.}
\label{fig:prefix-decomp-left}
\end{subfigure}

\vspace{0.6em}

\begin{subfigure}[t]{\columnwidth}
\centering
\begin{tikzpicture}[x=0.85cm, y=0.85cm,
                    every node/.style={font=\small}]
  \draw[->, thick] (-0.3, 0) -- (8.3, 0);
  \foreach \x in {7, 0.5, 2.5, 6}{
    \draw[thick] (\x, -0.06) -- (\x, 0.06);
  }
  \foreach \x in {0.5, 2.5, 6, 7}{
    \filldraw (\x, 0) circle (1.2pt);
  }
  \node[below=3pt] at (0.5, -0.03) {$w_{i,k}$};
  \node[below=3pt] at (2.5, -0.03) {$q^{*}=w^{*}$};
  \node[below=3pt] at (6,   -0.03) {$q_{j,k}$};
  \draw[blue, thick, decorate, decoration={brace, amplitude=5pt}]
        (2.5, 0.45) -- (7, 0.45);
  \node[blue, above=8pt, align=center] at (4.75, 0.45)
        {matched prefix interval\\[-2pt]
         $\tilde{q}_{j,k,h}=\tilde{w}_{i,k,h'}$};
  \draw[<->, red, thick] (0.5, -0.8) -- (2.5, -0.8);
  \node[red, below=1pt] at (1.5, -0.8) {$|w^{*}-w_{i,k}|$};
  \draw[<->, red, thick] (2.5, -0.8) -- (6, -0.8);
  \node[red, below=1pt] at (4.25, -0.8) {$|q_{j,k}-q^{*}|$};
  \draw[<->, dashed, gray, thick] (0.5, -1.55) -- (6, -1.55);
  \node[gray, below=1pt] at (3.25, -1.55) {$|q_{j,k}-w_{i,k}|$};
\end{tikzpicture}
\caption{Right sub-interval: $q_{j,k}>w_{i,k}$, lower endpoint $q^{*}=w^{*}$.}
\label{fig:prefix-decomp-right}
\end{subfigure}

\caption{An example to compute the distances with prefix trie techniques}
\label{Fig: prefix distance}
\end{figure}

\begin{figure}[h]
    \begin{nfprot}{$\Pi^p_{\mathsf{getDist}}$} 

    \noindent \textbf{Parameters:} \SSS inputs $(s^\SSS,q,\tilde{q},\sigma)\in \FF\times \UU \times \UU \times \{0,1\}$. \RRR inputs $s^\RRR \in \FF$. Functionality $\Func[Mult]$. $\mu = \log\delta+1$.
    
    \noindent \textbf{Protocol:}
    \begin{enumerate}
        \item \SSS computes $e := \abs{q-q^*}$ where $q^* = \mathsf{UpBound}(\tilde{q})$ if $\sigma=0$ and $q^* = \mathsf{LowBound}(\tilde{q})$ otherwise.
        \item If $p=1$: \begin{itemize}
                            \item \SSS computes $d^\SSS := s^\SSS + e$ and \RRR computes $d^\RRR := s^\RRR$. 
                        \end{itemize}
        \item If $p=2$: \begin{itemize}
                    \item \SSS computes $y^\SSS = s^\SSS+e$ and \RRR sets $y^\RRR = s^\RRR$.
                    \item \SSS and \RRR invoke $\Func[Mult]$ where \SSS inputs $y^\SSS$ and \RRR inputs $y^\RRR$. \SSS receives $m^\SSS$ and \RRR receives $m^\RRR$.
                    \item \SSS computes $d^\SSS :=  y^\SSS\cdot y^\SSS + 2\cdot m^\SSS$ and \RRR computes $d^\RRR := y^\RRR\cdot y^\RRR + 2\cdot m^\RRR$ where $d^\SSS + d^\RRR = (y^\SSS + y^\RRR)^2 = (s^\SSS +e+ s^\RRR)^2 $.
                    \end{itemize}
        \item \SSS outputs $d^\SSS$ and \RRR outputs $d^\RRR$.
        
    \end{enumerate}
    
    \end{nfprot}
    \vspace{0.5em}
    \caption{Protocol for computing distance for $p=1$ and $p=2$. The protocol can be extended to arbitrary $p$.}
    \label{Prot: sum distance}
\end{figure}

\section{Additional Protocols under Unique Cell Assumptions}

\subsection{Sender-sided Setting}
\label{appendix: other sender 2delta Lp}
The protocol for $L_p$ distances under sender-sided unique cell assumption is given in Figure~\ref{Prot: sender 2delta Lp}. The proof is given in Appendix~\ref{appendix: proof sender 2delta Lp}.

\begin{figure*}[t]
    \begin{nfprot}{\ensuremath{\Pi_\mathsf{FPSI}^{L_p}}\xspace}
        \noindent \textbf{Parameters:}  Sender \SSS and receiver \RRR with input sizes $m$ and $n$. Threshold \( \delta \). Bit length $\lambda^\prime = \lambda + d+\log mn$, $\ell = \lambda + d+\log n+p\log\delta$, and $u = \log \abs{\UU}$.

        \noindent \textbf{Input:} \SSS inputs \( Q=\{\vecq_j\}_{j\in [m]} \in \mathbb{U}^{d\times m} \) and \RRR inputs \( W=\{\vecw_i\}_{i\in [n]} \in \mathbb{U}^{d\times n} \).

        \noindent \textbf{Protocol:}
        \begin{enumerate}
            \item For $j\in[m]$, \SSS computes $\CCC_j^\SSS = \mathsf{cell}_{2\delta}(\vecq_j)$. For $i\in [n]$, \RRR computes $\{ \CCC^{\RRR}_{i,z}\}_{z\in [2^d]} = \mathsf{neigh}_{2\delta}(\vecw_i)$.
            
            \item \SSS sets $\mathsf{List} = \left\{ \left( \CCC^{\SSS}_{j} \Vert k \Vert q_{j,k} + t, \abs{t}^p \big\Vert s_{j,k}  \right) \right\}_{j\in[m],k\in[d], t\in[\text{-}\delta,\delta]}$ and computes $s_{j} = {\sum}_{k\in[d]}s_{j,k}$ where $s_{j,k}\overset{\$}{\leftarrow} \FF_{2^\kappa}$.
            \item \RRR and \SSS invoke functionality \Func[so\text{-}OPPRF] where \SSS inputs $\mathsf{List}$ and \RRR inputs $\{\CCC^{\RRR}_{i,z} \Vert k \Vert w_{i,k}\}_{i\in[n],k\in[d],z\in[2^d]}$. \RRR receives $\{e^\RRR_{i,k,z}\Vert r^\RRR_{i,k,z}\}_{i\in [n],k\in [d],z\in[2^d]}$ and \SSS receives $\{e^\SSS_{i,k,z}\Vert r^\SSS_{i,k,z}\}_{i\in [n],k\in [d],z\in[2^d]}$ and $\OOO^F$.

            \item For $i\in [n],k\in [d],z\in[2^d]$, \SSS and \RRR invoke {\Func[B2A], where \SSS inputs $e^{\SSS}_{i,k,z}$ and \RRR inputs $e^{\RRR}_{i,k,z}$. \SSS receives $d^{\SSS}_{i,k,z}$ and \RRR receives $d^{\RRR}_{i,k,z}$}.

            \item For $i\in [n],z\in[2^d]$, \RRR computes $d_{i,z}^{\RRR} = {\sum}_{k\in[d]}d^{\RRR}_{i,k,z} $ and $r_{i,z}^{\RRR} = {\sum}_{k\in[d]}r^{\RRR}_{i,k,z} $, and \SSS computes $d_{i,z}^{\SSS} = {\sum}_{k\in[d]}d^{\SSS}_{i,k,z} $ and $r_{i,z}^{\SSS} = {\sum}_{k\in[d]}r^{\SSS}_{i,k,z}$.

            \item For $i\in [n],z\in[2^d]$, \RRR and \SSS invoke \ensuremath{\FFF^{\mathsf{Cmp}}_{\mathsf{ConRand}}} where \RRR inputs $(d_{i,z}^\RRR , r_{i,z}^\RRR )$ and \SSS inputs $(d_{i,z}^\SSS , r_{i,z}^\SSS )$. \RRR receives $v_{i,z}^\RRR$ and \SSS receives $v_{i,z}^\SSS$.

            \item For $i\in[n],z\in[2^d]$, \SSS sends $v^\SSS_{i,z}$ and \RRR reconstructs $v_{i,z} = v^\SSS_{i,z} + v^\RRR_{i,z}$.

            \item \SSS computes $X = \{ H_{\lambda^\prime+du}(s_j)\oplus(0^{\lambda^\prime}\Vert \vecq_j)\}_{j\in[m]}$ and sends shuffled $X$ to \RRR.

            \item \RRR sets $I = \emptyset$ and parses $X$ as $\{ x_{j}\}_{j\in [m]}$.  For $j\in [m],i\in[n],z\in[2^d]$, if $H_{\lambda^\prime+du}(v_{i,z}) \oplus x_{j} = 0^{\lambda^\prime} \Vert \vecq$, update $I = I \cup \{\vecq\}$.
        \end{enumerate}

    \end{nfprot}
    \vspace{0.5em}
    \caption{Protocol of fuzzy PSI for $L_p$ distance under sender-sided unique cell assumption.}
    \label{Prot: sender 2delta Lp}
\end{figure*}

\subsection{Receiver-sided Setting}
\label{appendix: other receiver 2delta Lp}
The protocol for $L_p$ distances under the receiver-sided unique cell assumption is given in Figure~\ref{Prot: receiver 2delta Lp}. The proof is given in Appendix~\ref{appendix: proof receiver 2delta Lp}.

\begin{figure*}[t]
    \begin{nfprot}{\ensuremath{\Pi_\mathsf{FPSI}^{L_p}}\xspace}
        \noindent \textbf{Parameters:}  Sender \SSS and receiver \RRR with input sizes $m$ and $n$. Threshold \( \delta \). Bit length $\ell = \lambda +d+ \log m + p\log\delta$.

        \noindent \textbf{Input:} \SSS inputs \( Q=\{\vecq_j\}_{j\in [m]} \in \mathbb{U}^{d\times m} \) and \RRR inputs \( W=\{\vecw_i\}_{i\in [n]} \in \mathbb{U}^{d\times n} \).

        \noindent \textbf{Protocol:}
        \begin{enumerate}
            \item For $i\in[n]$, \RRR computes $\CCC_i^\RRR = \mathsf{cell}_{2\delta}(\vecw_i)$. \SSS shuffles $Q$ and computes $\{ \CCC^{\SSS}_{j,z}\}_{z\in [2^d]} = \mathsf{neigh}_{2\delta}(\vecq_j)$ for $j\in [m]$.
            
            \item \RRR sets $\mathsf{List} = \{(\CCC^{\RRR}_{i} \Vert k \Vert w_{i,k}+t,\abs{t}^p) \}_{i\in[n],k\in[d],t\in[\text{-}\delta,\delta]}$
            \item \RRR and \SSS invoke functionality \Func[so\text{-}OPPRF] where \RRR inputs $\mathsf{List}$ and \SSS inputs $\{\CCC^{\SSS}_{j,z} \Vert k\Vert q_{j,k}\}_{j\in [m],k\in [d],z\in [2^d]}$. \RRR receives $\{e^\RRR_{j,k,z}\}_{j\in [m],k\in [d],z\in [2^d]}$ and $\OOO^F$ and \SSS receives $\{e^\SSS_{j,k,z}\}_{j\in [m],k\in [d],z\in [2^d]}$.

            \item For $j\in [m],k\in [d], z\in[2^d]$,  \RRR and \SSS invoke functionality \Func[B2A], where \RRR inputs $e^{\RRR}_{j,k,z}$ and \SSS inputs $e^{\SSS}_{j,k,z}$. \RRR receives $d^{\RRR}_{j,k,z}$ and \SSS receives $d^{\SSS}_{j,k,z}$.

            \item For $j\in [m],z\in [2^d]$, \RRR computes $r^{\RRR}_{j,z}={\sum}_{k\in[d]}d^{\RRR}_{j,k,z}$ and \SSS computes $r^{\SSS}_{j,z}={\sum}_{k\in[d]}d^{\SSS}_{j,k,z}$.

            \item For $j\in [m],z\in[2^d]$, \RRR and \SSS invoke functionality \ensuremath{\FFF^{\delta^p}_{\mathsf{Interval}}} where \RRR inputs $r^{\RRR}_{j,z}$ and \SSS inputs $r^{\SSS}_{j,z}$. \RRR receives $b_{j,z} = \mathbf{1}\{ r^{\RRR}_{j,z} + r^{\SSS}_{j,z} \le \delta^p \}$.

            \item For $j\in [m]$, \RRR computes $b_j = {\bigvee}_{z\in[2^d]}b_{j,z}$.

            \item For $j\in [m]$, \RRR and \SSS invoke functionality \Func[OT] where \RRR inputs $b_j$ and \SSS inputs $(\bot,\vecq_j)$. \RRR receives $\vecu_j$.

            \item \RRR outputs $I = \{ \vecu_j \; \vert \; b_j=1 \;\text{for} \;  j\in[m] \}$.

        \end{enumerate}

    \end{nfprot}
    \vspace{0.5em}
    \caption{Protocol of fuzzy PSI for $L_p$ distance under receiver-sided unique cell assumption.}
    \label{Prot: receiver 2delta Lp}
\end{figure*}

\section{Additional Prefix-optimized Protocols}
\label{appendix: other protocols}

\subsection{Sender-sided Setting}
\label{appendix: other protocols for sender}

The prefix-optimized protocol under the sender-sided \textit{unique cell} assumption for $L_\infty$ distances is given in Figure~\ref{Prot: sender 2delta Linf px}. The protocol for $L_p$ distances can be optimized analogously, and we omit the details for brevity.
The core idea follows Section~\ref{Sec: overview prefix trie}. Here, we extend $\FFF^{f,\mu}_{\mathsf{ConSel}}$ to a more general functionality $\FFF^{f,\mu}_{\mathsf{ExtConSel}}$ that handles additional inputs. Specifically, $\FFF^{f,\mu}_{\mathsf{ExtConSel}}$ takes as input a set of tuples, each consisting of a condition and an associated value to be randomized, and randomizes each value depending on whether its condition is satisfied. $\FFF^{f,\mu}_{\mathsf{ConSel}}$ can be viewed as a special case where the condition itself serves as the value. The ideal functionality of $\FFF^{f,\mu}_{\mathsf{ExtConSel}}$ is given in Figure~\ref{Func: eq sel ext}, and its instantiation, which follows the same structure as $\FFF^{f,\mu}_{\mathsf{ConSel}}$, is presented in Figure~\ref{Prot: eq sel ext}.

We note that when invoking $\FFF^{f,\mu}_{\mathsf{ExtConSel}}$, there is no need to additionally invoke $\mathsf{EqRand}$, as the output shares $v^\RRR_{i,z}$ and $v^\SSS_{i,z}$ are already re-randomized for non-matching points $\vecw_i$.

\begin{figure*}[t]
    \begin{nfprot}{\ensuremath{\Pi_\mathsf{FPSI\text{-}Px}^{L_\infty}}\xspace}
        \noindent \textbf{Parameters:}  Sender \SSS and receiver \RRR with input sizes $m$ and $n$. Threshold \( \delta \). Bit length $\lambda^\prime = \lambda + d + \log mn$, $\ell = \lambda +d+ \log nd\mu$ and $u=\log\abs{\UU}$.

        \noindent \textbf{Input:} \SSS inputs \( Q=\{\vecq_j\}_{j\in [m]} \in \mathbb{U}^{d\times m} \) and \RRR inputs \( W=\{\vecw_i\}_{i\in [n]} \in \mathbb{U}^{d\times n} \).

        \noindent \textbf{Protocol:}
        \begin{enumerate}
            \item For $j\in[m]$, \SSS computes $\CCC_j^\SSS = \mathsf{cell}_{2\delta}(\vecq_j)$. For $i\in [n]$, \RRR computes $\{ \CCC^{\RRR}_{i,z}\}_{z\in [2^d]} = \mathsf{neigh}_{2\delta}(\vecw_i)$.

            \item \SSS encodes $\mathsf{List} = \{ (\CCC^{\SSS}_{j}\Vert k \Vert \tilde{q}_{j,k,h},0^\ell \Vert s_{j,k})\}_{j\in[m],k\in[d], h\in[\mu]}$ where $s_{j,k} \overset{\$}{\leftarrow} \FF_{2^\kappa}$ and $\{\tilde{q}_{j,k,h}\}_{h\in[\mu]} = \textsf{PxTrie}(q_{j,k}\!-\!\delta,q_{j,k}\!+\delta)$.
            
            \item \RRR computes $\mathsf{query} = \{\CCC^{\RRR}_{i,z} \Vert k \Vert \tilde{w}_{i,k,h}\}_{i\in[n],k\in[d],z\in[2^d],h\in [\mu]}$ where $\{\tilde{w}_{i,k,h}\}_{h\in[\mu]} = \textsf{PxPath}(w_{i,k},\delta)$
            \item \RRR and \SSS invoke functionality \Func[so\text{-}OPPRF] where \SSS inputs $\mathsf{List}$ and \RRR inputs $\mathsf{query}$. \RRR receives $\{e^\RRR_{i,k,z,h}\Vert r^\RRR_{i,k,z,h}\}_{i\in[n],k\in[d],z\in[2^d],h\in [\mu]}$ and \SSS receives $\{e^\SSS_{i,k,z,h}\Vert r^\SSS_{i,k,z,h} \}_{i\in[n],k\in[d],z\in[2^d],h\in [\mu]}$.

            \item For $i\in [n],k\in [d],z\in[2^d]$, \RRR and \SSS invoke $\FFF^{\mathsf{Eq},\mu}_{\mathsf{ExtConSel}}$ where \RRR inputs $\{(e^\RRR_{i,k,z,h}, r^\RRR_{i,k,z,h})\}_{h\in[\mu]}$ and \SSS inputs $\{(-e^\SSS_{i,k,z,h},r^\SSS_{i,k,z,h})\}_{h\in[\mu]}$. \RRR receives $v^\RRR_{i,k,z}$ and \SSS receives $v^\SSS_{i,k,z}$.

            \item For $i\in[n],z\in[2^d]$, \RRR computes $v^\RRR_{i,z}={\sum}_{k\in[d]}v^\RRR_{i,k,z}$. \SSS computes $v^\SSS_{i,z}={\sum}_{k\in[d]}v^\SSS_{i,k,z}$.

            \item For $i\in[n],z\in[2^d]$, \SSS sends $v^\SSS_{i,z}$ and \RRR reconstructs $v_{i,z} = v^\SSS_{i,z} + v^\RRR_{i,z}$.

            \item For $j\in[m]$, \SSS computes $s_{j} = {\sum}_{k\in[d]}s_{j,k}$.

            \item \SSS computes $X = \{ H_{\lambda^\prime+du}(s_j)\oplus(0^{\lambda^\prime}\Vert \vecq_j)\}_{j\in[m]}$ and sends shuffled $X$ to \RRR.

            \item \RRR sets $I = \emptyset$ and parses $X$ as $\{ x_{j}\}_{j\in [m]}$.  For $j\in [m],i\in[n],z\in[2^d]$, if $H_{\lambda^\prime+du}(v_{i,z}) \oplus x_{j} = 0^{\lambda^\prime} \Vert \vecq$, update $I = I \cup \{\vecq\}$.
            
        \end{enumerate}

    \end{nfprot}
    \vspace{0.5em}
    \caption{Protocol of prefix-optimized fuzzy PSI for $L_\infty$ distance under sender-sided unique cell assumption.}
    \label{Prot: sender 2delta Linf px}
\end{figure*}

\subsection{Receiver-sided Setting}
\label{appendix: receiver 2delta Lp px}

The prefix-optimized protocol under the receiver-sided \textit{unique cell} assumption for $L_p$ distances is given in Figure~\ref{Prot: receiver 2delta Lp px}. Since each prefix covers a range of points rather than a single value, the receiver cannot program the exact $p$-th power distance $|t|^p$ as in the $L_\infty$ construction. We therefore adopt the approach of~\cite{dang2025ccs}, where the receiver programs the distance from $w_{i,k}$ to the endpoint of the interval covered by its prefix. 
The sender then evaluates the so-OPPRF to obtain shares of this distance and computes the total distance via MPC multiplication and addition.

Specifically, let $\{\tilde{w}_{i,k,h'}\}_{h'\in[\mu]}$ and $\{\tilde{q}_{j,k,h}\}_{h\in[\mu]}$ denote the prefixes of $w_{i,k}$ and $q_{j,k}$, respectively, with upper endpoints $w^*$ and $q^*$. The receiver programs $|w_{i,k} - w^*|$ via the so-OPPRF, and the sender evaluates at $\tilde{q}_{j,k,h}$. In addition, the sender knows $q_{j,k}$ and $q^*$ locally, therefore it can compute $|q^* - q_{j,k}|$ directly. If $\tilde{q}_{j,k,h}$ matches some $\tilde{w}_{i,k,h'}$, then $q^* = w^*$, and the distance decomposes as $|w_{i,k} - q_{j,k}| = |w_{i,k} - w^*| + |q^* - q_{j,k}|$. Since $|w_{i,k} - w^*|$ is secret-shared between the two parties and $|q^* - q_{j,k}|$ is known to the sender, the $p$-th power can be expanded via the binomial theorem:
\[
|w_{i,k} - q_{j,k}|^p = \sum_{c=0}^{p} \binom{p}{c} |w_{i,k} - w^*|^{c} \cdot |q^* - q_{j,k}|^{p-c}.
\]
This decomposition holds when $q_{j,k} \le w_{i,k}$; the symmetric case $q_{j,k} > w_{i,k}$ is handled using the lower endpoint instead. To cover both cases, we split $[w_{i,k}-\delta,\, w_{i,k}+\delta]$ into $[w_{i,k}-\delta,\, w_{i,k}]$ and $[w_{i,k}+1,\, w_{i,k}+\delta]$, applying the upper endpoint to the left sub-interval and the lower endpoint to the right. We use $\sigma \in \{0, 1\}$ to index the two sub-intervals. The two cases are illustrated in Figure~\ref{Fig: prefix distance}, and the detailed sub-protocol for computing the distance is given in Figure~\ref{Prot: sum distance}.

\begin{figure*}[t]
    \begin{nfprot}{\ensuremath{\Pi_\mathsf{FPSI\text{-}Px}^{L_p}}\xspace}
        \noindent \textbf{Parameters:}  Sender \SSS and receiver \RRR with input sizes $m$ and $n$. Threshold \( \delta \). Bit length $\ell = \lambda +d+ \log m + p\log\delta$. Prefix length $\mu = 1+ \log\delta$

        \noindent \textbf{Input:} \SSS inputs \( Q=\{\vecq_j\}_{j\in [m]} \in \mathbb{U}^{d\times m} \) and \RRR inputs \( W=\{\vecw_i\}_{i\in [n]} \in \mathbb{U}^{d\times n} \).

        \noindent \textbf{Protocol:}
        \begin{enumerate}
            \item For $i\in[n]$, \RRR computes $\CCC_i^\RRR = \mathsf{cell}_{2\delta}(\vecw_i)$. \SSS shuffles $Q$ and computes $\{ \CCC^{\SSS}_{j,z}\}_{z\in [2^d]} = \mathsf{neigh}_{2\delta}(\vecq_j)$ for $j\in [m]$.

            \item \RRR sets $\mathsf{List} = \{(\CCC^{\RRR}_{i} \Vert k \Vert \sigma \Vert \tilde{w}_{i,k,h,\sigma},0^\ell\Vert\abs{w^*-w_{i,k}}) \}_{i\in[n],k\in[d],h\in[\mu],\sigma\in [0,1]}$ where $\{\tilde{w}_{i,k,h,0}\}_{h\in[\mu]} = \mathsf{PxTrie}(w_{i,k}-\delta,w_{i,k})$ and $\{\tilde{w}_{i,k,h,1}\}_{h\in[\mu]} = \mathsf{PxTrie}(w_{i,k}+1,w_{i,k}+\delta)$ and $w^*=\mathsf{UpBound}(\tilde{w}_{i,k,h,\sigma})$ if $\sigma=0$ otherwise $w^*=\mathsf{LowBound}(\tilde{w}_{i,k,h,\sigma})$.

            \item \SSS computes $\mathsf{query} = \{\CCC^{\SSS}_{j,z} \Vert k\Vert \sigma \Vert \tilde{q}_{j,k,h}\}_{j\in [m],k\in [d],z\in [2^d],h\in [\mu],\sigma\in [0,1]}$ where $\{\tilde{q}_{j,k,h} \}_{h\in [\mu]} = \mathsf{PxPath}(q_{j,k},\delta/2)$.

            \item \RRR and \SSS invoke functionality \Func[so\text{-}OPPRF] where \RRR inputs $\mathsf{List}$ and \SSS inputs $\mathsf{query}$. \RRR receives $\{e^\RRR_{j,k,z,h,\sigma}\Vert r^\RRR_{j,k,z,h,\sigma}\}_{j\in [m],k\in [d],z\in [2^d],h\in [\mu],\sigma\in [0,1]}$ and \SSS receives $\{e^\SSS_{j,k,z,h,\sigma}\Vert r^\SSS_{j,k,z,h,\sigma}\}_{j\in [m],k\in [d],z\in [2^d],h\in [\mu],\sigma\in [0,1]}$.

            \item For $j\in [m],k\in [d], z\in[2^d],h\in [\mu],\sigma\in [0,1]$,  \RRR and \SSS invoke functionality \Func[B2A], where \RRR inputs $r^{\RRR}_{j,k,z,h,\sigma}$ and \SSS inputs $r^{\SSS}_{j,k,z,h,\sigma}$. \RRR receives $s^{\RRR}_{j,k,z,h,\sigma}$ and \SSS receives $s^{\SSS}_{j,k,z,h,\sigma}$.

            \item For $j\in [m],k\in [d], z\in[2^d],h\in [\mu],\sigma\in [0,1]$, \RRR and \SSS invoke $\Pi^p_{\mathsf{getDist}}$ where \RRR inputs $s^{\RRR}_{j,k,z,h,\sigma}$ and \SSS inputs $(s^{\SSS}_{j,k,z,h,\sigma},q_{j,k},\tilde{q}_{j,k,h},\sigma)$. \RRR receives $v^{\RRR}_{j,k,z,h,\sigma}$ and \SSS receives $v^{\SSS}_{j,k,z,h,\sigma}$.

            \item For $j\in [m],k\in [d], z\in[2^d]$, \RRR and \SSS invoke functionality $\FFF^{\mathsf{Eq},2\cdot\mu}_{\mathsf{ExtConSel}}$ where \RRR inputs $\{(e^\RRR_{j,k,z,h,\sigma},v^\RRR_{j,k,z,h,\sigma})\}_{h\in [\mu],\sigma \in [0,1]}$ and \SSS inputs $\{(-e^\SSS_{j,k,z,h,\sigma},v^\SSS_{j,k,z,h,\sigma})\}_{h\in [\mu],\sigma \in [0,1]}$. \RRR receives $d^\RRR_{j,k,z}$ and \SSS receives $d^\SSS_{j,k,z}$.

            \item For $j\in [m],z\in [2^d]$, \RRR computes $r^{\RRR}_{j,z}={\sum}_{k\in[d]}d^{\RRR}_{j,k,z}$ and \SSS computes $r^{\SSS}_{j,z}={\sum}_{k\in[d]}d^{\SSS}_{j,k,z}$.

            \item For $j\in [m],z\in[2^d]$, \RRR and \SSS invoke functionality \ensuremath{\FFF^{\delta^p}_{\mathsf{Interval}}} where \RRR inputs $r^{\RRR}_{j,z}$ and \SSS inputs $r^{\SSS}_{j,z}$. \RRR receives $b_{j,z} = \mathbf{1}\{ r^{\RRR}_{j,z} + r^{\SSS}_{j,z} \le \delta^p \}$.

            \item For $j\in [m]$, \RRR computes $b_j = {\bigvee}_{z\in[2^d]}b_{j,z}$.

            \item For $j\in [m]$, \RRR and \SSS invoke functionality \Func[OT] where \RRR inputs $b_j$ and \SSS inputs $(\bot,\vecq_j)$. \RRR receives $\vecu_j$.

            \item \RRR outputs $I = \{ \vecu_j \; \vert \; b_j=1 \;\text{for} \;  j\in[m] \}$.
        \end{enumerate}

    \end{nfprot}
    \vspace{0.5em}
    \caption{Protocol of prefix-optimized fuzzy PSI for $L_p$ distance under receiver-sided unique cell assumption.}
    \label{Prot: receiver 2delta Lp px}
\end{figure*}

\section{Additional Protocols under Unique Block Assumptions}
\label{appendix: other protocols 4delta}

The protocol under the receiver-sided \textit{unique block} assumption for $L_p$ distances is given in Figure~\ref{Prot: receiver 4delta Lp}. The key difference from the $L_\infty$ construction is that the receiver programs the $p$-th power of the per-dimension distance rather than $\mathbf{0}$, and invokes $\FFF^{\delta^p}_{\mathsf{Interval}}$ to test whether the aggregated distance falls within the threshold $\delta^p$. The prefix-optimized variant for $L_\infty$ distance follows the approach of Section~\ref{Sec: overview prefix trie}, and the detailed protocol is given in Figure~\ref{Prot: receiver 4delta Linf prefix}. 

The construction for general $L_p$ distances follows the same idea as in Appendix~\ref{appendix: receiver 2delta Lp px}, with the prefix-optimized protocol given in Figure~\ref{Prot: receiver 4delta Lp prefix}. The protocols under the sender-sided unique block assumption follow analogously, and we omit the details for brevity.

\begin{figure*}[t]
    \begin{nfprot}{\ensuremath{\Pi_\mathsf{FPSI}^{L_p}}\xspace}
        \noindent \textbf{Parameters:}  Distance threshold \( \delta \). Bit length $\ell = \lambda+\log m+p\log \delta$.

        \noindent \textbf{Input:} \SSS inputs \( Q=\{\vecq_j\}_{j\in [m]} \in \mathbb{U}^{d\times m} \) and \RRR inputs \( W=\{\vecw_i\}_{i\in [n]} \in \mathbb{U}^{d\times n} \).

        \noindent \textbf{Protocol:}
        \begin{enumerate}
            \item For $i\in[n]$, \RRR computes $\{\CCC_{i,z}^\RRR\}_{z\in[2^d]} = \mathsf{neigh}_{2\delta}(\vecw_i)$. \SSS shuffles $Q$ and computes $ \CCC^{\SSS}_{j} = \mathsf{cell}_{2\delta}(\vecq_j)$ for $j\in [m]$.

            \item \RRR sets $\mathsf{List}\! =\! \{(\CCC^{\RRR}_{i,z} \Vert k \Vert w_{i,k}+t, \!\abs{t}^p) \}_{i\in[n],k\in[d],z\in[2^d],t\in[\text{-}\delta,\delta]}$
            \item \RRR and \SSS invoke functionality \Func[so\text{-}OPPRF] where \RRR inputs $\mathsf{List}$ and \SSS inputs $\{\CCC^{\SSS}_j \Vert k\Vert q_{j,k}\}_{j\in[m],k\in[d]}$. \RRR receives $\{e^\RRR_{j,k}\}_{j\in [m],k\in [d]}$ and \SSS receives $\{e^\SSS_{j,k}\}_{j\in [m],k\in [d]}$.

            \item For $j\in [m],k\in [d]$, \SSS and \RRR invoke functionality {\Func[B2A], where \SSS inputs $e^{\SSS}_{j,k}$ and \RRR inputs $e^{\RRR}_{j,k}$. \SSS receives $s^{\SSS}_{j,k}$ and \RRR receives $s^{\RRR}_{j,k}$}.

            \item For $j\in[m]$, \RRR computes $r^\RRR_j={\sum}_{k\in[d]}s^\RRR_{j,k}$ and \SSS computes $r^\SSS_j={\sum}_{k\in[d]}s^\SSS_{j,k}$

            \item For $j\in [m]$, \RRR and \SSS invoke functionality \ensuremath{\FFF^{\delta^p}_{\mathsf{Interval}}} where \RRR inputs $r^{\RRR}_{j}$ and \SSS inputs $r^{\SSS}_{j}$. \RRR receives $b_{j} = \mathbf{1}\{ r^{\RRR}_{j} + r^{\SSS}_{j} \le \delta^p \}$.

            \item For $j\in [m]$, \RRR and \SSS invoke functionality \Func[OT] where \RRR inputs $b_j$ and \SSS inputs $(\bot,\vecq_j)$. \RRR receives $\vecu_j$.

            \item \RRR outputs $I = \{ \vecu_j \; \vert \; b_j=1 \;\text{for} \;  j\in[m] \}$.

        \end{enumerate}

    \end{nfprot}
    \vspace{0.5em}
    \caption{Protocol of fuzzy PSI for $L_p$ distance under receiver-sided unique block assumption.}
    \label{Prot: receiver 4delta Lp}
\end{figure*}

\begin{figure*}[t]
    \begin{nfprot}{\ensuremath{\Pi_\mathsf{FPSI\text{-}Px}^{L_\infty}}\xspace}
        \noindent \textbf{Parameters:}  Distance threshold \( \delta \). Bit length $\mu = \log\delta+2$ and $\ell = \lambda+\log(md\mu) $

        \noindent \textbf{Input:} \SSS inputs \( Q=\{\vecq_j\}_{j\in [m]} \in \mathbb{U}^{d\times m} \) and \RRR inputs \( W=\{\vecw_i\}_{i\in [n]} \in \mathbb{U}^{d\times n} \).

        \noindent \textbf{Protocol:}
        \begin{enumerate}

            \item For $i\in[n]$, \RRR computes $\{\CCC_{i,z}^\RRR\}_{z\in[2^d]} = \mathsf{neigh}_{2\delta}(\vecw_i)$. \SSS shuffles $Q$ and computes $ \CCC^{\SSS}_{j} = \mathsf{cell}_{2\delta}(\vecq_j)$ for $j\in [m]$.

            \item \RRR sets $\mathsf{List} = \{(\CCC^{\RRR}_{i,z}\Vert k \Vert \tilde{w}_{i,k,h},0^{\ell})\}_{i\in[n],z\in[2^d],k\in[d],h\in [\mu]}$ where $\{\tilde{w}_{i,k,h}\}_{h\in[\mu]} = \mathsf{PxTrie}(w_{i,k}-\delta,w_{i,k}+\delta)$
            \item \SSS computes $\{\CCC^{\SSS}_j \Vert k \Vert \tilde{q}_{j,k,h}\}_{j\in[m],k\in[d],h\in[\mu]}$ where $\{\tilde{q}_{j,k,h}\}_{h\in[\mu]} = \mathsf{PxPath}(q_{j,k},\delta)$
            \item \RRR and \SSS invoke functionality \Func[so\text{-}OPPRF] where \RRR inputs $\mathsf{List}$ and \SSS inputs $\{\CCC^{\SSS}_j \Vert k \Vert \tilde{q}_{j,k,h}\}_{j\in[m],k\in[d],h\in[\mu]}$. \RRR receives $\{e^\RRR_{j,k,h}\}_{j\in[m],k\in[d],h\in[\mu]}$ and \SSS receives $\{e^\SSS_{j,k,h}\}_{j\in[m],k\in[d],h\in[\mu]}$.

            \item For $j\in[m],k\in[d]$, \RRR and \SSS invoke functionality $\FFF^{\mathsf{Eq},\mu}_{\mathsf{ConSel}}$ where \RRR inputs $\{e^\RRR_{j,k,h}\}_{h\in[\mu]}$ and \SSS inputs $\{-e^\SSS_{j,k,h}\}_{h\in[\mu]}$. \RRR receives $r^\RRR_{j,k}$ and \SSS receives $r^\SSS_{j,k}$.
            
            \item For $j\in[m]$, \RRR computes $r^\RRR_j={\sum}_{k\in[d]}r^\RRR_{j,k}$. \SSS computes $r^\SSS_j={\sum}_{k\in[d]}r^\SSS_{j,k}$
            \item For $j\in[m]$, \RRR and \SSS invoke functionality \Func[Eq] where \RRR inputs $r^\RRR_j$ and \SSS inputs $-r^\SSS_j$. \RRR receives $b_j=\mathbf{1}\{r^\RRR_j=-r^\SSS_j\}$.
            \item For $j\in[m]$, \RRR and \SSS invoke functionality \Func[OT] where \RRR inputs $b_j$ and \SSS inputs $(\perp,\vecq_j)$. \RRR receives ${\vecu_j}$.
            \item \RRR outputs $\{\vecu_j\; \vert \; b_j=1 \;\text{for} \; j\in[m] \}$.
        \end{enumerate}

    \end{nfprot}
    \vspace{0.5em}
    \caption{Protocol of prefix-optimized fuzzy PSI for $L_\infty$ distance under receiver-sided unique block assumption.}
    \label{Prot: receiver 4delta Linf prefix}
\end{figure*}

\begin{figure*}[t]
    \begin{nfprot}{\ensuremath{\Pi_\mathsf{FPSI\text{-}Px}^{L_p}}\xspace}
        \noindent \textbf{Parameters:}  Distance threshold \( \delta \). Bit length $\ell = \lambda+\log m+p\log \delta$. 

        \noindent \textbf{Input:} \SSS inputs \( Q=\{\vecq_j\}_{j\in [m]} \in \mathbb{U}^{d\times m} \) and \RRR inputs \( W=\{\vecw_i\}_{i\in [n]} \in \mathbb{U}^{d\times n} \).

        \noindent \textbf{Protocol:}
        \begin{enumerate}

            \item For $i\in[n]$, \RRR computes $\{\CCC_{i,z}^\RRR\}_{z\in[2^d]} = \mathsf{neigh}_{2\delta}(\vecw_i)$. \SSS shuffles $Q$ and computes $ \CCC^{\SSS}_{j} = \mathsf{cell}_{2\delta}(\vecq_j)$ for $j\in [m]$.
            
            \item \RRR sets $\mathsf{List} = \{(\CCC^{\RRR}_{i,z} \Vert k \Vert \sigma \Vert \tilde{w}_{i,k,h,\sigma},0^\ell\Vert\abs{w^*-w_{i,k}} )\}_{i\in[n],z\in[2^d],k\in[d],h\in [\mu],\sigma\in[0,1]}$ where $\{\tilde{w}_{i,k,h,0}\}_{h\in[\mu]} = \mathsf{PxTrie}(w_{i,k}-\delta,w_{i,k})$ and $\{\tilde{w}_{i,k,h,1}\}_{h\in[\mu]} = \mathsf{PxTrie}(w_{i,k}+1,w_{i,k}+\delta)$ and $w^*=\mathsf{UpBound}(\tilde{w}_{i,k,h,\sigma})$ if $\sigma=0$ otherwise $w^*=\mathsf{LowBound}(\tilde{w}_{i,k,h,\sigma})$.
            \item \SSS computes $\{\CCC^{\SSS}_j \Vert k \Vert \sigma \Vert \tilde{q}_{j,k,h}\}_{j\in[m],k\in[d],h\in[\mu],\sigma\in[0,1]}$ where $\{\tilde{q}_{j,k,h}\}_{h\in[\mu]} = \mathsf{PxPath}(q_{j,k},\delta/2)$.
            \item \RRR and \SSS invoke functionality \Func[so\text{-}OPPRF] where \RRR inputs $\mathsf{List}$ and \SSS inputs $\{\CCC^{\SSS}_j \Vert k\Vert \sigma \Vert \tilde{q}_{j,k,h}\}_{j\in[m],k\in[d],h\in[\mu],\sigma\in[0,1]}$. \RRR receives $\{e^\RRR_{j,k,h,\sigma}\Vert r^\RRR_{j,k,h,\sigma}\}_{j\in [m],k\in [d],h\in [\mu],\sigma \in [0,1]}$ and \SSS receives $\{e^\SSS_{j,k,h,\sigma}\Vert r^\SSS_{j,k,h,\sigma}\}_{j\in [m],k\in [d],h\in [\mu],\sigma \in [0,1]}$.

            \item For $j\in [m],k\in [d],h\in [\mu],\sigma \in [0,1]$, \SSS and \RRR invoke functionality {\Func[B2A], where \RRR inputs $r^{\RRR}_{j,k,h,\sigma}$ and \SSS inputs $r^{\SSS}_{j,k,h,\sigma}$.  \RRR receives $s^{\RRR}_{j,k,h,\sigma}$} and \SSS receives $s^{\SSS}_{j,k,h,\sigma}$.

            \item For $j\in [m],k\in [d],h\in[\mu],\sigma \in [0,1]$, \SSS and \RRR invoke $\Pi^p_{\mathsf{getDist}}$, where \RRR inputs $s^{\RRR}_{j,k,h,\sigma}$ and \SSS inputs $( s^{\SSS}_{j,k,h,\sigma},q_{j,k}, \tilde{q}_{j,k,h},\sigma)$.
            \RRR receives $d^\RRR_{j,k,h,\sigma}$ and \SSS receives $d^\SSS_{j,k,h,\sigma}$.

            \item For $j\in [m],k\in [d]$, \RRR and \SSS invoke functionality $\FFF^{\mathsf{Eq},2\cdot\mu}_{\mathsf{ExtConSel}}$ where \RRR inputs $\{(e^\RRR_{j,k,h,\sigma},d^\RRR_{j,k,h,\sigma})\}_{h\in[\mu],\sigma\in[0,1]}$ and \SSS inputs $\{(-e^\SSS_{j,k,h,\sigma},d^\SSS_{j,k,h,\sigma})\}_{h\in[\mu],\sigma\in[0,1]}$. \RRR receives $r^\RRR_{j,k}$ and \SSS receives $r^\SSS_{j,k}$
            \item For $j\in[m]$, \RRR computes $r^\RRR_j={\sum}_{k\in[d]}r^\RRR_{j,k}$ and \SSS computes $r^\SSS_j={\sum}_{k\in[d]}r^\SSS_{j,k}$.
            
            \item For \( j \in [m] \), \SSS and \RRR invoke functionality { $\FFF^{\delta^p}_{\mathsf{Interval}}$, where \RRR inputs $r^{\RRR}_{j}$ and \SSS inputs $r^{\SSS}_{j}$. \RRR receives $b_j := \mathbf{1}\{ r^{\SSS}_{j}+r^{\RRR}_{j}\le \delta^p\}$}.
            
            \item For \( j \in [m] \), \RRR and \SSS invoke functionality \Func[OT], where \RRR inputs $b_j$ and \SSS inputs $(\perp ,\vecq_j)$. \RRR receives OT outputs $\vecu_j$.
            
            \item \RRR outputs $\{ \vecu_j \;\vert \; b_j = 1\;\text{for}\; j\in [m]\}$.
        \end{enumerate}

    \end{nfprot}
    \vspace{0.5em}
    \caption{Protocol of prefix-optimized fuzzy PSI for $L_p$ distance under receiver-sided unique block assumption.}
    \label{Prot: receiver 4delta Lp prefix}
\end{figure*}

\section{Additional Experimental Results}
\label{appendix: results Lp 2delta}
\label{appendix: results sender 2delta}
We present the results of our protocols for $L_1$ and $L_2$ distances under the receiver-sided unique cell assumption in Table~\ref{Tab: 2delta Lp}. As no existing work supports $L_p$ distances under this assumption, we omit a direct comparison. 

We present the results of our protocols for $L_\infty$, $L_1$, and $L_2$ distances under the sender-sided unique cell assumption in Table~\ref{Tab: 2delta sender}. As no existing work operates under this assumption, we omit a direct comparison. 

\newpage

\begin{table*}[t]
\caption{The communication (MB) and running time (s) of our protocols for $L_1$ and $L_2$ distances under the receiver-sided unique cell assumption.}
\label{Tab: 2delta Lp}
\begin{tabular}{ccccccccccccc}
\hline
\multirow{2}{*}{\begin{tabular}[c]{@{}c@{}}Size\\ $m\!=\!n$\end{tabular}} & \multirow{2}{*}{\begin{tabular}[c]{@{}c@{}}Dim.\\ $d$\end{tabular}} & \multirow{2}{*}{Metric} & \multicolumn{2}{c}{$\delta=32$} & \multicolumn{2}{c}{$\delta=64$} & \multicolumn{2}{c}{$\delta=128$} & \multicolumn{2}{c}{$\delta=256$} & \multicolumn{2}{c}{$\delta=512$} \\ \cline{4-13} 
                                                                          &                                                                     &                           & Comm           & Time           & Comm           & Time           & Comm            & Time           & Comm            & Time           & Comm            & Time           \\ \cline{1-13} 
\multirow{6}{*}{$2^{8}$}  & \multirow{2}{*}{2} & $L_1$
  & {3.52}  & {0.22}
  & {4.23}  & {0.30}
  & {5.61}  & {0.37}
  & {8.36}  & {0.41}
  & {13.83} & {0.47} \\
                          &                    & $L_2$
  & 3.66  & 0.29
  & 4.41  & 0.30
  & 5.82  & {0.39}
  & 8.93  & 0.42
  & 14.39 & 0.52 \\ \cline{2-13}
                          & \multirow{2}{*}{4} & $L_1$
  & {14.23} & 0.59
  & {15.63} & 0.60
  & {18.38} & {0.72}
  & {23.95} & {0.91}
  & {34.95} & {1.00} \\
                          &                    & $L_2$
  & 14.67 & {0.51}
  & 16.53 & {0.59}
  & 19.31 & 0.74
  & 24.99 & 0.83
  & 36.11 & 1.15 \\ \cline{2-13}
                          & \multirow{2}{*}{6} & $L_1$
  & {69.89}  & {2.17}
  & {72.42}  & {2.34}
  & {77.00}  & 2.38
  & {85.65}  & {2.47}
  & {102.52} & {2.86} \\
                          &                    & $L_2$
  & 72.37  & 2.16
  & 75.41  & 2.35
  & 80.47  & {2.54}
  & 89.63  & 2.65
  & 106.99 & 3.34 \\ \cline{1-13}

\multirow{6}{*}{$2^{12}$} & \multirow{2}{*}{2} & $L_1$
  & {36.67}  & {1.20}
  & {48.04}  & {1.32}
  & {70.31}  & {1.81}
  & {114.37} & {2.50}
  & {202.04} & {3.89} \\
                           &                    & $L_2$
  & 39.15  & {1.27}
  & 51.03  & 1.37
  & 73.78  & 1.82
  & 118.35 & 2.75
  & 206.52 & 4.67 \\ \cline{2-13}
                           & \multirow{2}{*}{4} & $L_1$
  & {204.68} & 6.00
  & {228.42} & {6.21}
  & {273.96} & {6.71}
  & {363.14} & {9.38}
  & {539.56} & {11.75} \\
                           &                    & $L_2$
  & 214.59 & {5.81}
  & 240.33 & 6.37
  & 287.85 & 7.30
  & 380.10 & 9.62
  & 558.65 & 14.17 \\ \cline{2-13}
                           & \multirow{2}{*}{6} & $L_1$
  & {1098.30} & {32.16}
  & {1139.17} & {32.28}
  & {1212.78} & {33.11}
  & {1352.03} & {34.02}
  & {1622.35} & {39.18} \\
                           &                    & $L_2$
  & 1139.32 & 31.48
  & 1188.40 & 35.79
  & 1270.15 & 36.54
  & 1417.61 & 38.15
  & 1696.18 & 51.41 \\ \cline{1-13}

\multirow{6}{*}{$2^{16}$} & \multirow{2}{*}{2} & $L_1$
  & {566.83}  & {13.85}
  & {749.43}  & {16.79}
  & {1106.61} & {23.28}
  & {1813.41} & {35.78}
  & {3219.26} & {62.07} \\
                           &                    & $L_2$
  & 607.85  & 15.78
  & 798.66  & 19.80
  & 1163.98 & 27.17
  & 1878.99 & 46.99
  & 3293.09 & 81.30 \\ \cline{2-13}
                           & \multirow{2}{*}{4} & $L_1$
  & {3259.14} & 96.95
  & {3640.86} & 99.99
  & {4372.04} & {111.70}
  & {5802.69} & {138.92}
  & {8632.00} & {195.31} \\
                           &                    & $L_2$
  & 3423.32 & {101.09}
  & 3837.89 & {104.03}
  & 4601.68 & 122.70
  & 6082.17 & 158.80
  & 8946.50 & 251.82 \\ \cline{2-13}
                           & \multirow{2}{*}{6} & $L_1$
  & {17575.29} & 540.47
  & {18234.60} & 554.85
  & {19418.77} & {589.16}
  & {21655.27} & {617.48}
  & {25993.49} & {704.26} \\
                           &                    & $L_2$
  & 18252.49 & {538.05}
  & 19047.23 & {566.22}
  & 20365.92 & 591.20
  & 22737.88 & 668.20
  & 27212.29 & 868.49 \\ \cline{1-13}
\end{tabular}
\end{table*}

\begin{table*}[t]
\caption{The communication (MB) and running time (s) of our protocols under the sender-sided unique cell assumption.}
\label{Tab: 2delta sender}
\begin{tabular}{ccccccccccccc}
\hline
\multirow{2}{*}{\begin{tabular}[c]{@{}c@{}}Size\\ $m\!=\!n$\end{tabular}} & \multirow{2}{*}{\begin{tabular}[c]{@{}c@{}}Dim.\\ $d$\end{tabular}} & \multirow{2}{*}{Metric} & \multicolumn{2}{c}{$\delta=32$} & \multicolumn{2}{c}{$\delta=64$} & \multicolumn{2}{c}{$\delta=128$} & \multicolumn{2}{c}{$\delta=256$} & \multicolumn{2}{c}{$\delta=512$} \\ \cline{4-13} 
                                                                          &                                                                     &                         & Comm           & Time           & Comm           & Time           & Comm            & Time           & Comm            & Time           & Comm            & Time           \\ \hline
\multirow{9}{*}{$2^{8}$}                                                  & \multirow{3}{*}{2}                                                  & $L_\infty$              & 2.34           & 0.31           & 3.02           & 0.32           & 4.38            & 0.34           & 7.10            & 0.42           & 12.53           & 0.49           \\
                                                                          &                                                                     & $L_1$                   & 3.37           & 0.41           & 4.05           & 0.47           & 5.41            & 0.48           & 8.13            & 0.53           & 13.57           & 0.70           \\
                                                                          &                                                                     & $L_2$                   & 3.37           & 0.44           & 4.05           & 0.48           & 5.41            & 0.49           & 8.13            & 0.59           & 13.57           & 0.65           \\ \cline{2-13} 
                                                                          & \multirow{3}{*}{4}                                                  & $L_\infty$              & 8.54           & 0.49           & 9.89           & 0.60           & 12.61           & 0.63           & 18.05           & 0.66           & 28.93           & 0.82           \\
                                                                          &                                                                     & $L_1$                   & 14.11          & 0.86           & 15.47          & 0.87           & 18.19           & 1.00           & 23.63           & 1.05           & 34.51           & 1.20           \\
                                                                          &                                                                     & $L_2$                   & 14.11          & 0.83           & 15.47          & 0.84           & 18.19           & 0.87           & 23.63           & 1.06           & 34.51           & 1.19           \\ \cline{2-13} 
                                                                          & \multirow{3}{*}{6}                                                  & $L_\infty$              & 40.03          & 1.74           & 42.07          & 1.74           & 46.15           & 1.98           & 54.30           & 1.96           & 70.67           & 2.17           \\
                                                                          &                                                                     & $L_1$                   & 70.33          & 2.77           & 72.37          & 2.81           & 76.45           & 3.07           & 84.60           & 3.27           & 100.97           & 3.36           \\
                                                                          &                                                                     & $L_2$                   & 70.33          & 2.86           & 72.37          & 2.89           & 76.45           & 2.92           & 84.60           & 3.03           & 100.97           & 3.04           \\ \hline
\multirow{9}{*}{$2^{12}$}                                                 & \multirow{3}{*}{2}                                                  & $L_\infty$              & 25.33          & 1.13           & 36.21          & 1.34           & 57.98           & 1.71           & 101.55          & 2.45           & 188.72          & 3.71           \\
                                                                          &                                                                     & $L_1$                   & 37.12          & 1.74           & 48.01          & 2.00           & 69.78           & 2.37           & 113.34          & 2.99           & 200.51          & 4.54           \\
                                                                          &                                                                     & $L_2$                   & 37.12          & 1.88           & 48.01          & 2.07           & 69.78           & 2.34           & 113.34          & 3.15           & 200.51          & 4.59           \\ \cline{2-13} 
                                                                          & \multirow{3}{*}{4}                                                  & $L_\infty$              & 124.07         & 4.64           & 145.84         & 5.01           & 189.41          & 5.37           & 276.58          & 7.47           & 451.02          & 10.20          \\
                                                                          &                                                                     & $L_1$                   & 206.53         & 8.31           & 228.30         & 8.32           & 271.86          & 9.61           & 359.04          & 10.45          & 533.47          & 13.39          \\
                                                                          &                                                                     & $L_2$                   & 206.53         & 7.75           & 228.30         & 8.24           & 271.86          & 9.41           & 359.04          & 10.70          & 533.47          & 13.18          \\ \cline{2-13} 
                                                                          & \multirow{3}{*}{6}                                                  & $L_\infty$              & 628.55         & 23.60          & 661.26         & 24.16          & 726.72          & 25.26          & 857.69          & 26.89          & 1119.79         & 30.60          \\
                                                                          &                                                                     & $L_1$                   & 1103.94        & 40.78          & 1136.65        & 40.92          & 1202.10         & 41.55          & 1333.08         & 42.68          & 1595.18         & 46.24          \\
                                                                          &                                                                     & $L_2$                   & 1103.94        & 41.75          & 1136.65        & 39.70          & 1202.10         & 42.01          & 1333.08         & 44.09          & 1595.18         & 46.32          \\ \hline
\multirow{9}{*}{$2^{16}$}                                                 & \multirow{3}{*}{2}                                                  & $L_\infty$              & 393.99         & 13.67          & 568.43         & 17.38          & 917.44          & 24.18          & 1615.98         & 42.35          & 3013.60         & 73.70          \\
                                                                          &                                                                     & $L_1$                   & 573.36         & 24.34          & 747.80         & 27.94          & 1096.82         & 32.89          & 1795.35         & 48.12          & 3192.98         & 73.14          \\
                                                                          &                                                                     & $L_2$                   & 573.36         & 26.34          & 747.80         & 27.75          & 1096.82         & 32.99          & 1795.35         & 45.87          & 3192.98         & 72.37          \\ \cline{2-13} 
                                                                          & \multirow{3}{*}{4}                                                  & $L_\infty$              & 1976.03        & 77.42          & 2325.08        & 77.85          & 3023.61         & 93.59         & 4421.18         & 117.71         & 7217.55         & 186.85         \\
                                                                          &                                                                     & $L_1$                   & 3281.97        & 141.79         & 3631.02        & 142.41         & 4329.55         & 157.54         & 5727.12         & 186.23         & 8523.50         & 228.06         \\
                                                                          &                                                                     & $L_2$                   & 3281.97        & 125.84         & 3631.02        & 132.50         & 4329.55         & 157.35         & 5727.12         & 174.60         & 8523.50         & 243.04         \\ \cline{2-13} 
                                                                          & \multirow{3}{*}{6}                                                  & $L_\infty$              & 10056.07       & 380.90         & 10580.55       & 386.97         & 11630.04        & 408.73         & 13730.14        & 442.44         & 17932.58        & 518.47         \\
                                                                          &                                                                     & $L_1$                   & 17638.65       & 732.56         & 18163.14       & 731.06         & 19212.62        & 762.97         & 21312.72        & 768.98         & 25515.16        & 864.11        \\
                                                                          &                                                                     & $L_2$                   & 17638.65       & 677.65         & 18163.14       & 695.06         & 19212.62        & 714.14         & 21312.72        & 807.86         & 25515.16        & 864.07        \\ \hline
\end{tabular}
\end{table*}